\newcommand{\originalgrumbler}[2]{\begin{quote}\textcolor{blue}{\sl{\bf #1 says:} #2}\end{quote}}
\newcommand{\grumbler}[2]{\originalgrumbler{#1}{#2}}
\newcommand{\mike}[1]{\grumbler{Mike}{#1}}
\newcommand{\jake}[1]{\grumbler{Jake}{#1}}
\definecolor{darkgreen}{rgb}{0,0.4,0}
\newcommand{\sfsmaller}{}
\newcommand{\bench}[1]{\textsf{\sfsmaller#1}}
\newcommand{\code}[1]{\textsf{\sfsmaller#1}}
\newcommand{\mc}[3]{\multicolumn{#1}{#2}{#3}}
\newcommand{\eg}{e.g.\xspace}
\newcommand{\ie}{i.e.\xspace}
\newcommand{\cf}{cf.\xspace}
\newcommand{\etal}{et al.\xspace}
\newcommand{\thr}[1]{\textsf{\sfsmaller#1}}
\renewcommand\sfsmaller{}
\newcolumntype{H}{>{\setbox0=\hbox\bgroup}c<{\egroup}@{}} 
\newcolumntype{Z}{>{\setbox0=\hbox\bgroup}c<{\egroup}@{\hspace*{-\tabcolsep}}} 
\newtheorem*{definition*}{Definition}
\algnewcommand{\LineComment}[1]{\State \(\triangleright\) #1}
\algnewcommand{\LineCommentx}[1]{\Statex \hskip\ALG@thistlm \(\triangleright\) #1}
\algnewcommand{\LineCommentxx}[1]{\Statex \hskip\ALG@tlm \(\triangleright\) #1}
\algrenewcommand\ALG@beginalgorithmic{\small}
\algnewcommand{\lForEach}[2] {\State \algorithmicforeach\ #1 \algorithmicdo\ #2} 
\algnewcommand{\lIf}[2] {\State \algorithmicif\ #1 \algorithmicthen\ #2} 
\algnewcommand{\lElse}[1]{\State \algorithmicelse\ #1} 
\algnewcommand{\lElsIf}[2] {\State \algorithmicelse\ \algorithmicif\ #1 \algorithmicthen\ #2} 
\algnewcommand{\lForAll}[2]{\State \algorithmicforall\ #1 \algorithmicdo\ #2} 
\newcommand\lines[2]{\range{line}{lines}{#1}{#2}}
\newcommand\Lines[2]{\range{Line}{Lines}{#1}{#2}}
\newcommand\range[4]{%
  \ifthenelse{\equal{#3}{#4}}
    {#1~#3}
    {#2~#3--#4}%
}
\newcommand{\tikzmark}[1]{\tikz[remember picture, baseline] \node[inner sep=0pt, outer sep=0pt] (#1){};}
\newcommand{\link}[2]{
\begin{tikzpicture}[remember picture, overlay, >=stealth, shift={(0,0)}]
	\draw[arrows=->] ([xshift=#1pt,yshift=#2pt]#1.center) -- ([xshift=#1pt,yshift=#2pt]#2.center);
\end{tikzpicture}}
\newcommand{\textlink}[3]{%
\begin{tikzpicture}[remember picture, overlay, >=stealth, shift={(0,0)}] 
	\draw[arrows=->] (#1) to node[sloped,anchor=center,above] {{\smaller #3}} (#2);
\end{tikzpicture}}
\newcommand{\undertextlink}[3]{%
\begin{tikzpicture}[remember picture, overlay, >=stealth, shift={(0,0)}] 
	\draw[arrows=->] (#1) to node[sloped,anchor=center,below] {{\smaller #3}} (#2);
\end{tikzpicture}}
\newcommand{\hbFull}{happens-before\xspace}
\newcommand{\HbFull}{Happens-before\xspace}
\newcommand{\HBFull}{Happens-Before\xspace}
\newcommand{\HB}{HB\xspace}
\newcommand{\cpFull}{causally-precedes\xspace}
\newcommand{\CpFull}{Causally-precedes\xspace}
\newcommand{\CPFull}{Causally-Precedes\xspace}
\newcommand{\CP}{CP\xspace}
\newcommand{\POFull}{Program Order\xspace}
\newcommand{\PoFull}{Program-order\xspace}
\newcommand{\PO}{PO\xspace}
\newcommand{\PCP}{CCP\xspace}
\newcommand{\CCPFull}{Conditionally Causally-Precedes\xspace}
\newcommand{\CCP}{CCP\xspace}
\newcommand{\GoldCP}{Raptor\xspace} 
\newcommand{\goldCP}{\GoldCP}
\newcommand{\raptor}{\goldCP}
\newcommand{\Raptor}{\GoldCP}
\newcommand{\goldHB}{Raptor-HB\xspace}
\newcommand{\FTFull}{FastTrack\xspace}
\newcommand\Add[2]{\ensuremath{#2^+ \gets #2^+ \cup \{ #1 \}}}
\newcommand{\CPThreadPlain}[1]{\ensuremath{#1}}
\newcommand{\CPThread}[1]{\CPThreadPlain{\code{#1}}}
\newcommand{\HBThread}[1]{\ensuremath{\code{#1}}}
\newcommand{\HBThreadPlain}[1]{\ensuremath{#1}}
\newcommand{\CPLock}[1]{\ensuremath{\code{#1}}}
\newcommand{\HBLock}[2]{\ensuremath{\code{#1}^{#2}}}
\newcommand{\LSLock}[2]{\ensuremath{\code{#1}^{#2}_{*}}}
\newcommand{\PCPLockPlain}[3]{\ensuremath{#1\!:\!\HBLock{#2}{#3}}}
\newcommand{\CCPLockPlain}[3]{\PCPLockPlain{#1}{#2}{#3}}
\newcommand{\PCPLock}[3]{\PCPLockPlain{\code{#1}}{#2}{#3}}
\newcommand{\CCPLock}[3]{\PCPLock{#1}{#2}{#3}}
\newcommand{\PCPThreadPlain}[3]{\ensuremath{#1\!:\!\HBLock{#2}{#3}}}
\newcommand{\CCPThreadPlain}[3]{\PCPThreadPlain{#1}{#2}{#3}} 
\newcommand{\PCPThread}[3]{\PCPThreadPlain{\code{#1}}{#2}{#3}}
\newcommand{\CCPThread}[3]{\PCPThread{#1}{#2}{#3}}
\newcommand{\OwnerElement}[2]{\ensuremath{\code{#1}^{#2}}}
\newcommand{\OwnerElementRead}[3]{\ensuremath{\code{#1}^{#2}_{\code{#3}}}}
\newcommand{\Set}{Set\xspace}
\newcommand{\set}{set\xspace}
\newcommand{\Sets}{Sets\xspace}
\newcommand{\sets}{sets\xspace}
\newcommand{\POLockSetPlain}[2]{\ensuremath{\mathit{PO}({#1}^{#2})}}
\newcommand{\POLockSet}[2]{\ensuremath{\mathit{PO}(\code{#1}^{#2})}}
\newcommand{\POLockSetRead}[3]{\ensuremath{\mathit{PO}(\code{#1}^{#2}_\thr{#3})}}
\newcommand{\HBLockSetPlain}[2]{\ensuremath{\mathit{HB}({#1}^{#2})}}
\newcommand{\HBLockSet}[2]{\HBLockSetPlain{\code{#1}}{#2}}
\newcommand{\HBLockSetRead}[3]{\ensuremath{\mathit{HB}(\code{#1}^{#2}_\thr{#3})}}
\newcommand{\CPLockSetPlain}[2]{\ensuremath{\mathit{CP}({#1}^{#2})}}
\newcommand{\CPLockSet}[2]{\ensuremath{\CPLockSetPlain{\code{#1}}{#2}}\xspace}
\newcommand{\CPLockSetRead}[3]{\ensuremath{\mathit{CP}(\code{#1}^{#2}_\thr{#3})}}
\newcommand{\PCPLockSetPlain}[2]{\ensuremath{\mathit{CCP}({#1}^{#2})}}
\newcommand{\CCPLockSetPlain}[2]{\PCPLockSetPlain{#1}{#2}}
\newcommand{\PCPLockSet}[2]{\PCPLockSetPlain{\code{#1}}{#2}}
\newcommand{\CCPLockSet}[2]{\PCPLockSet{#1}{#2}}
\newcommand{\PCPLockSetRead}[3]{\ensuremath{\mathit{CCP}(\code{#1}^{#2}_\thr{#3})}}
\newcommand{\CCPLockSetRead}[3]{\PCPLockSetRead{#1}{#2}{#3}}
\newcommand{\xxrightarrow}[1]{\raisebox{-3pt}{$\xrightarrow{#1}$\;}}
\newcommand\xiT[1]{\ensuremath{\xi_\thr{#1}}}
\newcommand{\tr}{\ensuremath{\mathit{tr}}\xspace}
\newcommand{\Relation}[1]{\ensuremath{\prec_\textsc{\tiny{#1}}}\xspace}
\newcommand{\NotRelation}[1]{\ensuremath{\not\prec_\textsc{\tiny{#1}}}\xspace}
\newcommand{\EqRelation}[1]{\ensuremath{\preccurlyeq_\textsc{\tiny{#1}}}\xspace}
\newcommand{\PORelation}{\Relation{\PO}}
\newcommand{\NotPORelation}{\NotRelation{\PO}}
\newcommand{\EqPORelation}{\EqRelation{\PO}}
\newcommand{\HBRelation}{\Relation{\HB}}
\newcommand{\NotHBRelation}{\NotRelation{\HB}}
\newcommand{\EqHBRelation}{\EqRelation{\HB}}
\newcommand{\CPRelation}{\Relation{\CP}}
\newcommand{\NotCPRelation}{\NotRelation{\CP}}
\newcommand{\POOrdered}[2]{\Ordered{#1}{\PORelation}{#2}}
\newcommand{\NotPOOrdered}[2]{\Ordered{#1}{\NotPORelation}{#2}}
\newcommand{\EqPOOrdered}[2]{\Ordered{#1}{\EqPORelation}{#2}}
\newcommand{\HBOrdered}[2]{\Ordered{#1}{\HBRelation}{#2}}
\newcommand{\NotHBOrdered}[2]{\Ordered{#1}{\NotHBRelation}{#2}}
\newcommand{\EqHBOrdered}[2]{\Ordered{#1}{\EqHBRelation}{#2}}
\newcommand{\CPOrdered}[2]{\Ordered{#1}{\CPRelation}{#2}}
\newcommand{\NotCPOrdered}[2]{\Ordered{#1}{\NotCPRelation}{#2}}
\newcommand{\Ordered}[3]{\ensuremath{#1 #2 #3}}
\newcommand\getThread[1]{\ensuremath{\mathit{thr}(#1)}}
\newcommand\EqTotalOrder{\ensuremath{\le_\mathit{tr}}\xspace}
\newcommand\totalOrder{\ensuremath{<_\mathit{tr}}\xspace}
\newcommand{\conflicts}[2]{\ensuremath{#1 \asymp #2}}
\newcommand{\Read}[3]{\ensuremath{\code{rd(#1)}^{#2}_\thr{#3}}} 
\newcommand{\Write}[2]{\ensuremath{\code{wr(#1)}^{#2}}}
\newcommand{\Acquire}[2]{\ensuremath{\code{acq(#1)}^{#2}}} 
\newcommand{\Release}[2]{\ensuremath{\code{rel(#1)}^{#2}}}
\newcommand{\race}{\CP-race\xspace}
\newcommand{\races}{\race{}s\xspace}
\newcommand{\RuleA}{Rule~(a)\xspace}
\newcommand{\RuleB}{Rule~(b)\xspace}
\newcommand{\RuleC}{Rule~(c)\xspace}
\newcommand{\Rules}{Rules\xspace}
\newcommand{\erho}{\ensuremath{e_{\rho}}\xspace}
\newcommand{\exi}{\ensuremath{e_{\xi}}\xspace} 
\newcommand{\exiT}[1]{\ensuremath{e_{\xiT{#1}}}\xspace}
\newcommand\rhoprime{\ensuremath{\sigma}}
\newcommand\heldBy[1]{\ensuremath{\mathit{heldBy}(\thr{#1})}}
\newcommand\CPdistance{\CP-distance\xspace}
\newcommand\CPDistance{\CP-Distance\xspace}
\newcommand\dist[3]{\ensuremath{d(\OwnerElement{#1}{#3}\!\leadsto\!\OwnerElement{#1}{#2})}}
\renewcommand{\originalgrumbler}[2]{\begin{quote}\textcolor{blue}{{\bf #1 says:} #2}\end{quote}}
\newcommand\notes[1]{\begin{quote}\textcolor{darkgreen}{\textbackslash \textbf{notes\{}} #1 \textcolor{darkgreen}{\}}\end{quote}}
\newcommand\later[1]{\begin{quote}\textcolor{darkgreen}{\textbackslash \textbf{later\{}} #1 \textcolor{darkgreen}{\}}\end{quote}}
\renewcommand{\notes}[1]{}
\renewcommand{\later}[1]{}
\begin{document}

\begin{CCSXML}
<ccs2012>
<concept>
<concept_id>10011007.10010940.10010992.10010998.10011001</concept_id>
<concept_desc>Software and its engineering~Dynamic analysis</concept_desc>
<concept_significance>500</concept_significance>
</concept>
<concept>
<concept_id>10011007.10011074.10011099.10011102.10011103</concept_id>
<concept_desc>Software and its engineering~Software testing and debugging</concept_desc>
<concept_significance>300</concept_significance>
</concept>
</ccs2012>
\end{CCSXML}

\ccsdesc[500]{Software and its engineering~Dynamic analysis}
\ccsdesc[300]{Software and its engineering~Software testing and debugging}

\title[{Online Set-Based Dynamic Analysis for Sound Predictive Race Detection}]
{Online Set-Based Dynamic Analysis for \\ Sound Predictive Race Detection}


%
\author{Jake Roemer}
\affiliation{
	\institution{Ohio State University}
}
\email{roemer.37@osu.edu}
\author{Michael D.\ Bond}
\affiliation{
	\institution{Ohio State University}
}
\email{mikebond@cse.ohio-state.edu}

\begin{abstract}

Predictive data race detectors find data races that exist in executions
\emph{other than} the observed execution.
Smaragdakis \etal\ introduced the \emph{\cpFull} (\CP) relation 
and a polynomial-time analysis 
for \emph{sound} (no false races) predictive data race detection.
However, 
their analysis cannot scale beyond analyzing bounded windows of execution traces.

This work introduces a novel dynamic analysis called \emph{\raptor} 
that computes \CP soundly and completely.
\Raptor is inherently an online analysis 
that analyzes and finds all \CP-races of 
an execution trace in its entirety.
An evaluation of a prototype implementation of \raptor shows 
that it scales to program executions that the prior \CP analysis cannot handle, 
finding data races that the prior \CP analysis cannot find.

\end{abstract}

\maketitle

%

\section{Introduction}
\label{sec:introduction}

A shared-memory program has a \emph{data race} 
if an execution of the program can perform two memory accesses that are 
\emph{conflicting} and \emph{concurrent},
 meaning that the accesses are executed by different threads,
at least one access is a write,
and there are no interleaving program operations.
Data races often lead to atomicity, order, and sequential consistency violations 
that cause programs to crash, hang, or corrupt data~\cite{boehm-miscompile-hotpar-11,
jmm-broken,portend-toplas15,conc-bug-study-2008,portend-asplos12,benign-races-2007,prescient-memory,adversarial-memory,racefuzzer,relaxer,
therac-25,blackout-2003-tr,nasdaq-facebook}.
Modern shared-memory programming languages including C++ and Java 
provide undefined or ill-defined semantics
for executions containing data races~\cite{java-memory-model,c++-memory-model-2008,memory-models-cacm-2010,
you-dont-know-jack,data-races-are-pure-evil,out-of-thin-air-MSPC14,jmm-broken}.

\emph{\HbFull} (\HB) analysis detects data races by tracking the \emph{\hbFull} relation~\cite{happens-before,adve-weak-racedet-1991}
and reports conflicting, concurrent accesses unordered by \HB as data races~\cite{fasttrack,multirace,goldilocks-pldi-2007}.
However, the coverage of \HB analysis is inherently limited to data races 
that manifest in the current execution.
Consider both example executions in Figure~\ref{fig:SimpleExample}.
The \HB relation, which is
the union of program and synchronization order~\cite{happens-before},
orders the accesses to \code{x} and 
would \emph{not} detect a data race for either observed execution.
However, for Figure~\ref{fig:SimpleExample:has-CP-race},
we can know \emph{from the observed execution} that a data race exists
in a different interleaving of events of the program
(if Thread~2 acquires \code{m} first, the accesses to \code{x} can occur concurrently).
On the other hand, it is unclear if Figure~\ref{fig:SimpleExample:no-CP-race} has a data race,
since the execution of \Read{x}{1}{T2} may depend on the order of accesses to \code{y}
(\eg, suppose \Read{x}{1}{T2}'s execution depends on the value read by \Read{y}{1}{T2}).

\begin{figure}[t]
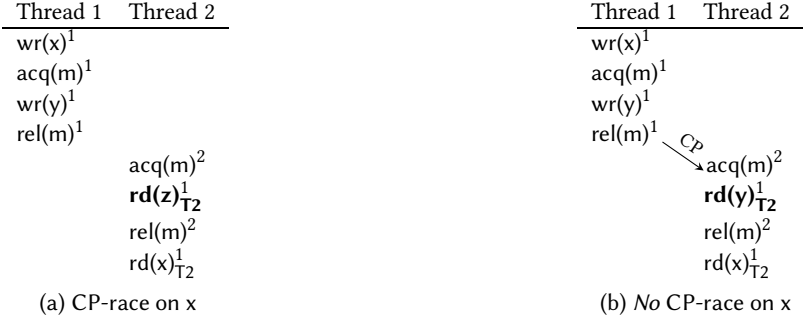

\small
\centering
\subfloat[CP-race on \code{x}]{
\begin{minipage}{0.4\textwidth}
\centering
\begin{tabular}{ll}
Thread 1 & Thread 2 \\\hline
\Write{x}{1} & \\
\Acquire{m}{1} & \\
\Write{y}{1} & \\
\Release{m}{1} & \\
& \Acquire{m}{2} \\
& \textbf{\Read{z}{1}{T2}} \\[.3em]
& \Release{m}{2} \\
& \Read{x}{1}{T2}
\end{tabular}
\end{minipage}
\label{fig:SimpleExample:has-CP-race}
}
\hspace{1.5cm}
\subfloat[\emph{No} \CP-race on \code{x}]{
\begin{minipage}{0.45\textwidth}
\centering
\begin{tabular}{ll}
Thread 1 & Thread 2 \\\hline
\Write{x}{1} & \\
\Acquire{m}{1} & \\
\Write{y}{1} & \\
\Release{m}{1}\tikzmark{1} \\
& \tikzmark{2}\Acquire{m}{2} \\
& \textbf{\Read{y}{1}{T2}} \\[.3em]
& \Release{m}{2} \\
& \Read{x}{1}{T2}
\end{tabular}
\textlink{1}{2}{\CP}
\label{fig:SimpleExample:no-CP-race}
\end{minipage}
}


\caption{\label{fig:SimpleExample} Two observed program executions (of potentially different programs) in which
top-to-bottom ordering represents the observed order of events;
column placement represents the executing thread;
\code{x}, \code{y}, and \code{z} are shared variables;
and \code{m} is a lock.
For each event, the superscript and subscript differentiate repeat operations;
only reads (\eg, \Read{y}{1}{T2}) have subscripts, which differentiate reads by different threads since the last write to the same variable.
The arrow represents a \CP edge established by \RuleA of the \CP definition (Definition~\ref{def:cp}).
Bold text shows the (sole) difference between the two executions.
In both figures, \HB orders conflicting accesses to \code{x}, 
but in Figure~\ref{fig:SimpleExample:has-CP-race}
\protect\NotCPOrdered{\Write{x}{1}}{\Read{x}{1}{T2}},
whereas in Figure~\ref{fig:SimpleExample:no-CP-race}
\CPOrdered{\Write{x}{1}}{\Read{x}{1}{T2}}.}
\end{figure}

\emph{Predictive analyses} detect data races that are possible in executions
\emph{other than} the observed execution.
Notably, Smaragdakis \etal introduce a predictive analysis that tracks
the \emph{\cpFull} (\CP) relation~\cite{causally-precedes},
a subset of \HB that conservatively orders conflicting accesses 
that cannot race in some other, unobserved execution.
A \CP-race exists between conflicting accesses \emph{not} ordered by \CP.\footnote{More precisely, two conflicting
accesses unordered by \CP imply either a data race or a deadlock in another execution~\cite{causally-precedes}.}
In Figure~\ref{fig:SimpleExample:has-CP-race}, 
\HBOrdered{\Write{x}{1}}{\Read{x}{1}{T2}},
but \NotCPOrdered{\Write{x}{1}}{\Read{x}{1}{T2}},
\ie, the execution has a \emph{\race} (two conflicting accesses unordered by \CP).
In contrast,
Figure~\ref{fig:SimpleExample:no-CP-race} has no \CP-race
(\CPOrdered{\Write{x}{1}}{\Read{x}{1}{T2}})
because \CP correctly orders the critical sections that \emph{may}
result in different behavior if executed in the reverse order,
\ie, \Read{y}{1}{T2} \emph{may} read a different value
and the data race on \code{x} \emph{might not} occur.

Smaragdakis \etal\ show how to compute \CP in polynomial time in the execution length.
Nonetheless, their analysis cannot scale to full executions,
and instead analyzes bounded execution windows of 500 consecutive events~\cite{causally-precedes},
missing \CP-races 
involving accesses that are ``far apart'' in the observed execution.
Their \CP analysis is inherently offline;
in contrast, an \emph{online} dynamic analysis would summarize the execution so far 
in the form of \emph{analysis state}, without needing to ``look back'' at the entire trace. 
Like Smaragdakis's \CP analysis,
most existing predictive analyses are offline, 
needing access to the entire execution trace, and cannot scale to full execution traces~\cite{causally-precedes, rvpredict-pldi-2014, jpredictor,
maximal-causal-models, rdit-oopsla-2016, said-nfm-2011, ipa, wcp} (Section~\ref{sec:related-works}).

Two recent approaches introduce online predictive analyses~\cite{wcp,vindicator}.
This article's contributions are concurrent with or precede each of these prior approaches.
In particular, Kini \etal's \emph{weak-causally-precedes} (WCP) submission to PLDI 2017~\cite{wcp}
is concurrent with this article's work,
as established by our November 2016 technical report~\cite{raptor}.
Furthermore, none of the related work shows how to compute \emph{\CP} online,
a challenging proposition~\cite{causally-precedes} (Section~\ref{sec:motivation}).

\paragraph{Our approach.}

This article introduces \emph{\raptor} (\underline{Ra}ce \underline{p}redic\underline{tor}),
a novel dynamic analysis that computes the \CP relation soundly and completely.
\Raptor is inherently an \emph{online} analysis because
it summarizes an execution's behavior
so far in the form of \emph{analysis state}
that captures the recursive nature of the \CP relation,
rather than needing to look at the entire execution so far.
We introduce analysis invariants 
and prove that \raptor soundly and completely tracks \CP
by maintaining the invariants after each step of a program's execution.

We have implemented \raptor as a dynamic analysis for Java programs.
Our unoptimized prototype implementation
can analyze executions of real programs with 
hundreds of thousands or millions of events within an hour or two.
In contrast, Smaragdakis \etal's analysis generally cannot scale beyond bounded windows 
of thousands of events~\cite{causally-precedes}.
As a result, \raptor detects \CP-races
that are too ``far apart'' for the offline \CP analysis to detect.

While concurrent work's WCP analysis~\cite{wcp} 
and later work's DC analysis~\cite{vindicator} are faster and
(as a result of using weaker relations than \CP) detect more races than
\raptor,
computing \CP online is a challenging problem that 
prior work has been unable to solve~\cite{wcp,causally-precedes} (Section~\ref{sec:motivation}).
Furthermore, \raptor provides the first set-based algorithm
for partial-order-based predictive analysis.
Though recent advances in predictive analysis have subsumed \raptor's race coverage and performance,
the alternative technique of a set-based approach provides unique avenues for future development.

\Raptor advances the state of the art by 
(1) being the first online analysis for computing \CP soundly and completely and
(2) demonstrably scaling to longer executions than the prior \CP analysis and 
finding real \CP-races that the prior \CP analysis cannot detect.

\section{Background and Motivation}
\label{sec:definitions}

This section defines the \cpFull (\CP) relation from prior work~\cite{causally-precedes}
and motivates the challenges of computing \CP online.
First, we introduce the execution model and notation used throughout the article.

\subsection{Execution Model}

An execution trace \tr consists of events
observed in a total order, denoted \EqTotalOrder (reflexive) or \totalOrder (irreflexive),
that represents
a linearization of a sequentially consistent (SC) execution~\cite{sequential-consistency}.\footnote{It is safe to assume
SC because language memory models provide SC up to the first data race~\cite{memory-models-cacm-2010}.}
Every event in \tr has an associated executing thread.

An event is one of \Write{x}{i}, \Read{x}{i}{T}, \Acquire{m}{i}, or \Release{m}{i},
where \code{x} is a variable, \code{m} is a lock,
and $i$ specifices the $i$th instance of the event,
\ie, \Write{x}{i} is the $i$th write to variable \code{x}.
\Read{x}{i}{T} is a read by thread \thr{T} to variable \code{x}
such that $\Write{x}{i} \totalOrder \Read{x}{i}{T} \totalOrder \Write{x}{i+1}$.

We assume that any observed execution trace is \emph{well formed},
meaning a thread only acquires a lock that is not held
and only releases a lock it holds,
and lock release order is last in, first out (\ie, critical sections are well-nested).


Let \getThread{e} be a helper function
that returns the thread identifier that executed event $e$.
Two access (read/write) events $e$ and $e'$ to the same variable
are \emph{conflicting} (denoted \conflicts{e}{e'})
if at least one event is a write and $\getThread{e} \neq \getThread{e'}$.

\subsection{Relations over an Execution Trace}

The following presentation
is based on prior work's presentation~\cite{causally-precedes}.

\emph{\PoFull} (\PO) is a partial order over events executed by the same thread.
Given a trace \tr,
\EqPORelation is the smallest relation such that
for any two events $e$ and $e'$,
\EqPOOrdered{e}{e'} if $e \EqTotalOrder e' \land \getThread{e} = \getThread{e'}$.


%

\begin{definition}[\HbFull]
The \emph{\hbFull} (\HB) relation is a partial order over events in an execution trace~\cite{happens-before}.
Given a trace \tr, 
\EqHBRelation is the smallest relation such that:

\begin{itemize}
  \item Two events are \HB ordered if they are \PO ordered.
  That is, \EqHBOrdered{e}{e'} if \EqPOOrdered{e}{e'}.
   
  \item Release and acquire operations on the same lock 
  (\ie, synchronization order) are \HB ordered.
  That is, \EqHBOrdered{\Release{m}{i}}{\Acquire{m}{j}} if
  $\Release{m}{i} \totalOrder \Acquire{m}{j}$ (which implies $i < j$).
  
  \item \HB is closed under composition with itself.
  That is, \EqHBOrdered{e}{e'} if $\exists e'' \mid \EqHBOrdered{e}{\EqHBOrdered{e''}{e'}}$.
\end{itemize}
\label{def:hb}
\end{definition}

\noindent
Throughout the article, we generally use irreflexive variants of \EqPORelation and \EqHBRelation,
\PORelation and \HBRelation, respectively, when it is correct to do so.

\begin{definition}[\CpFull]
\begin{samepage}
The \emph{\cpFull} (\CP) relation is a strict (\ie, irreflexive) partial order that is strictly weaker than \HB~\cite{causally-precedes}.
Given a trace \tr,
\CPRelation is the smallest relation such that:

\begin{enumerate}[label=(\alph*)]
  \item Release and acquire operations on the same lock 
  containing conflicting events are \CP ordered.
  That is, \CPOrdered{\Release{m}{i}}{\Acquire{m}{j}} if
  $\exists e \exists e' \mid e \totalOrder e' \land \conflicts{e}{e'} \land
  \POOrdered{\POOrdered{\Acquire{m}{i}}{e}}{\Release{m}{i}} \land
  \POOrdered{\POOrdered{\Acquire{m}{j}}{e'}}{\Release{m}{j}}$.

  \item Two critical sections on the same lock are \CP ordered
  if they contain \CP-ordered events.
  Because of the next rule,
  this rule can be expressed simply as follows:
  \CPOrdered{\Release{m}{i}}{\Acquire{m}{j}} if \CPOrdered{\Acquire{m}{i}}{\Release{m}{j}}.

  \item \CP is closed under left and right composition with \HB.
  That is,
  \CPOrdered{e}{e'} if $\exists e'' \mid \CPOrdered{\HBOrdered{e}{e''}}{e'}$ or
  if $\exists e'' \mid \HBOrdered{\CPOrdered{e}{e''}}{e'}$
\end{enumerate}
\label{def:cp}
\end{samepage}
\end{definition}

\noindent
The rest of this article refers to the above rules of the \CP definition
as \emph{\Rules (a), (b), and (c)}.
An execution trace \tr has a \emph{\CP-race} 
if it has two events $e \totalOrder e'$ such that 
$\conflicts{e}{e'} \land \NotPOOrdered{e}{e'} \land \NotCPOrdered{e}{e'}$.

\paragraph{Examples.}

In the execution traces in
Figures~\ref{fig:SimpleExample:has-CP-race} and \ref{fig:SimpleExample:no-CP-race} (page~\pageref{fig:SimpleExample:has-CP-race}),
\HBOrdered{\Write{x}{1}}{\Read{x}{1}{T2}} because 
$\POOrdered{\Write{x}{1}}{\HBOrdered{\Release{m}{1}}{\POOrdered{\allowbreak\Acquire{m}{2}}{\Read{x}{1}{T2}}}}$.
In Figure~\ref{fig:SimpleExample:no-CP-race},
\CPOrdered{\Release{m}{1}}{\Acquire{m}{2}} by \RuleA, and it follows that
\CPOrdered{\Write{x}{1}}{\Read{x}{1}{T2}} by \RuleC because
$\HBOrdered{\Write{x}{1}}{\CPOrdered{\Release{m}{1}}{\HBOrdered{\Acquire{m}{2}}{\Read{x}{1}{T2}}}}$. 
In contrast,
in Figure~\ref{fig:SimpleExample:has-CP-race} the critical sections do not have conflicting accesses,
so \NotCPOrdered{\Release{m}{1}}{\Acquire{m}{2}} and 
\NotCPOrdered{\Write{x}{1}}{\Read{x}{1}{T2}}.

Next consider the execution in Figure~\ref{fig:LessComplexExample} 
(page~\pageref{fig:LessComplexExample}),
ignoring the rightmost column (explained in Section~\ref{sec:motivation}).
The accesses to \code{x} are \CP
ordered through the following logic:
\CPOrdered{\Release{u}{1}}{\Acquire{u}{2}} by \RuleA implies
\CPOrdered{\Acquire{m}{1}}{\Release{m}{2}} by \RuleC, which implies
\CPOrdered{\Release{m}{1}}{\Acquire{m}{2}} by \RuleB.
Since \HBOrdered{\Write{x}{1}}{\CPOrdered{\Release{m}{1}}{\HBOrdered{\Acquire{m}{2}}{\Read{x}{1}{T3}}}},
\CPOrdered{\Write{x}{1}}{\Read{x}{1}{T3}} by \RuleC.

Prior work proves that the \CP relation is sound\footnote{Following prior work on predictive analysis~\cite{causally-precedes,rvpredict-pldi-2014},
a \emph{sound} analysis reports only true data races.}~\cite{causally-precedes}.
In particular, if a \race exists, there exists an execution that has
an \HB-race (two conflicting accesses unordered by \HB) or a deadlock~\cite{causally-precedes}.

\subsection{Limitations of Recursive Ordering}
\label{sec:motivation}

This article targets the challenge of developing an \emph{online} analysis for
tracking the \CP relation and detecting \CP-races. An online analysis must
(1) compute \CP soundly and completely;
(2) maintain analysis state that summarizes the execution so far, without needing to maintain and refer to the entire execution trace; and
(3) analyze real program execution traces using time and space that is acceptable
for heavyweight in-house testing.

The main difficulty in tracking the \CP relation online 
is in summarizing the execution so far as analysis state.
An analysis can compute the \PO and \HB relations for events executed so far based \emph{only on the events executed so far}.
In contrast, an online \CP analysis must handle the fact that \CP may order two events because of later events.
For example, \CPOrdered{e}{e'} only because of a future event $e''$ ($e' \totalOrder e''$);
we provide a concrete example shortly.
The analysis must summarize the \emph{possible} order between $e$ and $e'$ at least
until $e''$ executes, without needing access to the entire execution trace.
Smaragdakis \etal\ explain the inherent challenge of
developing an online analysis for \CP as follows~\cite{causally-precedes}:

\begin{center}
\begin{tabular}{@{}p{0.9\linewidth}}
\em CP reasoning, based on \textnormal{[the definition of \CP{}],} is
highly recursive. Notably, Rule (c) can feed into Rule (b), which can feed back
into Rule (c). As a result, we have not implemented CP using techniques such as
vector clocks, nor have we yet discovered a full CP implementation that only
does online reasoning (i.e., never needs to ``look back'' in the execution
trace).
\end{tabular}
\end{center}

\noindent
Smaragdakis \etal's \CP algorithm encodes the recursive definition
of \CP in Datalog,
guaranteeing polynomial-time execution in the size of the execution trace.
However, the algorithm is inherently \emph{offline} because it
fundamentally needs to ``look back'' at the entire execution trace.
Experimentally, Smaragdakis \etal\ find that their algorithm does not scale to full program traces.
Instead, they limit their algorithm's computation 
to bounded windows of 500 consecutive events~\cite{causally-precedes}.

Figure~\ref{fig:LessComplexExample} illustrates the 
challenge of developing an online analysis that computes \CP soundly and completely
while handling the recursive nature of the \CP definition.
The last column shows the \emph{orderings}
relevant to \CPOrdered{\Write{x}{1}}{\Read{x}{1}{T3}}
that are ``knowable'' after each event $e$.
More formally, these are orderings that exist for a subtrace 
comprised of events up to and including $e$.

As Section~\ref{sec:definitions} explained,
\CPOrdered{\Write{x}{1}}{\Read{x}{1}{T3}} because \CPOrdered{\Release{m}{1}}{\Acquire{m}{2}}.
However, at \Read{x}{1}{T3}, an online analysis \emph{cannot} determine
that \CPOrdered{\Release{m}{1}}{\Acquire{m}{2}} 
(and thus \CPOrdered{\Write{x}{1}}{\Read{x}{1}{T3}})
based on the execution subtrace so far.
Not until \Read{y}{1}{T2} is it knowable that
\CPOrdered{\Release{u}{1}}{\Acquire{u}{2}} and thus
\CPOrdered{\Acquire{m}{1}}{\Release{m}{2}},
\CPOrdered{\Release{m}{1}}{\Acquire{m}{2}}, and
\CPOrdered{\Write{x}{1}}{\Read{x}{1}{T3}}.
A sound and complete online analysis for \CP must track analysis state that captures
ordering once it is knowable without maintaining the entire execution trace.

\begin{figure}
\smaller
\centering
\begin{tabular}{lll|l}
\thr{T1} & \thr{T2} & \thr{T3} & Relevant orderings ``knowable'' after event\\\hline 
\Write{x}{1} & & & \\
\Acquire{m}{1} & & & \HBOrdered{\Write{x}{1}}{\Release{m}{1}} ~~~~~ [knowable at \Acquire{m}{1} since \Release{m}{1} is inevitable]\\
\Release{m}{1} & & & \\
\Acquire{u}{1} & & & \HBOrdered{\Acquire{m}{1}}{\Release{u}{1}}\\
\Write{y}{1} & & & \POOrdered{\Acquire{u}{1}}{\POOrdered{\Write{y}{1}}{\Release{u}{1}}}\\
\Release{u}{1}\tikzmark{1} & & & \\
& \Acquire{m}{2} & & \\
& \Acquire{v}{1} & & \\
& \Release{v}{1} & & \\
& & \Acquire{v}{2} & \\
& & \Release{v}{2} & \\
& & \Read{x}{1}{T3} & \HBOrdered{\Acquire{m}{2}}{\Read{x}{1}{T3}}\\[.3em]
& \tikzmark{2}\Acquire{u}{2} & & \HBOrdered{\Acquire{u}{2}}{\Release{m}{2}}\\
& \Read{y}{1}{T2} & & \POOrdered{\Acquire{u}{2}}{\POOrdered{\Read{y}{1}{T2}}{\Release{u}{2}}},
\CPOrdered{\Release{u}{1}}{\Acquire{u}{2}},\\[.3em]&&&
\CPOrdered{\Acquire{m}{1}}{\Release{m}{2}}, \CPOrdered{\Release{m}{1}}{\Acquire{m}{2}}, 
\CPOrdered{\Write{x}{1}}{\Read{x}{1}{T3}}\\
& \Release{u}{2} & &\\
& \Release{m}{2} & &
\end{tabular}
\textlink{1}{2}{\CP}

%

\caption{An example execution in which \CPOrdered{\Write{x}{1}}{\Read{x}{1}{T3}}.
The last column shows, for each event $e$, orderings relevant to \CPOrdered{\Write{x}{1}}{\Read{x}{1}{T3}}
for the subtrace up to and including $e$.
The arrow represents a \CP ordering established by \RuleA of the \CP definition.}
\label{fig:LessComplexExample}
\end{figure}

Alternatively, consider instead that \thr{T1} executed the critical section on \code{u}
\emph{before} the critical section on \code{m}. In that subtly different
execution, \NotCPOrdered{\Write{x}{1}}{\Read{x}{1}{T3}}. 
A sound and complete online analysis for \CP must track analysis state that
captures the difference between these two execution variants.

We note that more challenging examples exist.
For instance, it is possible to modify the example 
so that it is unknowable even at \Release{m}{2} that \CPOrdered{\Acquire{m}{1}}{\Release{m}{2}}.
Section~\ref{sec:analysis} presents three such examples.

\section{\Raptor Overview}
\label{sec:overview}

\emph{\Raptor}
(\underline{Ra}ce \underline{p}redic\underline{tor})
is a new online dynamic analysis that
computes the \CP relation soundly and completely
by maintaining analysis state that captures \CP orderings
knowable for a subtrace of events up to and including the latest event in the execution.
This section overviews the components of \raptor's analysis state.

\paragraph{Terminology.}

Throughout the rest of the article,
we say that an event $e$ is \emph{\CP ordered to} a lock \code{m} or thread \thr{T} if
there exists an event $e'$ that
releases \code{m} ($\exists i \mid e' = \Release{m}{i}$) or
is executed by \thr{T} ($\getThread{e'} = \thr{T}$), respectively,
and \CPOrdered{e}{e'}.
This property, in turn, implies that for any future event $e''$ (\ie, $e' \totalOrder e''$), \CPOrdered{e}{e''}
if $e''$ acquires \code{m} ($e'' = \Acquire{m}{j}$) or is executed by \thr{T} ($\getThread{e''} = \thr{T}$), respectively,
since \CP composes with \HB.
Similarly, $e$ is \emph{\HB ordered to} a lock or thread if the same conditions hold for \HBRelation instead of \CPRelation.

\paragraph{\Sets.}
\label{Sec:overview:locksets}



Existing \HB analyses typically represent analysis state using \emph{vector clocks}~\cite{vector-clocks,multirace,fasttrack}.
Since the \CP relation conditionally orders critical sections,
conditional information is required on synchronization objects to accurately track \CP.
Using sets to track the \HB and \CP relations in terms of synchronization objects
naturally manages conditional information, compared with using vector clocks.
\Raptor's analysis state is represented by 
sets containing synchronization objects---locks and threads---that represent \CP, \HB, and \PO orderings.
For example, if a lock \code{m} is an element of the \HB set $\HBLockSet{x}{8}$, 
it means that the $8$th write of \code{x} event, \Write{x}{8}, is \HB ordered to \code{m}.
Similarly, the thread element $\CPThread{T2}{} \in \CPLockSetRead{y}{3}{T1}$ 
means that the event \Read{y}{3}{T1} (a read by \thr{T1} to \code{y} between the 3rd and 4th writes to \code{y}) is \CP ordered to thread \thr{T2}.
\Raptor's sets are most related to the sets used by \emph{Goldilocks},
a sound and complete \emph{\HB} data race detector~\cite{goldilocks-pldi-2007} (Section~\ref{Sec:related:goldilocks}).

\paragraph{\Sets for each access to a variable.}

As implied above, rather than each variable \OwnerElement{x}{} having \CP, \HB, and \PO sets,
every \emph{access} \Write{x}{i} and \Read{x}{i}{T}
has its own \CP, \HB, and \PO sets.
Per-access \emph{\CP} sets are necessary because of the nature of the \CP relation:
at \Write{x}{i+1}, it is not in general knowable whether
\CPOrdered{\Write{x}{i}}{\Write{x}{i+1}} or
$\forall_\thr{T} \, \CPOrdered{\Read{x}{i}{T}}{\Write{x}{i+1}}$.
Similarly, it is not generally knowable at \Read{x}{i}{T} whether \CPOrdered{\Write{x}{i}}{\Read{x}{i}{T}}.
In Figure~\ref{fig:LessComplexExample},
even after \Read{x}{1}{T3} executes,
\raptor must continue to maintain \sets for \Write{x}{1} 
because \CPOrdered{\Write{x}{1}}{\Read{x}{1}{T3}} has not yet been established.

Maintaining per-access sets would seem to require massive time and space (proportional to the length of the execution),
making it effectively an offline analysis like prior work's \CP analysis~\cite{causally-precedes}.
However, as we show in Section~\ref{sec:removal}, \goldCP can safely remove sets under detectable conditions,
\eg, it can remove \Write{x}{i}'s sets
once it determines that \CPOrdered{\Write{x}{i}}{\Write{x}{i+1}}
and $\forall_{\thr{T}} \, \CPOrdered{\Write{x}{i}}{\Read{x}{i}{T}}$.

\paragraph{\Sets for lock acquires.}

\Raptor tracks \CP, \HB, and \PO sets not just for variable accesses, 
but also for lock acquire operations
to compute \CP order by \RuleB
(\ie, \CPOrdered{\Acquire{m}{i}}{\Release{m}{j}} implies \CPOrdered{\Release{m}{i}}{\Acquire{m}{j}}).
For example, $\CPThread{T3} \in \CPLockSet{m}{i}$ means
the event \Acquire{m}{i} is \CP ordered to thread \code{T3}.

Similar to sets for variable accesses,
maintaining a \CP, \HB, and \PO \set for each lock acquire 
might consume high time and space proportional to the execution's length.
In Section~\ref{sec:removal}, we show how \raptor 
can safely remove an acquire \Acquire{m}{i}'s \sets once they are no longer needed---once
no other \CP ordering is dependent on the possibility 
of \Acquire{m}{i} being \CP ordered with a future \Release{m}{}.

\paragraph{Conditional \CP sets.}

As mentioned earlier, 
it is unknowable in general at an event $e'$ whether \CPOrdered{e}{e'}.
This recursive nature of the \CP definition prevents immediate determination of \CP ordering
at $e'$.
This delayed knowledge is unavoidable due to \RuleB,
which states that \CPOrdered{\Release{m}{i}}{\allowbreak\Acquire{m}{j}} if
\CPOrdered{\Acquire{m}{i}}{\Release{m}{j}}.
A \CP ordering might not be known until \Release{m}{j} executes---or even longer
because \RuleC can ``feed into'' \RuleB, which can
feed back into \RuleC~\cite{causally-precedes}. 

\Raptor maintains \emph{conditional \CP} (\CCP) \sets to track the fact that,
at a given event in an execution,
\CP ordering may or may not exist, 
depending on whether some \emph{other} \CP ordering exists.
For example,
an element \CCPLock{n}{m}{j} (or \CCPThread{T}{m}{j}) 
in the \CCP \set \CCPLockSet{x}{i} means that
\Write{x}{i} is \CP ordered to lock \code{n} (or thread \thr{T})
\textbf{if} \CPOrdered{\Acquire{m}{j}}{\Release{m}{k}} 
for some future event \Release{m}{k}.

\medskip
\noindent
In contrast with the above,
\emph{Goldilocks} does not need or use \sets for
each variable access, \sets for lock acquires, or conditional \sets, 
since it maintains sets that track only the \HB relation~\cite{goldilocks-pldi-2007}.

\paragraph{Outline of \goldCP presentation.}

Section~\ref{sec:state} describes \raptor's \sets and their elements in detail, and
it presents invariants maintained by \raptor's \sets at every event in an execution trace.
Section~\ref{sec:analysis} introduces the \raptor analysis that
adds and, in some cases, removes \set elements at each execution event.
Section~\ref{sec:removal} describes how \raptor removes ``obsolete'' \sets and detects \CP-races.

\section{\Raptor's Analysis State and Invariants}
\label{sec:state}

This section describes the analysis state that \raptor maintains.
Every \emph{set owner} $\rho$, 
which can be a variable write instance \OwnerElement{x}{i},
a variable read instance \OwnerElementRead{x}{i}{T},
or lock acquire instance \OwnerElement{m}{i},
has the following \sets:
\POLockSetPlain{\rho}{},
\HBLockSetPlain{\rho}{},
\CPLockSetPlain{\rho}{}, and
\CCPLockSetPlain{\rho}{}.
In general, elements of each \set are threads \thr{T} and locks \code{m}, with a few caveats:
\HBLockSetPlain{\rho}{} maintains an index for each lock element (\eg, \HBLock{m}{j}), and each
\CCPLockSetPlain{\rho}{} element includes an associated lock instance upon which \CP
ordering is conditional (\eg, \CCPLock{m}{n}{j} or \CCPThread{T}{n}{j}).
In addition, each \set for a variable write instance \OwnerElement{x}{i} 
and read instance \OwnerElementRead{x}{i}{T} can contain a special element $\xi$,
which indicates ordering between \Write{x}{i} and \Write{x}{i+1} and
between \Read{x}{i}{T} and \Write{x}{i+1}.
Similarly, each set for \OwnerElement{x}{i} can also contain a special element $\xiT{T}$ for each thread \thr{T},
which indicates ordering between \Write{x}{i} and \Read{x}{i}{T}.
Since knowledge of \CP ordering may be delayed,
a write or read instance could establish \CP order to a thread \thr{T}
at an event later than the conflicting write or read instance.
The special elements are necessary to distinguish \CP ordering
to the conflicting write or read instance
from \CP ordering to a later event.

Figure~\ref{Fig:invariants} shows invariants 
that the \raptor analysis maintains for every set owner $\rho$.
The rest of this section explains these invariants in detail,
using events $e$, \exi, \exiT{T}, and \erho as defined in the figure.

\newcommand\myset[1]{\ensuremath{ \big\{ \, #1 \, \big\} }}
\newcommand\mysetsm[1]{\ensuremath{ \{ \, #1 \, \} }}

\newcommand\comp[2]{\ensuremath{\mathit{appl}(#1,#2)}}

\newcommand\invPO{\textbf{[PO]}\xspace}
\newcommand\invHB{\textbf{[HB]}\xspace}
\newcommand\invHBindex{\textbf{[HB-index]}\xspace}
\newcommand\invHBcriticalsection{\textbf{[HB-cri\-ti\-cal-sec\-tion]}\xspace}
\newcommand\invCP{\textbf{[CP]}\xspace}
\newcommand\invCPruleA{\textbf{[CP-rule-A]}\xspace}
\newcommand\invPCPconstraint{\textbf{[\PCP-con\-straint]}\xspace}
\newcommand\invCCPconstraint{\invPCPconstraint}

\newcommand\inv{invariant\xspace}
\newcommand\POinv{\invPO \inv}
\newcommand\HBinv{\invHB \inv}
\newcommand\HBindexinv{\invHBindex \inv}
\newcommand\HBcriticalsectioninv{\invHBcriticalsection \inv}
\newcommand\CPinv{\invCP \inv}
\newcommand\CPruleAinv{\invCPruleA \inv}
\newcommand\PCPconstraintinv{\invPCPconstraint \inv}
\newcommand\CCPconstraintinv{\PCPconstraintinv}

\newcommand\Access[2]{\ensuremath{ \textcolor{red}{a_{#2}} }}

\begin{figure}
\fbox{
\parbox[c]{0.98\textwidth}{
\flushleft
\small
Let $e$ be any event in the program trace.
The following invariants hold for the point in the trace immediately \emph{before} $e$.

Let $\exi = \Write{x}{h+1}$ if $e = \Write{x}{h}$ or $e = \Read{x}{h}{T}$.
Let $\exiT{T} = \Read{x}{h}{T}$ if $e = \Write{x}{h}$.
Otherwise ($e$ is a lock acquire/release event), 
\exi and \exiT{T} are ``invalid events'' that match no real event.

We define a boolean function $\comp{\rhoprime}{e'}$ that 
evaluates to true iff event $e'$ ``applies to'' set element $\rhoprime$:\\
$\comp{\rhoprime}{e'} \coloneqq
     \begin{cases}
       \getThread{e'} = \thr{T}           & \quad \text{if $\rhoprime$ is a thread \thr{T}} \\
       \exists i \mid e' = \Release{m}{i} & \quad \text{if $\rhoprime$ is a lock \thr{m}} \\
       e' = \exi			              & \quad \text{if $\rhoprime$ is $\xi$} \\
       e' = \exiT{T}					  & \quad \text{otherwise ($\rhoprime$ is \xiT{T})}
     \end{cases}$

The following invariants hold for every set owner $\rho$.
For each set owner $\rho$,
let \erho be the event corresponding to $\rho$, \ie,
$\erho = \Write{x}{h}$ if $\rho = \OwnerElement{x}{h}$, 
$\erho = \Read{x}{h}{T}$ if $\rho = \OwnerElementRead{x}{h}{T}$, or
$\erho = \Acquire{m}{h}$ if $\rho = \OwnerElement{m}{h}$.

\begin{description}

\item[\invPO]
$\POLockSetPlain{\rho}{} = \myset{\rhoprime \mid \rhoprime \text{ is not a lock} \land \big(\exists e' \mid \comp{\rhoprime}{e'} \land \EqPOOrdered{\erho}{e'} \totalOrder e\big)}$
\item[\invHB]
$\HBLockSetPlain{\rho}{} = \myset{\rhoprime \mid \big(\exists e' \mid \comp{\rhoprime}{e'} \land \EqHBOrdered{\erho}{e'} \totalOrder e\big)}$


\item[\invHBindex]
$\HBLock{m}{i} \in \HBLockSetPlain{\rho}{}
\Longleftrightarrow
\big(\NotHBOrdered{\erho}{\Release{m}{i-1}} \land \HBOrdered{\erho}{\Release{m}{i}} \totalOrder e\big)$

\item[\invHBcriticalsection]
$\LSLock{m}{i} \in \HBLockSetPlain{\rho}{}
\Longleftrightarrow
\big(\POOrdered{\POOrdered{\Acquire{m}{i}}{\erho}}{\Release{m}{i}} \land (\rho = \OwnerElement{x}{h} \lor \rho = \OwnerElementRead{x}{h}{T}) \land
\erho \totalOrder e\big)$

\item[\invCP]
$\CPLockSetPlain{\rho}{} \cup \myset{\rhoprime \mid \big( \exists \HBLock{n}{k} \mid \PCPLock{\rhoprime}{n}{k} \in \PCPLockSetPlain{\rho}{} \land \exists j \mid \CPOrdered{\Release{n}{k}}{\Acquire{n}{j}} \totalOrder e\big)} =
\allowbreak
\myset{\rhoprime \mid \big(\exists e' \mid \comp{\rhoprime}{e'} \land \CPOrdered{\erho}{e'} \totalOrder e\big)}$

\item[\invCPruleA]
$\big(
\exists \code{m}, \thr{T}, h, i, e', e'' \mid \rho = \HBLock{m}{h} \land \POOrdered{\POOrdered{\erho}{e'}}{\Release{m}{h}} \totalOrder \POOrdered{\POOrdered{\Acquire{m}{i}}{e''}}{\Release{m}{i}} 
\land \allowbreak e'' \totalOrder e
\land \getThread{e''} = \thr{T} 
\land \conflicts{e'}{e''}\big) \implies 
\CPThread{T} \in \CPLockSetPlain{\rho}{}$

\item[\invPCPconstraint]
$\exists \PCPLockPlain{\rhoprime}{n}{k} \in \PCPLockSetPlain{\rho}{} \implies
\big(\exists j \mid \Acquire{n}{j} \totalOrder e \land \Release{n}{j} \not\totalOrder e\big)$

\end{description}
}}
\caption{The invariants maintained by the \goldCP analysis at every event in the observed total order.}
\label{Fig:invariants}
\end{figure}

\subsection{\POFull \Set: \boldmath\POLockSetPlain{\rho}{}\unboldmath}\label{sub:sec:po-state}

According to the \emph{\POinv} in Figure~\ref{Fig:invariants},
\POLockSetPlain{\rho}{} contains all threads that the event \erho is \PO ordered to.
That said, we know from the definition of \PO that \erho will be \PO ordered to
only one thread (the thread that executed $\erho$).
In addition, for any $\rho = \OwnerElement{x}{h}$ or $\rho = \OwnerElementRead{x}{h}{T}$,
\POLockSetPlain{\rho}{} may contain the special element $\xi$,
indicating that \Write{x}{h} or \Read{x}{h}{T}, respectively, 
is \PO ordered to the next write access to \code{x} by the same thread,
\ie, \POOrdered{\Write{x}{h}}{\Write{x}{h+1}} or \POOrdered{\Read{x}{h}{T}}{\Write{x}{h+1}}.
Similarly, for any $\rho = \OwnerElement{x}{h}$,
\POLockSetPlain{\rho}{} may contain the special element \xiT{T},
indicating that \Write{x}{h} is \PO ordered to the next read access to \code{x} by thread \code{T},
\ie, \POOrdered{\Write{x}{h}}{\Read{x}{h}{T}}.
Note that \raptor does not really need $\xi$ and \xiT{T} to indicate \POOrdered{\Write{x}{h}}{\Write{x}{h+1}},
\POOrdered{\Read{x}{h}{T}}{\Write{x}{h+1}}, or \POOrdered{\Write{x}{h}}{\Read{x}{h}{T}}, 
since \PO order is knowable at the next (read/write) access,
but \raptor uses these elements for consistency with the \CP and \CCP \sets, which \emph{do} need $\xi$ and \xiT{T}
as explained later in this section.

\subsection{\HBFull \Set: \boldmath\HBLockSetPlain{\rho}{}\unboldmath}\label{sub:sec:hb-state}

The \HBLockSetPlain{\rho}{} \set contains threads and locks that the event \erho is \HB ordered to.
Figure~\ref{Fig:invariants} states three invariants for \HBLockSetPlain{\rho}{}:
the \emph{\invHB}, \emph{\invHBindex}, and \emph{\invHBcriticalsection} invariants.

The \HBinv defines which threads and locks are in \HBLockSetPlain{\rho}{}.
If \erho is \HB ordered to a thread or lock,
then \HBLockSetPlain{\rho}{} contains that thread or lock.
This property implies that \erho will be 
\HB ordered to any future event that executes on the same thread
or acquires the same lock, respectively.
Similar to \PO sets, for $\rho = \OwnerElement{x}{h}$ or $\rho = \OwnerElementRead{x}{h}{T}$,
$\xi \in \HBLockSetPlain{\rho}{}$ means
\HBOrdered{\Write{x}{h}}{\Write{x}{h+1}} or \HBOrdered{\Read{x}{h}{T}}{\Write{x}{h+1}}, respectively.
Additionally, for $\rho = \OwnerElement{x}{h}$,
$\xiT{T} \in \HBLockSetPlain{\rho}{}$ means
\HBOrdered{\Write{x}{h}}{\Read{x}{h}{T}}.
Though the $\xi$ and \xiT{T} elements are superfluous (HB ordering is knowable at the next (read/write) access),
\raptor maintains these elements for consistency with the \CP and \CCP \sets that need it.

According to the \HBindexinv,
every lock \code{m} in \HBLockSetPlain{\rho}{} has a superscript $i$ (\eg, \HBLock{m}{i}) that specifies the earliest
release of \code{m} that \erho is \HB ordered to.
For example, $\HBLock{m}{i} \in \HBLockSetPlain{\rho}{}$ means that
\HBOrdered{\erho}{\Release{m}{i}} but \NotHBOrdered{\erho}{\Release{m}{i-1}}.
This property tracks
\emph{which} instance of the critical section on lock \code{m}, \OwnerElement{m}{i},
would need to be \CP ordered to \Release{m}{j}
to imply that \CPOrdered{\erho}{\Acquire{m}{j}} (by \Rules~(b) and (c)).

According to the \HBcriticalsectioninv,
for read/write accesses ($\rho = \OwnerElement{x}{h}$ or $\rho = \OwnerElementRead{x}{h}{T}$) only,
\HBLock{m}{i} in \HBLockSetPlain{\rho}{} may have a subscript $*$ (\ie, \LSLock{m}{i}), 
indicating that, in addition to \HBOrdered{\erho}{\Release{m}{i}},
$\erho$ executed \emph{inside} the critical section on lock \OwnerElement{m}{i},
\ie, $\Acquire{m}{i} \POOrdered{}{} \erho \POOrdered{}{} \Release{m}{i}$.
Notationally,
whenever $\LSLock{m}{i} \in \HBLockSetPlain{\rho}{}$,
$\HBLock{m}{i} \in \HBLockSetPlain{\rho}{}$ is also implied.
\Raptor tracks this property to establish \RuleA precisely.

\subsection{\CPFull \Set: \boldmath\CPLockSetPlain{\rho}{}\unboldmath}\label{sub:sec:cp-state}

Analogous to \HBLockSetPlain{\rho}{} for \HB ordering,
each \CPLockSetPlain{\rho}{} \set contains locks and threads 
that the event \erho is \CP ordered to.
However, at an event $e'$, since \RuleB may delay establishing \CPOrdered{\erho}{e'}, 
\CPLockSetPlain{\rho}{} does \emph{not necessarily} contain $\rhoprime$
such that $\comp{\rhoprime}{e'} \land \CPOrdered{\erho}{e'}$.
This property of \CP presents two main challenges.
First, \RuleB may delay establishing \CP order
that is dependent on other \CP orders.
\Raptor introduces the \CCPLockSetPlain{\rho}{} \set (described below) 
to track potential \CP ordering that may be established later.
\Raptor tracks every lock and thread that \erho \emph{is} \CP ordered to,
either eagerly using \CPLockSetPlain{\rho}{} or lazily using \PCPLockSetPlain{\rho}{},
according to the \emph{\CPinv} in Figure~\ref{Fig:invariants}.

Second, as a result of computing \CP lazily,
\raptor may not be able to determine that there is a \CP-race 
between conflicting events \conflicts{\Write{x}{i}}{\Write{x}{i+1}},
\conflicts{\Write{x}{i}}{\Read{x}{i}{T}}, or
\conflicts{\Read{x}{i}{T}}{\Write{x}{i+1}} 
until after the second conflicting access event \Read{x}{i}{T} or \Write{x}{i+1}.
For example, if the analysis adds \CPThread{T} to \CPLockSet{x}{i}
sometime \emph{after} \thr{T} executed \Write{x}{i+1},
that does not necessarily mean that \CPOrdered{\Write{x}{i}}{\Write{x}{i+1}}
(it means only that \Write{x}{i} is \CP ordered to some event by \thr{T} after \Write{x}{i+1}).
\Raptor uses the special thread-like element $\xi$ that
represents the thread \thr{T} \emph{up to event \Write{x}{i+1} only}, so 
$\xi \in \CPLockSet{x}{i}$ or $\xi \in \CPLockSetRead{x}{i}{T}$
only if \CPOrdered{\Write{x}{i}}{\Write{x}{i+1}} or \CPOrdered{\Read{x}{i}{T}}{\Write{x}{i+1}}, respectively.
\Raptor also uses the special thread-like element \xiT{T} that
represents the thread \thr{T} \emph{up to event \Read{x}{i}{T} only}, so
$\xiT{T} \in \CPLockSet{x}{i}$ only if \CPOrdered{\Write{x}{i}}{\Read{x}{i}{T}}.

The \emph{\CPruleAinv} (Figure~\ref{Fig:invariants}) covers the case 
for which \raptor always computes \CP eagerly: 
when two critical sections on the same lock have 
conflicting events,
according to \RuleA.
In this case, the invariant states that if two critical sections, \OwnerElement{m}{h} and \OwnerElement{m}{i},
are \CP ordered by \RuleA alone,
then $\thr{T} \in \CPLockSet{m}{h}$ as soon as the second conflicting access executes.
The \CPruleAinv is useful in proving that \goldCP maintains the \CPinv
(Appendix~\ref{sec:correctness}).

\subsection{\CCPFull \Set: \boldmath\CCPLockSetPlain{\rho}{}\unboldmath}\label{sub:sec:ccp-state}

Section~\ref{sec:overview} overviewed \CCP \sets.
In general,
$\CCPLockPlain{\rhoprime}{n}{k} \in \CCPLockSetPlain{\rho}{}$ 
means that the event \erho is \CP ordered to $\rhoprime$ 
\textbf{if} \CPOrdered{\Acquire{n}{k}}{\Release{n}{j}},
where \OwnerElement{n}{j} is an ongoing critical section 
(\ie, $\Acquire{n}{j} \totalOrder e \land \Release{n}{j} \not\totalOrder e$).

As mentioned above,
the \CPinv
says that for every \CP ordering, \goldCP captures
it eagerly in a \CP \set \textbf{or} lazily in a \PCP \set (or both).
A further constraint, codified in the \emph{\PCPconstraintinv},
is that $\PCPLockPlain{\rhoprime}{n}{k} \in \PCPLockSetPlain{\rho}{}$
\emph{only if} a critical section on lock \code{n} is ongoing.
As Section~\ref{sec:analysis} shows,
when \code{n}'s current critical section ends (at \Release{n}{j}),
\goldCP either (1) determines whether \CPOrdered{\Acquire{n}{k}}{\Release{n}{j}}
or (2) identifies another lock \code{q} that has an ongoing critical section
such that it is correct to add some \PCPLockPlain{\rhoprime}{q}{f}
to \PCPLockSetPlain{\rho}{}.

Like \CPLockSetPlain{\rho}{}, 
when $\rho = \OwnerElement{x}{i}$ or $\rho = \OwnerElementRead{x}{i}{T}$, 
\CCPLockSetPlain{\rho}{} can contain 
special thread-like elements of the form \CCPLockPlain{\xi}{n}{k}.
The element $\CCPLockPlain{\xi}{n}{k} \in \CCPLockSet{x}{i}$ or
$\CCPLockPlain{\xi}{n}{k} \in \CCPLockSetRead{x}{i}{T}$ means that
\CPOrdered{\Write{x}{i}}{\Write{x}{i+1}} \textbf{if} \CPOrdered{\Acquire{n}{k}}{\Release{n}{j}}, or
\CPOrdered{\Read{x}{i}{T}}{\Write{x}{i+1}} \textbf{if} \CPOrdered{\Acquire{n}{k}}{\Release{n}{j}}, respectively,
where \OwnerElement{n}{j} is the current ongoing critical section of \code{n}.
Similarly, for $\rho = \OwnerElement{x}{i}$,
\CCPLockSetPlain{\rho}{} can contain
special thread-like elements of the form \CCPLockPlain{\xiT{T}}{n}{k}.
The element $\CCPLockPlain{\xiT{T}}{n}{k} \in \CCPLockSet{x}{i}$ means that
\CPOrdered{\Write{x}{i}}{\Read{x}{i}{T}} \textbf{if} \CPOrdered{\Acquire{n}{k}}{\Release{n}{j}}
where \OwnerElement{n}{j} is the current ongoing critical section of \code{n}.

\begin{figure}
\newcommand{\sna}{---} 
\newcommand{\nochange}{\sna}
\newcommand{\sss}[1]{\scaleto{\bf #1}{4pt}}
\newcommand{\setE}[2]{\rule{0pt}{3ex}\ensuremath{\dfrac{\sss{#1}}{#2}}\rule{0pt}{4ex}\xspace}
\smaller
\centering
\begin{tabular}{ll|llllll}
\mc{2}{c|}{Execution} & \mc{6}{c}{Analysis state changes} \\
\thr{T1} & \thr{T2} & \OwnerElement{x}{1} & \OwnerElementRead{x}{1}{T2} & \OwnerElement{y}{1} & \OwnerElementRead{y}{1}{T2} & \OwnerElement{m}{1} & \OwnerElement{m}{2} \\[.3em]\hline
\Write{x}{1} 				&& \setE{\PO{\;}\HB}{\HBThread{T1}} & \sna & \sna & \sna & \sna & \sna \\ 
\Acquire{m}{1} 				&& \nochange & \sna & \sna & \sna & \setE{\PO{\;}\HB}{\HBThread{T1}} & \sna \\ 
\Write{y}{1} 				&& \nochange & \sna & \setE{\PO{\;}\HB}{\HBThread{T1}} \setE{\HB}{\LSLock{m}{1}} & \sna & \nochange & \sna \\
\Release{m}{1}\tikzmark{1}	&& \setE{\HB}{\HBLock{m}{1}} & \sna & \nochange & \sna & \setE{\HB}{\HBLock{m}{1}} & \sna \\
& \tikzmark{2}\Acquire{m}{2}& \setE{\HB}{\HBThread{T2}} \setE{\CCP}{\CCPThread{T2}{m}{1}} & \sna & \setE{\HB}{\HBThread{T2}} \setE{\CCP}{\CCPThread{T2}{m}{1}} & \sna & \setE{\HB}{\HBThread{T2}} \setE{\CCP}{\CCPThread{T2}{m}{1}} & \setE{\PO{\;}\HB}{\HBThread{T2}} \\
& \Read{y}{1}{T2}			& \nochange & \sna & \setE{\HB}{\HBThreadPlain{\xiT{T2}}} \setE{\CCP}{\CCPThreadPlain{\xiT{T2}}{m}{1}} & \setE{\PO{\;}\HB}{\HBThread{T2}} \setE{\HB}{\LSLock{m}{2}} & \setE{\CP}{\CPThread{T2}} & \nochange \\
& \Release{m}{2}			& \setE{\CP}{\CPThread{T2}{\;}\CPLock{m}} $\setminus \setE{\CCP}{\CCPThread{T2}{m}{1}}$ & \sna & \setE{\CP}{\CPThread{T2}{\;}\CPLock{m}{\;}\CPThreadPlain{\xiT{T2}}} $\setminus \setE{\CCP}{\CCPThreadPlain{\{\CPThread{T2}{\;}\xiT{T2}\}}{m}{1}}$ & \nochange & \setE{\CP}{\CPLock{m}} $\setminus \setE{\CCP}{\CCPThread{T2}{m}{1}}$ & \setE{\HB}{\HBLock{m}{2}} \\
& \Read{x}{1}{T2}			& \setE{\HB{\;}\CP}{\CPThreadPlain{\xiT{T2}}} & \setE{\PO{\;}\HB}{\HBThread{T2}} & \nochange & \nochange & \nochange & \nochange 
\end{tabular}
\textlink{1}{2}{\CP}

\caption{\Raptor's analysis state updates for the execution from Figure~\ref{fig:SimpleExample:no-CP-race}.
The cells under \emph{Analysis state changes} show, after each event, the changes to each set owner's sets
(``\nochange'' indicates no changes).
By default, the cells show additions to a set; the prefix ``$\setminus$'' indicates removal from a set.
For example, after \Release{m}{2}, \raptor adds \thr{T2} and \code{m} to \CPLockSet{x}{1} and
removes \CCPLock{T2}{m}{1} from \CCPLockSet{x}{1}.}
\label{fig:SimpleExampleState}
\end{figure}

\paragraph{\Raptor state example.} Figure~\ref{fig:SimpleExampleState} shows
updates to \raptor's analysis state at each event for the execution from Figure~\ref{fig:SimpleExample:no-CP-race}.
Directly after \Write{y}{1},
$\HBThread{T1} \in \POLockSet{y}{1}$, satisfying the \POinv;
and $\{\HBThread{T1}, \LSLock{m}{1}\} \in \HBLockSet{y}{1}$,
satisfying the \invHB, \invHBindex, and \invHBcriticalsection invariants
since \POOrdered{\Acquire{m}{1}}{\POOrdered{\Write{y}{1}}{\Release{m}{1}}}.
Directly after \Acquire{m}{2},
\CCPLockSet{x}{1}, \CCPLockSet{m}{1}, and \CCPLockSet{y}{1} contain $\CCPThread{T2}{m}{1}$
satisfying the \invCP and \invPCPconstraint invariants,
capturing the fact that \thr{T1}'s events are \CP ordered to \thr{T2} if \CPOrdered{\Acquire{m}{1}}{\Release{m}{2}}.
Directly after \Read{y}{1}{T2},
$\CPThread{T2} \in \CPLockSet{m}{1}$
satisfies the \CPruleAinv, and
$\CCPThreadPlain{\xiT{T2}}{m}{1} \in \CCPLockSet{y}{1}$
satisfies the \CPinv.
Finally, after \Release{m}{2} and \Read{x}{1}{T2}, 
$\CPThread{T2} \in \CPLockSet{x}{1}$ and $\CPThreadPlain{\xiT{T2}} \in \CPLockSet{x}{1}$,
respectively, satisfying the \CPinv, indicating \CPOrdered{\Write{x}{1}}{\Read{x}{1}{T2}}.

\section{The \Raptor Analysis}
\label{sec:analysis}

This section details \Raptor, our novel dynamic analysis that 
maintains the invariants shown in Figure~\ref{Fig:invariants} 
and explained in Section~\ref{sec:state}.
For each event $e$ in the observed trace \tr,
\raptor updates its analysis state
by adding and (in some cases) removing elements from each set owner $\rho$'s \sets.
Assuming that the analysis state satisfies the invariants immediately \emph{before} $e$,
then at event $e$, 
\raptor modifies the analysis state 
so that it satisfies the invariants immediately \emph{after} $e$:

\begin{theorem}
After every event, \raptor
maintains the invariants in Figure~\ref{Fig:invariants}.
\label{thm:invariants}
\end{theorem}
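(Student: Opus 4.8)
The plan is to prove the theorem by induction on the position of the event $e$ in the total order \totalOrder. The base case concerns the point before the first event of \tr: no set owner $\rho$ has yet been created, so every invariant in Figure~\ref{Fig:invariants} holds vacuously. For the inductive step I would assume that all seven invariants hold at the point immediately before $e$ (equivalently, immediately after the previous event), and show that once \raptor applies its updates for $e$ as defined in Section~\ref{sec:analysis}, the invariants hold at the point immediately before the successor of $e$. Since the invariants quantify over all set owners $\rho$ and are phrased relative to the ``current'' event $e$, the induction is really over the sliding current event, and the argument proceeds by case analysis on whether $e$ is a write, a read, an acquire, or a release. For each event type I would verify each invariant in turn, reasoning separately about (i) set owners freshly created at $e$ and (ii) existing owners whose sets \raptor mutates at $e$.

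The \POinv, \HBinv, \HBindexinv, and \HBcriticalsectioninv invariants are the most mechanical, since \PO and \HB are computable from the events seen so far and grow monotonically. For these I would argue locally: a fresh owner created at a write, read, or acquire is initialized to contain exactly its executing thread (and, for an access inside a critical section, the enclosing lock as an element \LSLock{m}{i}), matching the right-hand sides of \invPO and \invHB for the one-event prefix ending at \erho. The only nontrivial growth happens at an acquire \Acquire{m}{h}, where synchronization order with each prior \Release{m}{i} forces new \HB facts; there the \HBindexinv is maintained by recording the earliest such release index, and the \HBcriticalsectioninv follows from \PO nesting and well-nestedness of critical sections. The $\xi$ and \xiT{T} special elements are handled exactly like threads, consistent with the earlier observation that they are superfluous for \HB and \PO but retained for uniformity with the \CP and \CCP sets.

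The main obstacle is showing that \raptor maintains the \CPinv and \CCPconstraintinv \emph{together}, because this is precisely where the recursion of Definition~\ref{def:cp}---\RuleC feeding into \RuleB and back into \RuleC---must be tamed online. The decisive event is a release \Release{n}{j}: here the current critical section on \code{n} closes, so every element \CCPLockPlain{\rhoprime}{n}{k} conditioned on \code{n} must be discharged to keep \CCPconstraintinv true. I would show that, when closing this section, \raptor correctly decides whether \CPOrdered{\Acquire{n}{k}}{\Release{n}{j}}, and if so promotes the conditioned orderings into the eager \CPLockSetPlain{\rho}{} sets (establishing \CPOrdered{\Release{n}{k}}{\Acquire{n}{j}} by \RuleB and then composing with \HB by \RuleC); otherwise it re-conditions each surviving constraint on some other still-ongoing critical section, preserving the right-hand side of \CPinv without violating \CCPconstraintinv. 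Both directions must be checked: \emph{soundness}, that no element placed in \CPLockSetPlain{\rho}{} or retained in \CCPLockSetPlain{\rho}{} corresponds to a spurious (conditional) \CP ordering; and \emph{completeness}, that every \CP ordering knowable at $e$ is captured eagerly in some \CP set or lazily in some \CCP set. I expect completeness to demand the most care, since it is exactly the transitive closure of \Rules~(a)--(c) that an online analysis cannot simply read off the current prefix.

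Two further points round out the plan. First, I would use the \CPruleAinv as a lemma when discharging \CPinv: whenever two critical sections on the same lock contain conflicting accesses, \RuleA fires eagerly and deterministically at the second conflicting access, so the base orderings required by \CPinv are already present and only \Rules~(b) and (c) remain to be propagated. Second, the special thread-like elements $\xi$ and \xiT{T}, together with their conditional forms \CCPLockPlain{\xi}{n}{k} and \CCPLockPlain{\xiT{T}}{n}{k}, require that at write and read events \raptor distinguish \CP order to the \emph{conflicting} access (\Write{x}{h+1} or \Read{x}{h}{T}) from \CP order to a \emph{later} event by the same thread; I would verify \CPinv for these elements using the special cases of the applicability predicate \comp{\rhoprime}{e'} on $\xi$ and \xiT{T}, which is also what ties the invariants to correct \CP-race detection. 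Assembling the per-event, per-invariant cases completes the induction; the detailed verification is what Appendix~\ref{sec:correctness} carries out.
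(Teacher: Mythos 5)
Your scaffolding matches the paper's actual proof: induction on the observed total order, a base case with empty sets, a case analysis on write/read/acquire/release, quick dispatch of the \invPO, \invHB, \invHBindex, and \invHBcriticalsection invariants (the paper explicitly waves these through by analogy to Goldilocks), use of the \CPruleAinv as a stepping stone, and concentration of all real effort on the \CPinv, proved in both a subset (soundness) and superset (completeness) direction. The paper also splits the release case via a separate lemma showing the pre-release algorithm preserves the invariants ``for the point in time before $e$,'' which your plan implicitly folds into the release case; that is a structural detail, not a gap.

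The genuine gap is in the step you state as ``I would show that, when closing this section, \raptor correctly decides whether \CPOrdered{\Acquire{n}{k}}{\Release{n}{j}}, and if so promotes\ldots otherwise it re-conditions each surviving constraint on some other still-ongoing critical section.'' This is precisely the claim that induction on the trace prefix \emph{cannot} discharge on its own, because whether \CPOrdered{\Acquire{n}{k}}{\Release{n}{j}} holds may itself hinge on a chain of conditional orderings (\RuleC feeding \RuleB feeding \RuleC) whose links are still unresolved at \Release{n}{j}; the inductive hypothesis about earlier events tells you the \CCP sets are faithful, but not that the pre-release ``transfer'' bottoms out correctly. The paper closes this hole with a second, nested induction on a well-founded measure it calls \CPdistance---the number of \RuleB/\RuleC applications needed to certify \CPOrdered{\Release{m}{j}}{\Acquire{m}{i}}, with distance $0$ meaning \RuleA applies directly---and proves a helper lemma (Lemma~\ref{lem:helper}): at \Release{m}{i}, if \CPOrdered{\Acquire{m}{j}}{\Release{m}{i}}, then after pre-release either \thr{T} already appears in some $\CPLockSet{m}{j'}$ with $j' \ge j$, or the dependence has been transferred onto a lock instance of \emph{strictly smaller} \CPdistance. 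That strictly decreasing measure is what grounds the recursion and makes both Case~1b and Case~2d of the main proof's release analysis go through; without it (or an equivalent well-founded argument), your completeness direction at release events would be circular.
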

\noindent
Appendix~\ref{sec:correctness} proves the theorem.

To represent the new analysis state immediately \emph{after} $e$,
we use the notation $\POLockSetPlain{\rho}{}^+$, $\HBLockSetPlain{\rho}{}^+$, $\CPLockSetPlain{\rho}{}^+$, and $\PCPLockSetPlain{\rho}{}^+$.
Initially,
before \raptor starts updating analysis state at $e$ 
(\ie, immediately \emph{before} $e$),
$\POLockSetPlain{\rho}{}^+ \coloneqq \POLockSetPlain{\rho}{}$,
$\HBLockSetPlain{\rho}{}^+ \coloneqq \HBLockSetPlain{\rho}{}$,
$\CPLockSetPlain{\rho}{}^+ \coloneqq \CPLockSetPlain{\rho}{}$, and
$\CCPLockSetPlain{\rho}{}^+ \coloneqq \CCPLockSetPlain{\rho}{}$.

\paragraph{Initial analysis state.}
\label{Sec:analysis:initialization}

Before the first event in \tr,
each $\rho$'s \sets are initially empty,
\ie, $\POLockSetPlain{\rho}{} = \HBLockSetPlain{\rho}{} = \CPLockSetPlain{\rho}{} = \PCPLockSetPlain{\rho}{} = \emptyset$.
This initial state conforms to Figure~\ref{Fig:invariants}'s invariants for the point in execution before the first event.

To simplify checking for \CP-races, the analysis assumes, for every program variable \code{x},
a ``fake'' initial access \Write{x}{0}.
The analysis initializes \POLockSet{x}{0} to $\{\xi\} \cup \{\xiT{T} \mid \thr{T} \textnormal{ is a thread}\}$;
all other \sets of \OwnerElement{x}{0}
are initially $\emptyset$.
This initial state ensures that the first write access to \code{x}, \Write{x}{1},
appears to be ordered to the prior write access to \code{x} (\Write{x}{0}),
and any read accesses to \code{x} before the first write
(\Read{x}{0}{T})
appear to be ordered to the prior write (\Write{x}{0}),
without requiring special logic to handle this corner case.
(The analysis does not need fake accesses \Read{x}{0}{T} because
the analysis at the first write \Write{x}{1} only checks for ordering with prior reads that actually executed.)

\subsection{Handling Write Events}\label{Sec:analysis:write}

At a write event to shared variable \code{x}, 
\ie, $e = \Write{x}{i}$ by thread \thr{T}, 
the analysis performs the actions in Algorithm~\ref{alg:write}.
The analysis establishes \CP \RuleA;
checks for \PO, \HB, \CP, and conditional \CP (\CCP) order with prior accesses \Write{x}{i-1}
and \Read{x}{i-1}{t} for all threads \thr{t};
and initializes \OwnerElement{x}{i}'s \sets.

\newcommand{\algcaption}[1]{\hfill \Raptor's analysis for #1}

\begin{algorithm}[ht] \caption{\label{modWriteEvent}\algcaption{\Write{x}{i} by \code{T}}}
\begin{algorithmic}[1]
    \LineCommentx{Establish \RuleA}
    \ForAll {$\HBLock{m}{} \in \heldBy{T}$} \label{line:write:forall-locks}
    	\lIf{$\exists j \exists h \mid \LSLock{m}{j} \in \HBLockSet{x}{h} \land \CPThread{T} \notin \POLockSet{x}{h}$}{\Add{\CPThread{T}}{\CPLockSet{m}{j}} s.t.\ $j$ is max satisfying index} \label{line:write:LSlock-check} \label{line:write:CPEdge-established}
		\ForAll {threads \thr{t}} \label{line:write:forall-read-threads:RuleA}
			\lIf{$\exists j \exists h \mid \LSLock{m}{j} \in \HBLockSetRead{x}{h}{\code{t}} \land \CPThread{T} \notin \POLockSetRead{x}{h}{\code{t}}$}{\Add{\CPThread{T}}{\CPLockSet{m}{j}} {\smaller s.t.\ $j$ is max satisfying index}} \label{line:write:LSlock-check:read} \label{line:write:CPEdge-established:read}
		\EndFor
    \EndFor \label{line:write:endfor-locks}
    
	\LineCommentx{Add $\xi$ to represent \thr{T} at \Write{x}{i} from prior write \Write{x}{i-1}}
	\lIf{$\CPThread{T} \in \POLockSet{x}{i-1}$}{\Add{\xi}{\POLockSet{x}{i-1}}} \label{line:write:POEdge-Exists} \label{line:write:add-existingPO-xi}
	\lIf{$\HBThread{T} \in \HBLockSet{x}{i-1}$}{\Add{\xi}{\HBLockSet{x}{i-1}}} \label{line:write:HBEdge-Exists} \label{line:write:add-existingHB-xi}
	\lIf{$\CPThread{T} \in \CPLockSet{x}{i-1}$}{\Add{\xi}{\CPLockSet{x}{i-1}}} \label{line:write:CPEdge-Exists} \label{line:write:add-existingCP-xi}
	\lForAll{$\HBLock{m}{} \mid \PCPThread{T}{m}{} \in \PCPLockSet{x}{i-1}$}{\Add{\PCPThreadPlain{\xi}{m}{}}{\PCPLockSet{x}{i-1}}} \label{line:write:forall-dependent-locks} \label{line:write:add-conditional-xi} \label{line:write:end-xi}
	
	\LineCommentx{Add $\xi$ to represent \thr{T} at \Write{x}{i} from prior reads \Read{x}{i-1}{t} by all threads \thr{t}}
	\ForAll {threads \thr{t}} \label{line:write:forall-read-threads}
		\lIf{$\CPThread{T} \in \POLockSetRead{x}{i-1}{\code{t}}$}{\Add{\xi}{\POLockSetRead{x}{i-1}{\code{t}}}} \label{line:write:POEdge-Exists:read} \label{line:write:add-existingPO-xi:read}
		\lIf{$\HBThread{T} \in \HBLockSetRead{x}{i-1}{\code{t}}$}{\Add{\xi}{\HBLockSetRead{x}{i-1}{\code{t}}}} \label{line:write:HBEdge-Exists:read} \label{line:write:add-existingHB-xi:read}
		\lIf{$\CPThread{T} \in \CPLockSetRead{x}{i-1}{\code{t}}$}{\Add{\xi}{\CPLockSetRead{x}{i-1}{\code{t}}}} \label{line:write:CPEdge-Exists:read} \label{line:write:add-existingCP-xi:read}
		\lForAll{$\HBLock{m}{} \mid \PCPThread{T}{m}{} \in \PCPLockSetRead{x}{i-1}{\code{t}}$}{\Add{\PCPThreadPlain{\xi}{m}{}}{\PCPLockSetRead{x}{i-1}{\code{t}}}}  \label{line:write:forall-dependent-locksRead} \label{line:write:add-conditional-xiRead} \label{line:write:end-xiRead}
	\EndFor \label{line:write:forall-read-threads:end}
	\LineCommentx{Initialize \sets for \HBLock{x}{i}}
	\State $\POLockSet{x}{i}^+ \gets \{\code{T}\}$ \label{line:write:initialize-PO}
	\State $\HBLockSet{x}{i}^+ \gets \{\code{T}\} \cup \{\LSLock{m}{j} \mid \HBLock{m}{j} \in \heldBy{T}\}$ \label{line:write:initialize-HB}
\end{algorithmic}
\label{alg:write}
\end{algorithm}

\paragraph{Establishing \RuleA.}

Lines~\ref{line:write:forall-locks}--\ref{line:write:endfor-locks} of Algorithm~\ref{alg:write}
show how the analysis establishes \RuleA
(conflicting critical sections are \CP ordered).
The helper function \heldBy{T} returns the set of locks currently held by thread \thr{T}
(\ie, locks with active critical sections executed by \thr{T}).
For each lock \code{m} held by \thr{T},
the analysis checks whether a prior conflicting access to \code{x} executed
in an earlier critical section on \code{m}.
\Raptor adds \thr{T} to \CPLockSet{m}{j} if a prior critical section \OwnerElement{m}{j} has executed a conflicting access,
establishing \RuleA, which satisfies the \CPruleAinv (Figure~\ref{Fig:invariants}).
When \thr{T} later releases \code{m},
the analysis will update each \CPLockSetPlain{\rho}{} \set 
that depends on \CPOrdered{\Release{m}{j}}{\Acquire{m}{k}},
as Section~\ref{Sec:analysis:release} describes.

\paragraph{Checking ordering with prior accesses.}

Lines~\ref{line:write:POEdge-Exists}--\ref{line:write:end-xi} of Algorithm~\ref{alg:write}
add $\xi$ to each \set of \OwnerElement{x}{i-1} that already contains \thr{T},
indicating \PO, \HB, and/or \CP ordering from \Write{x}{i-1} to \Write{x}{i},
satisfying the invariants, \eg, if \HBOrdered{\Write{x}{i-1}}{\Write{x}{i}}, then $\xi \in \HBLockSet{x}{i-1}$ (part of the \HBinv).
Notably, for any \OwnerElement{m}{j} such that $\PCPThread{T}{m}{j} \in \PCPLockSet{x}{i-1}$,
\CPOrdered{\Write{x}{i-1}}{\Write{x}{i}} if \CPOrdered{\Acquire{m}{j}}{\Release{m}{k}}
(where \OwnerElement{m}{k} is the current critical section on \code{m}),
and so the analysis adds
\CCPThreadPlain{\xi}{m}{j} to \PCPLockSet{x}{i-1}.

Similar to the prior write access, 
\raptor checks for ordering with each \Read{x}{i-1}{t} by each thread \thr{t}.
In general, reads are not totally ordered in a \race-free execution.
Thus \raptor must check for ordering between \Write{x}{i} and 
each prior read by another thread \Read{x}{i-1}{t}
($\Write{x}{i-1} \totalOrder \Read{x}{i-1}{t} \totalOrder \Write{x}{i}$). 
Lines~\ref{line:write:forall-read-threads}--\ref{line:write:forall-read-threads:end} of Algorithm~\ref{alg:write}
add $\xi$ to indicate \PO, \HB, \CP, and/or \CCP ordering 
from \Read{x}{i-1}{t} by each thread \thr{t} to \Write{x}{i},
satisfying the invariants.

\paragraph{Initializing \sets for current access.}

Lines~\ref{line:write:initialize-PO}--\ref{line:write:initialize-HB} 
initialize \PO and \HB \sets for \OwnerElement{x}{i}.
(Before this event, all \sets for \OwnerElement{x}{i} are $\emptyset$.)
In addition to adding \thr{T} to \POLockSet{x}{i} and \HBLockSet{x}{i},
the analysis adds \LSLock{m}{j} to \HBLockSet{x}{i} for 
each ongoing critical section on \OwnerElement{m}{j} by \thr{T},
satisfying the \HBcriticalsectioninv.

\subsection{Handling Read Events}\label{sec:reads}

At a read to shared variable \code{x}, \ie, $e = \Read{x}{i}{T}$, 
the analysis performs the actions in Algorithm~\ref{alg:read},
which is analogous to Algorithm~\ref{alg:write}.
The analysis establishes \CP \RuleA; 
checks for \PO, \HB, \CP, and \CCP order with the prior access \Write{x}{i};
and initializes \OwnerElementRead{x}{i}{T}'s \sets.

\begin{algorithm}[ht] \caption{\algcaption{\Read{x}{i}{T} by \thr{T}}}
	\begin{algorithmic}[1]	
		\LineCommentx{Establish \RuleA}
		\ForAll {$\HBLock{m}{} \in \heldBy{T}$} \label{line:read:RuleA:start}
			\lIf{$\exists j \exists h \mid \LSLock{m}{j} \in \HBLockSet{x}{h} \land \CPThread{T} \notin \POLockSet{x}{h}$}{\Add{\CPThread{T}}{\CPLockSet{m}{j}} s.t.\ $j$ is max satisfying index} \label{line:read:RuleA:add}
		\EndFor \label{line:read:RuleA:end}
		\LineCommentx{Add \xiT{T} to represent \thr{T} at \Read{x}{i}{T} from prior write \Write{x}{i}}
		
		\lIf{$\CPThread{T} \in \POLockSet{x}{i}$}{\Add{\xi_{\code{T}}}{\POLockSet{x}{i}}} \label{line:read:establishPO}
		\lIf{$\HBThread{T} \in \HBLockSet{x}{i}$}{\Add{\xi_{\code{T}}}{\HBLockSet{x}{i}}} \label{line:read:establishHB}
		\lIf{$\CPThread{T} \in \CPLockSet{x}{i}$}{\Add{\xi_{\code{T}}}{\CPLockSet{x}{i}}} \label{line:read:establishCP}
		\lForAll{$\HBLock{m}{} \mid \PCPThread{T}{m}{} \in \PCPLockSet{x}{i}$}{\Add{\PCPThreadPlain{\xi_{\code{T}}}{m}{}}{\PCPLockSet{x}{i}}} \label{line:read:establishCCP}
		
		\LineCommentx{Initialize \sets for \OwnerElementRead{x}{i}{T} and reset
		\CPLockSetRead{x}{i}{\code{T}} and \CCPLockSetRead{x}{i}{\code{T}} to handle the case of a prior \Read{x}{i}{T}}
		\State $\POLockSetRead{x}{i}{\code{T}}^+ \gets \{\code{T}\}$ \label{line:read:initialize-PO} 
		\State $\HBLockSetRead{x}{i}{\code{T}}^+ \gets \{\code{T}\} \cup \{\LSLock{m}{j} \mid \HBLock{m}{j} \in \heldBy{T}\}$ \label{line:read:initialize-HB}
		\State $\CPLockSetRead{x}{i}{\code{T}}^+ \gets \CCPLockSetRead{x}{i}{T}^+ \gets \emptyset$ \label{line:read:resetRead}
	\end{algorithmic}
	\label{alg:read}
\end{algorithm}

\paragraph{Establishing \RuleA.}

Lines~\ref{line:read:RuleA:start}--\ref{line:read:RuleA:end} of Algorithm~\ref{alg:read}
add \thr{T} to \CPLockSet{m}{j} if a prior critical section executed a conflicting write access to \code{x},
establishing \RuleA, satisfying the \CPruleAinv (Figure~\ref{Fig:invariants}).
As we mentioned for write events, when \thr{T} later releases lock \code{m},
the analysis will update the \CPLockSet{$\rho$}{} \sets that depend on 
\Release{m}{j} \CPOrdered{}{} \Acquire{m}{k}, as Section~\ref{Sec:analysis:release} describes.

\paragraph{Checking ordering with prior write access.}

To represent \CPOrdered{\Write{x}{i}}{\Read{x}{i}{T}},
lines~\ref{line:read:establishPO}--\ref{line:read:establishCP} add \xiT{T} to each \set
of \OwnerElement{x}{i} that already contains \thr{T},
indicating \PO, \HB, \CP, and/or \CCP ordering from \Write{x}{i} to \Read{x}{i}{T},
satisfying the invariants.

\paragraph{Initializing \sets for current read access.}

Lines~\ref{line:read:initialize-PO}--\ref{line:read:initialize-HB} 
initialize \PO and \HB \sets for \OwnerElementRead{x}{i}{T}.
In addition to adding \thr{T} to \POLockSetRead{x}{i}{T} and \HBLockSetRead{x}{i}{T},
the analysis adds \LSLock{m}{j} to \HBLockSetRead{x}{i}{T} for 
each ongoing critical section on \OwnerElement{m}{j} by \thr{T},
satisfying the \HBcriticalsectioninv.

If a thread \thr{T} performs multiple reads to \code{x} between \Write{x}{i} and \Write{x}{i+1},
\raptor only needs to track \sets for the latest read:
if the earlier read races with \Write{x}{i+1}, then so does the later read.
Thus \raptor maintains \OwnerElementRead{x}{i}{T}'s \sets for the latest \Read{x}{i}{T} only,
which requires resetting \OwnerElementRead{x}{i}{T}'s \CP and \CCP \sets to $\emptyset$ on each read (line~\ref{line:read:resetRead}).

\subsection{Handling Acquire Events}\label{Sec:analysis:acquire}

At an acquire of a lock \code{m}, \ie, $e = \Acquire{m}{i}$ by thread \thr{T},
the analysis performs the actions in Algorithm~\ref{alg:acquire}. 
The analysis establishes \HB and \CP ordering from \code{m} to \thr{T} for all $\rho$; 
adds \CCPLockSetPlain{\rho}{} elements for conditionally \CP (\CCP) ordered critical sections;
and initializes \OwnerElement{m}{i}'s \sets.

\begin{algorithm}[ht] \caption{\label{acquireEvent}\algcaption{\Acquire{m}{i} by \code{T}}}
	\begin{algorithmic} [1]
		\ForAll {$\rho$}\label{line:acq:active-only}
			\LineCommentxx{Establish order from \code{m} to \thr{T}}
			\lIf{$\CPLock{m} \in \CPLockSetPlain{\rho}{}$}{\Add{\CPThread{T}}{\CPLockSetPlain{\rho}{}}} \label{line:acq:CP-condition} \label{line:acq:add-CPThread} \label{line:acq:end-CP-condition}
			\lForAll{$\HBLock{n}{k} \mid \PCPLock{m}{n}{k} \in \PCPLockSetPlain{\rho}{}$}{\Add{\PCPThread{T}{n}{k}}{\PCPLockSetPlain{\rho}{}} {\smaller \Comment{No effect if $\exists k'<k \mid \PCPThread{T}{n}{k'} \in \PCPLockSetPlain{\rho}{}^+$}}}\label{line:acq:CCP-condition}\label{line:acq:add-transfer-CCPThread}\label{line:acq:end-CCP-condition}
			\If {$\exists j \mid \HBLock{m}{j} \in \HBLockSetPlain{\rho}{}$}\label{line:acq:HB-condition}
  			   	\State \Add{\HBThread{T}}{\HBLockSetPlain{\rho}{}}\label{line:acq:add-HBThread}
   			   	\LineCommentx{Add \CCP ordering}
 				\State \Add{\CCPThread{T}{m}{j}}{\PCPLockSetPlain{\rho}{}}\label{line:acq:add-CCPThread} {\smaller \Comment {No effect if $\exists j'<j \mid \PCPThread{T}{m}{j'} \in \PCPLockSetPlain{\rho}{}^+$}}
 			\EndIf\label{line:acq:end-HB-condition}
		\EndFor \label{acq:endallP}
		\LineCommentx{Initialize \sets for \OwnerElement{m}{i}}
		\State $\POLockSet{m}{i}^+ \leftarrow \{\HBThread{T}\}$\label{line:acq:initialize-PO}
		\State $\HBLockSet{m}{i}^+ \leftarrow \{\HBThread{T}\}$\label{line:acq:initialize-HB}
	\end{algorithmic} \label{alg:acquire}
\end{algorithm}

\paragraph{Establishing \HB and \CP order}

Both \HB and \CP are closed under right-composition with \HB.
Thus, after the current event $e = \Acquire{m}{i}$ by \thr{T},
any $\erho$ that is \HB or \CP ordered to \code{m} 
(\ie, \HBOrdered{\erho}{\Release{m}{j}} or \CPOrdered{\erho}{\Release{m}{j}} for some $j < i$) 
is now also \HB or \CP ordered,
respectively, to \thr{T}.
Specifically, if $\CPLock{m} \in \CPLockSetPlain{\rho}{}$, then
\CPOrdered{\erho}{\Acquire{m}{i}} and the analysis adds \thr{T} to \CPLockSetPlain{\rho}{}
(\lines{\ref{line:acq:CP-condition}}{\ref{line:acq:end-CP-condition}}), satisfying the \CPinv.
Similarly, \lines{\ref{line:acq:HB-condition}}{\ref{line:acq:add-HBThread}} 
establishes \HB order from \code{m} to \thr{T},
satisfying the \HBinv.
For the condition at line~\ref{line:acq:HB-condition},
recall that $\HBLock{m}{j} \in \HBLockSetPlain{\rho}{}$ if $\LSLock{m}{j} \in \HBLockSetPlain{\rho}{}$.


\paragraph{Establishing \CCP order.}

The analysis establishes \CCP order
at \lines{\ref{line:acq:CCP-condition}}{\ref{line:acq:end-CCP-condition}}.
For any critical section on lock \HBLock{n}{k},
if $\CCPLock{m}{n}{k} \in \CCPLockSetPlain{\rho}{}$,
then \CPOrdered{\erho}{\Release{m}{i-1}}
if \CPOrdered{\Acquire{n}{k}}{\Release{n}{j}}
where \HBLock{n}{j} is an ongoing critical section.
After the current event \Acquire{m}{i}, since \HB right-composes with \CP,
\CPOrdered{\erho}{\Acquire{m}{i}} if \CPOrdered{\Acquire{n}{k}}{\Release{n}{j}}.
Thus, the analysis adds \PCPThread{T}{n}{k} to \CCPLockSetPlain{\rho}{}.

Additionally, if \HBOrdered{\erho}{\Release{m}{j}} for some $j < i$,
then by \RuleB,
\CPOrdered{\erho}{\Acquire{m}{i}}
\textbf{if} \Acquire{m}{j} \CPOrdered{}{} \Release{m}{i}.
Line~\ref{line:acq:add-CCPThread} handles
this case by adding \CCPThread{T}{m}{j} to \CCPLockSetPlain{\rho}{}
when \HBOrdered{\erho}{\Release{m}{j}},
satisfying the \CCPconstraintinv.

\paragraph{Initializing \sets for current lock.}

Lines~\ref{line:acq:initialize-PO}--\ref{line:acq:initialize-HB} initialize
\OwnerElement{m}{i}'s \sets.
The analysis adds \thr{T} to \POLockSet{m}{i} and \HBLockSet{m}{i},
since \OwnerElement{m}{i} will be \PO and \HB ordered to any event by \thr{T} after \Acquire{m}{i},
satisfying the \invPO and \HBinv{}s.
(The analysis never needs or uses \POLockSet{m}{i}.
\Raptor adds \thr{T} to \POLockSet{m}{i}
only to satisfy Figure~\ref{Fig:invariants}'s \POinv.)

\subsection{Handling Release Events}
\label{Sec:analysis:release}

At a lock release, \ie, $e = \Release{m}{i}$ by thread \thr{T},
the analysis performs the actions in Algorithm~\ref{alg:pre-release}, called the ``pre-release'' algorithm,
followed by the actions in Algorithm~\ref{alg:release}, called the ``release'' algorithm.
We divide \raptor's analysis into two algorithms 
to separate the changes to \CCPLockSetPlain{\rho}{} elements: 
the pre-release algorithm adds elements to \CPLockSetPlain{\rho}{} and \CCPLockSetPlain{\rho}{},
and the release algorithm uses the updated \sets.

\subsubsection{Pre-release Algorithm (Algorithm~\ref{alg:pre-release})}

Due to \RuleB, whether \erho is \CP ordered to some lock or thread
can depend on whether \CPOrdered{\Acquire{m}{j}}{\Release{m}{i}} (for some $j$).
The pre-release algorithm establishes \RuleB by updating \CP and \CCP \sets 
that depend on whether \CPOrdered{\Acquire{m}{j}}{\Release{m}{i}}.
Lines~\ref{line:prerel:CCP-condition}--\ref{line:prerel:end-CCP-condition}
establish \RuleB directly by detecting that \CPOrdered{\Acquire{m}{j}}{\Release{m}{i}},\footnote{More precisely,
the analysis checks for any $l \ge j \mid \CPOrdered{\Acquire{m}{l}}{\Release{m}{i}}$,
since \HB left-composes with \CP.}
then updating \CPLockSetPlain{\rho}{} \sets for every dependent element \rhoprime,
satisfying the \CPinv.

\begin{algorithm}[ht] \caption{\label{pre-releaseEvent}\algcaption{pre-release of \Release{m}{i} by \code{T}}}
	\begin{algorithmic} [1]
		\ForAll {$\rho$}\label{line:prerel:active-only}
			\LineCommentxx{Handle \CCP ordering according to Rule (b)}
			\ForAll {$\rhoprime, j \mid \PCPLockPlain{\rhoprime}{m}{j} \in \PCPLockSetPlain{\rho}{}$}\label{line:prerel:CCP-condition}
				\lIf{$\exists l \ge j \mid \thr{T} \in \CPLockSet{m}{l}$}{\Add{$\rhoprime$}{\CPLockSetPlain{\rho}{}}} \label{line:prerel:trigger-condition} \label{line:prerel:add-CPSigma} \label{line:prerel:end-CCP-condition}
				\LineCommentx{Transfer \PCP to depend on other lock(s)}
				\ForAll{$\HBLock{n}{k} \mid (\exists l \ge j \mid \PCPThread{T}{n}{k} \in \PCPLockSet{m}{l})$} \label{line:prerel:CCP-transfer-condition}
					\State \Add{\PCPLockPlain{\rhoprime}{n}{k}}{\PCPLockSetPlain{\rho}{}} \label{line:prerel:add-dependent-transfers} \label{line:prerel:end-CCP-transfer-condition}
					{\smaller \Comment{No effect if $\exists k'<k \mid \PCPLockPlain{\rhoprime}{n}{k'} \in \PCPLockSetPlain{\rho}{}^+$}}
				\EndFor
			\EndFor
		\EndFor
	\end{algorithmic} \label{alg:pre-release}
\end{algorithm} 

Even if \CPOrdered{\Acquire{m}{j}}{\Release{m}{i}}, it may not be knowable at
\Release{m}{i} because it may be dependent on if \CPOrdered{\Acquire{n}{k}}{\Release{n}{h}}
(where \OwnerElement{n}{h} is the ongoing critical section on \code{n}).
\Lines{\ref{line:prerel:CCP-transfer-condition}}{\ref{line:prerel:end-CCP-transfer-condition}}
detect such cases and ``transfer'' the \PCP dependence from \code{m} to \code{n}.
This transfer is necessary because \code{m}'s critical section is ending, and the release algorithm
will remove all \PCPLockPlain{\rhoprime}{m}{j} elements.
After the \emph{pre-release} algorithm, Figure~\ref{Fig:invariants}'s invariants
still hold---for the point in time \emph{prior to $e$}.

\subsubsection{Release Algorithm (Algorithm~\ref{alg:release})}

The release algorithm operates on the analysis state 
modified by the pre-release algorithm.
The analysis establishes \CP and \HB order from \thr{T} to \code{m};
and removes all \PCP elements that are dependent on \code{m}.

\begin{algorithm}[ht] \caption{\label{releaseEvent}\algcaption{\Release{m}{i} by \code{T}}}
	\begin{algorithmic} [1]
 		\ForAll {$\rho$}\label{line:rel:active-only}
 			\LineCommentxx {Establish order from \code{T} to \code{m}}
 			\lIf{$\CPThread{T} \in \CPLockSetPlain{\rho}{}$}{\Add{\CPLock{m}}{\CPLockSetPlain{\rho}{}}} \label{line:rel:CP-condition} \label{line:rel:add-CPlock} \label{line:rel:end-CP-condition}
			\lForAll{$\HBLock{n}{k} \mid \code{n} \ne \code{m} \land \PCPThread{T}{n}{k} \in \PCPLockSetPlain{\rho}{}$}{\Add{\PCPLock{m}{n}{k}}{\PCPLockSetPlain{\rho}{}}} \label{line:rel:CCP-condition} \label{line:rel:add-dependent-locks} \label{line:rel:end-CCP-condition}
			\lIf{$\HBThread{T} \in \HBLockSetPlain{\rho}{}$}{\Add{\HBLock{m}{i}}{\HBLockSetPlain{\rho}{}} {\smaller \Comment{No effect if $\exists i'<i \mid \HBLock{m}{i'} \in \HBLockSetPlain{\rho}{}^+$}}} \label{line:rel:HB-condition} \label{line:rel:add-HBlock} \label{line:rel:end-HB-condition}
			\LineCommentx {Remove \PCP elements conditional on \code{m}}
			\State $\PCPLockSetPlain{\rho}{}^+ \gets \PCPLockSetPlain{\rho}{}^+ \setminus \{ \, \PCPLockPlain{\rhoprime}{m}{j} \in \PCPLockSetPlain{\rho}{}\, \}$\label{line:rel:removal}
 		\EndFor
	\end{algorithmic} \label{alg:release}
\end{algorithm}

Since \HB and \CP are closed under right-composition with \HB,
if $\erho$ is \HB or \CP ordered to \thr{T} before \Release{m}{i},
then $\erho$ is \HB or \CP ordered to \code{m} after \Release{m}{i}.
The analysis establishes order from \thr{T} to \code{m} (\lines{\ref{line:rel:CP-condition}}{\ref{line:rel:end-HB-condition}}),
satisfying the \invHB and \invCP invariants.
This is analogous to lines~\ref{line:acq:CP-condition}--\ref{line:acq:end-HB-condition} of Algorithm~\ref{alg:acquire}'s
establishing order from \code{m} to \code{T} at \Acquire{m}{i}.
\Lines{\ref{line:rel:CCP-condition}}{\ref{line:rel:end-CCP-condition}}
establishes \CCP order from \thr{T} to \code{m}
for any lock instance \HBLock{n}{k} ($\code{n} \ne \code{m}$) such that
$\CCPThread{T}{n}{k} \in \CCPLockSetPlain{\rho}{}$
similar to Algorithm~\ref{alg:acquire}'s line~\ref{line:acq:CCP-condition}.

Line~\ref{line:rel:removal} removes all \CCP elements dependent on \code{m},
\ie, all \CCPLockPlain{\rhoprime}{m}{j} elements from \CCPLockSetPlain{\rho}{},
satisfying the \CCPconstraintinv.
Removal is necessary: it would be incorrect for the analysis
to retain these elements, \eg, \CPOrdered{\Acquire{m}{j}}{\Release{m}{i+1}} does \emph{not}
imply that \erho is \CP ordered to $\rhoprime$.
It is safe to remove all \CCPLockPlain{\rhoprime}{m}{j} elements,
even if \CPOrdered{\Acquire{m}{j}}{\Release{m}{i}} is not knowable at $e = \Release{m}{i}$,
because the pre-release algorithm has already handled transferring
all such \rhoprime\ whose \CCP order depends on a lock other than \code{m}.

After the \emph{release} algorithm,
Figure~\ref{Fig:invariants}'s invariants hold---for the point in time \emph{after $e$}.

\subsection{Examples}
\label{Sec:analysis:examples}


Figure~\ref{fig:LessComplexExampleState} 
extends the example from Figure~\ref{fig:LessComplexExample}
with \raptor's analysis state after each event.
At \Acquire{m}{2},
\Raptor adds
\CCPThread{T2}{m}{1} to all \CCPLockSetPlain{\rho}{} (line~\ref{line:acq:add-CCPThread} in Algorithm~\ref{alg:acquire}) such that
\HBOrdered{\erho}{\Release{m}{1}} (line~\ref{line:acq:HB-condition} in Algorithm~\ref{alg:acquire}), 
since \CPOrdered{\erho}{\Acquire{m}{2}} if \CPOrdered{\Acquire{m}{1}}{\Release{m}{2}}.
At \Read{x}{1}{T3}, \raptor adds \CCPThreadPlain{\xiT{T3}}{m}{1}
to \CCPLockSet{x}{1} (line~\ref{line:read:establishCCP} in Algorithm~\ref{alg:read}), capturing that
\CPOrdered{\Write{x}{1}}{\Read{x}{1}{T3}}
if \CPOrdered{\Release{m}{1}}{\Acquire{m}{2}}.
However, it is not knowable at this point that
\CPOrdered{\Write{x}{1}}{\Read{x}{1}{T3}}.
At \Read{y}{1}{T2}, \Raptor establishes \RuleA by adding \CPThread{T2} to
\CPLockSet{u}{1} (line~\ref{line:read:RuleA:add} in Algorithm~\ref{alg:read}). 
Although it is possible to infer
\CPOrdered{\Write{x}{1}}{\Read{x}{1}{T3}}
at this point, \Raptor defers this logic
until \Release{m}{2}, when the pre-release algorithm (Algorithm~\ref{alg:pre-release})
detects \CPOrdered{\Acquire{m}{1}}{\Release{m}{2}} 
and handles \CCP elements of the form \CCPLockPlain{\rhoprime}{m}{1}.
At \Release{m}{2},
since $\CPThread{T2} \in \CPLockSet{m}{1}$ (line~\ref{line:prerel:trigger-condition} in Algorithm~\ref{alg:pre-release}) and
$\CCPThreadPlain{\xiT{T3}}{m}{1} \in \CCPLockSet{x}{1}$ (line~\ref{line:prerel:CCP-condition} in Algorithm~\ref{alg:pre-release}),
\raptor adds \xiT{T3} to \CPLockSet{x}{1} (line~\ref{line:prerel:trigger-condition} in Algorithm~\ref{alg:pre-release}),
indicating \CPOrdered{\Write{x}{1}}{\Read{x}{1}{T3}}.

Figure~\ref{fig:ruleb-transfer-needed} shows a more complex execution requiring ``transfer'' of \CCP ordering, in which
\CPOrdered{\Write{x}{1}}{\Write{x}{2}} because
\CPOrdered{\Acquire{m}{1}}{\Release{m}{2}}, which in turn depends on
\CPOrdered{\Acquire{o}{1}}{\Release{o}{2}}.
Even when \OwnerElement{m}{2}'s critical section ends at
\Release{m}{2}, it is not knowable that
\CPOrdered{\Acquire{m}{1}}{\Release{m}{2}}.
At \Release{m}{2}, the pre-release algorithm ``transfers'' \CCP ordering from \code{m} to \code{o}:
it adds $\PCPLock{p}{o}{1}$ to \PCPLockSet{x}{1} (line~\ref{line:prerel:add-dependent-transfers} in Algorithm~\ref{alg:pre-release}) because
$\CCPThread{T3}{o}{1} \in \PCPLockSet{m}{1}$ (line~\ref{line:prerel:CCP-transfer-condition} in Algorithm~\ref{alg:pre-release}) and
$\CCPLock{p}{m}{1} \in \PCPLockSet{x}{1}$ (line~\ref{line:prerel:CCP-condition} in Algorithm~\ref{alg:pre-release}).
As a result, at \Write{x}{2}, \raptor adds \CCPLockPlain{\xi}{o}{1} to \CCPLockSet{x}{1} (line~\ref{line:write:add-conditional-xi} in Algorithm~\ref{alg:write}).
Finally, at \Release{o}{2}, the analysis adds $\xi$ to \CPLockSet{x}{1} (line~\ref{line:prerel:trigger-condition} in Algorithm~\ref{alg:pre-release})
because $\CPThread{T2} \in \CPLockSet{o}{1}$ (line~\ref{line:prerel:trigger-condition} in Algorithm~\ref{alg:pre-release}) and
$\CCPLockPlain{\xi}{o}{1} \in \CCPLockSet{x}{1}$ (line~\ref{line:prerel:CCP-condition} in Algorithm~\ref{alg:pre-release}).


Figure~\ref{fig:ComplexExample} presents an even more complex execution involving ``transfer'' of \CCP ordering,
in which \CPOrdered{\Write{x}{1}}{\Write{x}{2}} because \CPOrdered{\Acquire{m}{1}}{\Release{m}{2}},
which in turn depends on \CPOrdered{\Release{q}{1}}{\Acquire{q}{2}}.

At event \Release{m}{2},
an online analysis \emph{cannot} determine that
\CPOrdered{\Acquire{m}{1}}{\Release{m}{2}}
because it is \emph{not knowable} that 
\CPOrdered{\Release{q}{1}}{\Acquire{q}{2}}.
At \Release{m}{2}, \raptor's pre-release algorithm
``transfers'' \CCP ordering from \code{m} to \code{q} by adding
\CCPThread{T5}{q}{1} to \CCPLockSet{x}{1} (line~\ref{line:prerel:add-dependent-transfers} in Algorithm~\ref{alg:pre-release}) because
$\CCPThread{T4}{q}{1} \in \CCPLockSet{m}{1}$ (line~\ref{line:prerel:CCP-transfer-condition} in Algorithm~\ref{alg:pre-release}) and
$\CCPThread{T5}{m}{1} \in \CCPLockSet{x}{1}$ (line~\ref{line:prerel:CCP-condition} in Algorithm~\ref{alg:pre-release}).
As a result, at \Write{x}{2}, 
the analysis thus adds \CCPThreadPlain{\xi}{q}{1} to \CCPLockSet{x}{1} (line~\ref{line:write:add-conditional-xi} in Algorithm~\ref{alg:write}).
Finally, at \Release{q}{2}, 
\raptor adds $\xi$ to \CPLockSet{x}{1} (line~\ref{line:prerel:trigger-condition} in Algorithm~\ref{alg:pre-release})
because $\CPThread{T3} \in \CPLockSet{q}{1}$ (line~\ref{line:prerel:trigger-condition} in Algorithm~\ref{alg:pre-release}) and
$\CCPThread{T5}{q}{1} \in \CCPLockSet{x}{1}$ (line~\ref{line:prerel:CCP-condition} in Algorithm~\ref{alg:pre-release}).

Alternatively, 
suppose that \thr{T1} executed its critical section on \code{o}
\emph{before} its critical section on \code{m}.
In that subtly different execution, \NotCPOrdered{\Write{x}{1}}{\Write{x}{2}}.
\Raptor tracks analysis state that achieves capturing
the difference between these two execution variants.

\begin{sidewaysfigure}
\newcommand{\sna}{---} 
\newcommand{\nochange}{\sna}
\newcommand{\sss}[1]{\scaleto{\bf #1}{4pt}}
\newcommand{\setE}[2]{\rule{0pt}{3ex}\ensuremath{\dfrac{\sss{#1}}{#2}}\rule{0pt}{4ex}\xspace}
\smaller
\centering
\begin{tabular}{@{}l@{\;}l@{\;}l@{\;}|l@{\;\;\;}l@{\;\;\;}l@{\;\;\;}l@{\;\;\;}l@{\;\;\;}l@{\;\;\;}l@{\;\;\;}l@{\;\;\;}l@{\;\;\;}l@{}}
\mc{3}{c|}{Execution} & \mc{10}{c}{Analysis state changes} \\
\thr{T1} & \thr{T2} & \thr{T3} & \OwnerElement{x}{1} & \OwnerElementRead{x}{1}{T3} & \OwnerElement{y}{1} & \OwnerElementRead{y}{1}{T2} & \OwnerElement{m}{1} & \OwnerElement{m}{2} & \OwnerElement{u}{1} & \OwnerElement{u}{2} & \OwnerElement{v}{1} & \OwnerElement{v}{2} \\[.3em]\hline
\Write{x}{1} 				&&& \setE{\HB}{\HBThread{T1}} & \sna & \sna & \sna & \sna & \sna & \sna & \sna & \sna & \sna \\
\Acquire{m}{1} 				&&& \nochange & \sna & \sna & \sna & \setE{\HB}{\HBThread{T1}} & \sna & \sna & \sna & \sna & \sna \\
\Release{m}{1} 				&&&	\setE{\HB}{\HBLock{m}{1}} & \sna & \sna & \sna & \setE{\HB}{\HBLock{m}{1}} & \sna & \sna & \sna & \sna & \sna \\
\Acquire{u}{1} 				&&&	\nochange & \sna & \sna & \sna & \nochange & \sna & \setE{\HB}{\HBThread{T1}} & \sna & \sna & \sna \\
\Write{y}{1} 				&&&	\nochange & \sna & \setE{\HB}{\HBThread{T1}{\;}\LSLock{u}{1}} & \sna & \nochange & \sna & \nochange & \sna & \sna & \sna \\
\Release{u}{1}\tikzmark{1}	&&&	\setE{\HB}{\HBLock{u}{1}} & \sna & \nochange & \sna & \setE{\HB}{\HBLock{u}{1}} & \sna & \setE{\HB}{\HBLock{u}{1}} & \sna & \sna & \sna \\
& \Acquire{m}{2} 			&&	\setE{\HB}{\HBThread{T2}} \setE{\CCP}{\CCPThread{T2}{m}{1}} & \sna & \nochange & \sna & \setE{\HB}{\HBThread{T2}} \setE{\CCP}{\CCPThread{T2}{m}{1}} & \setE{\HB}{\HBThread{T2}} & \nochange & \sna & \sna & \sna \\
& \Acquire{v}{1} 			&&	\nochange & \sna & \nochange & \sna & \nochange & \nochange & \nochange & \sna & \setE{\HB}{\HBThread{T2}} & \sna \\
& \Release{v}{1} 			&&	\setE{\HB}{\HBLock{v}{1}} \setE{\CCP}{\CCPLock{v}{m}{1}} & \sna & \nochange & \sna & \setE{\HB}{\HBLock{v}{1}} \setE{\CCP}{\CCPLock{v}{m}{1}} & \setE{\HB}{\HBLock{v}{1}} & \nochange & \sna & \setE{\HB}{\HBLock{v}{1}} & \sna \\
& & \Acquire{v}{2} 			&	\setE{\HB}{\HBThread{T3}} \setE{\CCP}{\CCPThread{T3}{m}{1}{\;}\CCPThread{T3}{v}{1}} & \sna & \nochange & \sna & \setE{\HB}{\HBThread{T3}} \setE{\CCP}{\CCPThread{T3}{m}{1}{\;}\CCPThread{T3}{v}{1}} & \setE{\HB}{\HBThread{T3}} \setE{\CCP}{\CCPThread{T3}{v}{1}} & \nochange & \sna & \setE{\HB}{\HBThread{T3}} \setE{\CCP}{\CCPThread{T3}{v}{1}} & \setE{\HB}{\HBThread{T3}} \\
& & \Release{v}{2} 			&	$\setminus \setE{\CCP}{\CCPThread{T3}{v}{1}}$ & \sna & \nochange & \sna & $\setminus \setE{\CCP}{\CCPThread{T3}{v}{1}}$ & \nochange & \nochange & \sna & $\setminus \setE{\CCP}{\CCPThread{T3}{v}{1}}$ & \setE{\HB}{\HBLock{v}{2}} \\
& & \Read{x}{1}{T3}			&	\setE{\HB}{\HBThreadPlain{\xiT{T3}}} \setE{\CCP}{\CCPThreadPlain{\xiT{T3}}{m}{1}} & \setE{\HB}{\HBThread{T3}} & \nochange & \sna & \nochange & \nochange & \nochange & \sna & \nochange & \nochange \\
& \tikzmark{2}\Acquire{u}{2}&&	\setE{\CCP}{\CCPThread{T2}{u}{1}} & \nochange & \setE{\HB}{\HBThread{T2}} \setE{\CCP}{\CCPThread{T2}{u}{1}} & \sna & \setE{\CCP}{\CCPThread{T2}{u}{1}} & \nochange & \setE{\HB}{\HBThread{T2}} \setE{\CCP}{\CCPThread{T2}{u}{1}} & \setE{\HB}{\HBThread{T2}} & \nochange & \nochange \\
& \Read{y}{1}{T2}			&&	\nochange & \nochange & \setE{\HB}{\HBThreadPlain{\xiT{T2}}} \setE{\CCP}{\CCPThreadPlain{\xiT{T2}}{u}{1}} & \setE{\HB}{\HBThread{T2}{\;}\LSLock{u}{2}{\;}\LSLock{m}{2}} & \nochange & \nochange & \setE{\CP}{\CPThread{T2}} & \nochange & \nochange & \nochange \\
& \Release{u}{2} 			&&	\setE{\CP}{\CPThread{T2}{\;}\CPLock{u}} & \nochange & \setE{\CP}{\CPThread{T2}{\;}\CPLock{u}{\;}\CPThreadPlain{\xiT{T2}}} $\setminus \setE{\CCP}{\CCPThreadPlain{\{\HBThreadPlain{\xiT{T2}}{\;}\HBThread{T2}\}}{u}{1}}$ & \nochange & \setE{\CP}{\CPThread{T2}{\;}\CPLock{u}} & \setE{\HB}{\HBLock{u}{2}} & \setE{\CP}{\CPThread{T2}{\;}\CPLock{u}} $\setminus \setE{\CCP}{\CCPThread{T2}{u}{1}}$ & \setE{\HB}{\HBLock{u}{2}} & \setE{\HB}{\HBLock{u}{2}} & \nochange \\
& \Release{m}{2} 			&&	\setE{\CP}{\CPThread{T3}{\;}\CPLock{m}{\;}\CPLock{v}{\;}\CPThreadPlain{\xiT{T3}}} $\setminus \setE{\CCP}{\CCPThreadPlain{\{\HBThreadPlain{\xiT{T3}}{\;}\HBThread{T2}{\;}\HBThread{T3}{\;}\HBLock{v}{}\}}{m}{1}}$ & \nochange & \setE{\HB}{\HBLock{m}{2}} \setE{\CP}{\CPLock{m}} & \nochange & \setE{\CP}{\CPThread{T3}{\;}\CPLock{m}{\;}\CPLock{v}} $\setminus \setE{\CCP}{\CCPThreadPlain{\{\HBThread{T2}{\;}\HBThread{T3}{\;}\HBLock{v}{}\}}{m}{1}}$ & \setE{\HB}{\HBLock{m}{2}} & \setE{\HB}{\HBLock{m}{2}} \setE{\CP}{\CPLock{m}} & \setE{\HB}{\HBLock{m}{2}} & \setE{\HB}{\HBLock{m}{2}} & \nochange
\end{tabular}
\textlink{1}{2}{\CP}


\caption{The execution from Figure~\ref{fig:LessComplexExample}, in which \CPOrdered{\Write{x}{1}}{\Read{x}{1}{T3}},
with \raptor's analysis state updates shown in the same format as Figure~\ref{fig:SimpleExampleState}.
For brevity, this figure and the article's remaining figures omit showing updates to \POLockSetPlain{\rho}{}.}
\label{fig:LessComplexExampleState}
\end{sidewaysfigure}

\begin{sidewaysfigure}
\newcommand{\sna}{---} 
\newcommand{\nochange}{\sna}
\newcommand{\sss}[1]{\scaleto{\bf #1}{4pt}}
\newcommand{\setE}[2]{\rule{0pt}{3ex}\ensuremath{\dfrac{\sss{#1}}{#2}}\rule{0pt}{3.15ex}\xspace}
\small
\smaller
\centering
\begin{tabular}{@{}l@{\;}l@{}l@{}l@{}|l@{}l@{}l@{}l@{}l@{}l@{}l@{}H@{}H@{}l@{}l@{}l@{}l@{}l@{}@{}}
\mc{4}{c|}{Execution} & \mc{12}{c}{Analysis state changes} \\
\thr{T1} & \thr{T2} & \thr{T3} & \thr{T4} & \OwnerElement{x}{1} & \OwnerElement{y}{1} & \OwnerElement{m}{1} & \OwnerElement{o}{1} & \OwnerElement{q}{1} & \OwnerElement{p}{1} & \OwnerElement{r}{1} & \OwnerElement{x}{2} & \OwnerElement{y}{2} & \OwnerElement{m}{2} & \OwnerElement{o}{2} & \OwnerElement{q}{2} & \OwnerElement{p}{2} & \OwnerElement{r}{2} \\\hline
\Write{x}{1}				&&&& \setE{\HB}{\HBThread{T1}} & \sna & \sna & \sna & \sna & \sna & \sna & \sna & \sna & \sna & \sna & \sna & \sna & \sna \\
\Acquire{m}{1}				&&&& \nochange & \sna & \setE{\HB}{\HBThread{T1}} & \sna & \sna & \sna & \sna & \sna & \sna & \sna & \sna & \sna & \sna & \sna \\
\Release{m}{1}				&&&& \setE{\HB}{\HBLock{m}{1}} & \sna & \setE{\HB}{\HBLock{m}{1}} & \sna & \sna & \sna & \sna & \sna & \sna & \sna & \sna & \sna & \sna & \sna \\
\Acquire{o}{1}				&&&& \nochange & \sna & \nochange & \setE{\HB}{\HBThread{T1}} & \sna & \sna & \sna & \sna & \sna & \sna & \sna & \sna & \sna & \sna \\
\Release{o}{1}				&&&& \setE{\HB}{\HBLock{o}{1}} & \sna & \setE{\HB}{\HBLock{o}{1}} & \setE{\HB}{\HBLock{o}{1}} & \sna & \sna & \sna & \sna & \sna & \sna & \sna & \sna & \sna & \sna \\
\Acquire{q}{1}\tikzmark{1}	&&&& \nochange & \sna & \nochange & \nochange & \setE{\HB}{\HBThread{T1}} & \sna & \sna & \sna & \sna & \sna & \sna & \sna & \sna & \sna \\
\Write{y}{1}				&&&& \nochange & \setE{\HB}{\HBThread{T1}{\;}\LSLock{q}{1}} & \nochange & \nochange & \nochange & \sna & \sna & \sna & \sna & \sna & \sna & \sna & \sna & \sna \\
\Release{q}{1}				&&&& \setE{\HB}{\HBLock{q}{1}} & \nochange & \setE{\HB}{\HBLock{q}{1}} & \setE{\HB}{\HBLock{q}{1}} & \setE{\HB}{\HBLock{q}{1}} & \sna & \sna & \sna & \sna & \sna & \sna & \sna & \sna & \sna \\
& \Acquire{o}{2}			&&& \setE{\HB}{\HBThread{T2}} \setE{\CCP}{\CCPThread{T2}{o}{1}} & \nochange & \setE{\HB}{\HBThread{T2}} \setE{\CCP}{\CCPThread{T2}{o}{1}} & \setE{\HB}{\HBThread{T2}} \setE{\CCP}{\CCPThread{T2}{o}{1}} & \nochange & \sna & \sna & \sna & \sna & \sna & \setE{\HB}{\HBThread{T2}} & \sna & \sna & \sna \\
& & \Acquire{m}{2}			&& \setE{\HB}{\HBThread{T3}} \setE{\CCP}{\CCPThread{T3}{m}{1}} & \nochange & \setE{\HB}{\HBThread{T3}} \setE{\CCP}{\CCPThread{T3}{m}{1}} & \nochange & \nochange & \sna & \sna & \sna & \sna & \setE{\HB}{\HBThread{T3}} & \nochange & \sna & \sna & \sna \\
& & \Acquire{p}{1}			&& \nochange & \nochange & \nochange & \nochange & \nochange & \setE{\HB}{\HBThread{T3}} & \sna & \sna & \sna & \nochange & \nochange & \sna & \sna & \sna \\
& & \Release{p}{1}			&& \setE{\HB}{\HBLock{p}{1}} \setE{\CCP}{\CCPLock{p}{m}{1}} & \nochange & \setE{\HB}{\HBLock{p}{1}} \setE{\CCP}{\CCPLock{p}{m}{1}} & \nochange & \nochange & \setE{\HB}{\HBLock{p}{1}} & \sna & \sna & \sna & \setE{\HB}{\HBLock{p}{1}} & \nochange & \sna & \sna & \sna \\
& \Acquire{r}{1}			&&& \nochange & \nochange & \nochange & \nochange & \nochange & \nochange & \setE{\HB}{\HBThread{T2}} & \sna & \sna & \nochange & \nochange & \sna & \sna & \sna \\
& \Release{r}{1}			&&& \setE{\HB}{\HBLock{r}{1}} \setE{\CCP}{\CCPLock{r}{o}{1}} & \nochange & \setE{\HB}{\HBLock{r}{1}} \setE{\CCP}{\CCPLock{r}{o}{1}} & \setE{\HB}{\HBLock{r}{1}} \setE{\CCP}{\CCPLock{r}{o}{1}} & \nochange & \nochange & \setE{\HB}{\HBLock{r}{1}} & \sna & \sna & \nochange & \setE{\HB}{\HBLock{r}{1}} & \sna & \sna & \sna \\
& & \Acquire{r}{2}			&& \setE{\CCP}{\CCPThread{T3}{o}{1}{\;}\CCPThread{T3}{r}{1}} & \nochange & \setE{\CCP}{\CCPThread{T3}{o}{1}{\;}\CCPThread{T3}{r}{1}} & \setE{\HB}{\HBThread{T3}} \setE{\CCP}{\CCPThread{T3}{o}{1}{\;}\CCPThread{T3}{r}{1}} & \nochange & \nochange & \setE{\HB}{\HBThread{T3}} \setE{\CCP}{\CCPThread{T3}{r}{1}} & \sna & \sna & \nochange & \setE{\HB}{\HBThread{T3}} \setE{\CCP}{\CCPThread{T3}{r}{1}} & \sna & \sna & \setE{\HB}{\HBThread{T3}} \\
& & \Release{r}{2}			&& \setE{\CCP}{\CCPLock{r}{m}{1}} $\setminus \setE{\CCP}{\CCPThread{T3}{r}{1}}$ & \nochange & \setE{\CCP}{\CCPLock{r}{m}{1}} $\setminus \setE{\CCP}{\CCPThread{T3}{r}{1}}$ & $\setminus \setE{\CCP}{\CCPThread{T3}{r}{1}}$ & \nochange & \setE{\HB}{\HBLock{r}{2}} & $\setminus \setE{\CCP}{\CCPThread{T3}{r}{1}}$ & \sna & \sna & \setE{\HB}{\HBLock{r}{2}} & $\setminus \setE{\CCP}{\CCPThread{T3}{r}{1}}$ & \sna & \sna & \setE{\HB}{\HBLock{r}{2}}\\
& & \Release{m}{2}			&& \setE{\CCP}{\CCPLock{\{\CPLock{m}{}{\;}\CPLock{p}{}\}}{o}{1}} $\setminus \setE{\CCP}{\CCPThreadPlain{\{\HBThread{T3}{\;}\CPLock{p}{\;}\CPLock{r}\}}{m}{1}}$ & \nochange & \setE{\CCP}{\CCPLock{\{\CPLock{m}{\;}\CPLock{p}\}}{o}{1}} $\setminus \setE{\CCP}{\CCPThreadPlain{\{\HBThread{T3}{\;}\CPLock{p}{\;}\CPLock{r}\}}{m}{1}}$ & \setE{\HB}{\HBLock{m}{2}} \setE{\CCP}{\CCPLock{m}{o}{1}} & \nochange & \setE{\HB}{\HBLock{m}{2}} & \setE{\HB}{\HBLock{m}{2}} & \sna & \sna & \setE{\HB}{\HBLock{m}{2}} & \setE{\HB}{\HBLock{m}{2}} & \sna & \sna & \setE{\HB}{\HBLock{m}{2}} \\
& & & \Acquire{p}{2}		& \setE{\HB}{\HBThread{T4}} \setE{\CCP}{\CCPThread{T4}{o}{1}{\;}\CCPThread{T4}{p}{1}} & \nochange & \setE{\HB}{\HBThread{T4}} \setE{\CCP}{\CCPThread{T4}{o}{1}{\;}\CCPThread{T4}{p}{1}} & \nochange & \nochange & \setE{\HB}{\HBThread{T4}} \setE{\CCP}{\CCPThread{T4}{p}{1}} & \nochange & \sna & \sna & \setE{\HB}{\HBThread{T4}} \setE{\CCP}{\CCPThread{T4}{p}{1}} & \nochange & \sna & \setE{\HB}{\HBThread{T4}} & \nochange \\
& & & \Release{p}{2}		& $\setminus \setE{\CCP}{\CCPThread{T4}{p}{1}}$ & \nochange & $\setminus \setE{\CCP}{\CCPThread{T4}{p}{1}}$ & \nochange & \nochange & $\setminus \setE{\CCP}{\CCPThread{T4}{p}{1}}$ & \nochange & \sna & \sna & $\setminus \setE{\CCP}{\CCPThread{T4}{p}{1}}$ & \nochange & \sna & \setE{\HB}{\HBLock{p}{2}} & \nochange \\
& & & \Write{x}{2}			& \setE{\HB}{\HBThreadPlain{\xi}} \setE{\CCP}{\CCPThreadPlain{\xi}{o}{1}} & \nochange & \nochange & \nochange & \nochange & \nochange & \nochange & \setE{\HB}{\HBThread{T4}} & \sna & \nochange & \nochange & \sna & \nochange & \nochange \\
& \tikzmark{2}\Acquire{q}{2}&&& \setE{\CCP}{\CCPThread{T2}{q}{1}} & \setE{\HB}{\HBThread{T2}} \setE{\CCP}{\CCPThread{T2}{q}{1}} & \setE{\CCP}{\CCPThread{T2}{q}{1}} & \setE{\CCP}{\CCPThread{T2}{q}{1}} & \setE{\HB}{\HBThread{T2}} \setE{\CCP}{\CCPThread{T2}{q}{1}} & \nochange & \nochange & \nochange & \sna & \nochange & \nochange & \setE{\HB}{\HBThread{T2}} & \nochange & \nochange \\
& \Write{y}{2}				&&& \nochange & \setE{\HB}{\HBThreadPlain{\xi}} \setE{\CCP}{\CCPThreadPlain{\xi}{q}{1}} & \nochange & \nochange & \setE{\CP}{\CPThread{T2}} & \nochange & \nochange & \nochange & \setE{\HB}{\HBThread{T2}{\;}\LSLock{o}{2}{\;}\LSLock{q}{2}} & \nochange & \nochange & \nochange & \nochange & \nochange \\
& \Release{q}{2}			&&& \setE{\CP}{\CPThread{T2}{\;}\CPLock{q}} $\setminus \setE{\CCP}{\CCPThread{T2}{q}{1}}$ & \setE{\CP}{\CPThreadPlain{\xi}{\;}\CPThread{T2}{\;}\CPLock{q}} $\setminus \setE{\CCP}{\CCPThreadPlain{\{\HBThreadPlain{\xi}{\;}\HBThread{T2}\}}{q}{1}}$ & \setE{\CP}{\CPThread{T2}{\;}\CPLock{q}} $\setminus \setE{\CCP}{\CCPThread{T2}{q}{1}}$ & \setE{\CP}{\CPThread{T2}{\;}\CPLock{q}} $\setminus \setE{\CCP}{\CCPThread{T2}{q}{1}}$ & \setE{\CP}{\CPThread{T2}{\;}\CPLock{q}} $\setminus \setE{\CCP}{\CCPThread{T2}{q}{1}}$ & \nochange & \setE{\HB}{\HBLock{q}{2}} & \nochange & \nochange & \nochange & \setE{\HB}{\HBLock{q}{2}} & \setE{\HB}{\HBLock{q}{2}} & \nochange & \nochange \\
& \Release{o}{2}			&&& \setE{\CP}{\CPThreadPlain{\xi}{\;}\CPThread{T3}{\;}\CPThread{T4}{\;}\CPLock{r}{\;}\CPLock{m}{\;}\CPLock{p}{\;}\CPLock{o}} & \setE{\HB}{\HBLock{o}{2}} \setE{\CP}{\CPLock{r}{\;}\CPLock{m}{\;}\CPLock{p}{\;}\CPLock{o}} & \setE{\CP}{\CPThread{T3}{\;}\CPThread{T4}} & \setE{\CP}{\CPThread{T3}{\;}\CPLock{r}{\;}\CPLock{m}{\;}\CPLock{o}} & \setE{\HB}{\HBLock{o}{2}} \setE{\CP}{\CPLock{o}} & \nochange & \setE{\HB}{\HBLock{o}{2}} & \nochange & \nochange & \nochange & \setE{\HB}{\HBLock{o}{2}} & \setE{\HB}{\HBLock{o}{2}} & \nochange & \nochange \\
							&&&& $\setminus \setE{\CCP}{\CCPThreadPlain{\{\HBThreadPlain{\xi}{\;}\HBThread{T4}{\;}\HBThread{T3}{\;}\HBThread{T2}{\;}\HBLock{r}{}{\;}\HBLock{m}{}{\;}\HBLock{p}{}\}}{o}{1}}$ & & $\setminus \setE{\CCP}{\CCPThreadPlain{\{\HBThread{T4}{\;}\HBThread{T3}{\;}\HBThread{T2}{\;}\HBLock{r}{}{\;}\HBLock{m}{}{\;}\HBLock{p}{}\}}{o}{1}}$
\end{tabular}
\undertextlink{1}{2}{\CP}
\caption{An execution illustrating \CCP transfer in which \CPOrdered{\Write{x}{1}}{\Write{x}{2}},
in the same format as Figure~\ref{fig:SimpleExampleState}.
For space, the figure omits \set updates for \OwnerElement{x}{2} and \OwnerElement{y}{2}.}
\label{fig:ruleb-transfer-needed}
\end{sidewaysfigure}

\begin{sidewaysfigure}

\newcommand{\sna}{---} 
\newcommand{\nochange}{\sna}
\newcommand{\sss}[1]{\scaleto{\bf #1}{4pt}}
\newcommand{\setE}[2]{\rule{0pt}{3ex}\ensuremath{\dfrac{\sss{#1}}{#2}}\rule{0pt}{3.15ex}\xspace}
\smaller
\smaller
\centering
\begin{tabular}{@{}l@{\;}l@{\;}l@{\;}l@{\;}l@{\;}|l@{\;\;}l@{\;\;}l@{\;\;}l@{\;\;}l@{\;\;}l@{\;\;}l@{\;\;}H@{}H@{}l@{\;\;}l@{\;\;}l@{\;\;}l@{\;\;}l@{}}
\mc{5}{c|}{Execution} & \mc{14}{c}{Analysis state changes} \\
\thr{T1} & \thr{T2} & \thr{T3} & \thr{T4} & \thr{T5} & \OwnerElement{x}{1} & \OwnerElement{y}{1} & \OwnerElement{m}{1} & \OwnerElement{o}{1} & \OwnerElement{q}{1} & \OwnerElement{p}{1} & \OwnerElement{r}{1} & \OwnerElement{x}{2} & \OwnerElement{y}{2} & \OwnerElement{m}{2} & \OwnerElement{o}{2} & \OwnerElement{q}{2} & \OwnerElement{p}{2} & \OwnerElement{r}{2} \\\hline
& \Acquire{q}{1} 				&&&& \sna & \sna & \sna & \sna & \setE{\HB}{\HBThread{T2}} & \sna & \sna & \sna & \sna & \sna & \sna & \sna & \sna & \sna \\
\Write{x}{1} 					&&&&& \setE{\HB}{\HBThread{T1}} & \sna & \sna & \sna & \sna & \sna & \sna & \sna & \sna & \sna & \sna & \sna & \sna & \sna \\
\Acquire{m}{1} 					&&&&& \sna & \sna & \setE{\HB}{\HBThread{T1}} & \sna & \sna & \sna & \sna & \sna & \sna & \sna & \sna & \sna & \sna & \sna \\
\Release{m}{1} 					&&&&& \setE{\HB}{\HBLock{m}{1}} & \sna & \setE{\HB}{\HBLock{m}{1}} & \sna & \sna & \sna & \sna & \sna & \sna & \sna & \sna & \sna & \sna & \sna \\
\Acquire{o}{1} 					&&&&& \sna & \sna & \sna & \setE{\HB}{\HBThread{T1}} & \sna & \sna & \sna & \sna & \sna & \sna & \sna & \sna & \sna & \sna \\
\Release{o}{1} 					&&&&& \setE{\HB}{\HBLock{o}{1}} & \sna & \setE{\HB}{\HBLock{o}{1}} & \setE{\HB}{\HBLock{o}{1}} & \sna & \sna & \sna & \sna & \sna & \sna & \sna & \sna & \sna & \sna \\
& \Acquire{o}{2} 				&&&& \setE{\HB}{\HBThread{T2}} \setE{\CCP}{\CCPThread{T2}{o}{1}} & \sna & \setE{\HB}{\HBThread{T2}} \setE{\CCP}{\CCPThread{T2}{o}{1}} & \setE{\HB}{\HBThread{T2}} \setE{\CCP}{\CCPThread{T2}{o}{1}} & \sna & \sna & \sna & \sna & \sna & \sna & \setE{\HB}{\HBThread{T2}} & \sna & \sna & \sna \\
& \Release{o}{2} 				&&&& $\setminus \setE{\CCP}{\CCPThread{T2}{o}{1}}$ & \sna & $\setminus \setE{\CCP}{\CCPThread{T2}{o}{1}}$ & $\setminus \setE{\CCP}{\CCPThread{T2}{o}{1}}$ & \setE{\HB}{\HBLock{o}{2}} & \sna & \sna & \sna & \sna & \sna & \setE{\HB}{\HBLock{o}{2}} & \sna & \sna & \sna \\
& \Write{y}{1} 					&&&& \sna & \setE{\HB}{\HBThread{T2}{\;}\LSLock{q}{1}} & \sna & \sna & \sna & \sna & \sna & \sna & \sna & \sna & \sna & \sna & \sna & \sna \\
& \Release{q}{1}\tikzmark{1} 	&&&& \setE{\HB}{\HBLock{q}{1}} & \sna & \setE{\HB}{\HBLock{q}{1}} & \setE{\HB}{\HBLock{q}{1}} & \setE{\HB}{\HBLock{q}{1}} & \sna & \sna & \sna & \sna & \sna & \setE{\HB}{\HBLock{q}{1}} & \sna & \sna & \sna \\
& & \tikzmark{2}\Acquire{q}{2} 	&&& \setE{\HB}{\HBThread{T3}} \setE{\CCP}{\CCPThread{T3}{q}{1}} & \setE{\HB}{\HBThread{T3}} \setE{\CCP}{\CCPThread{T3}{q}{1}} & \setE{\HB}{\HBThread{T3}} \setE{\CCP}{\CCPThread{T3}{q}{1}} & \setE{\HB}{\HBThread{T3}} \setE{\CCP}{\CCPThread{T3}{q}{1}} & \setE{\HB}{\HBThread{T3}} \setE{\CCP}{\CCPThread{T3}{q}{1}}  & \sna & \sna & \sna & \sna & \sna & \setE{\HB}{\HBThread{T3}} \setE{\CCP}{\CCPThread{T3}{q}{1}} & \setE{\HB}{\HBThread{T3}} & \sna \\ 
& & & \Acquire{m}{2} 			&& \setE{\HB}{\HBThread{T4}} \setE{\CCP}{\CCPThread{T4}{m}{1}} & \sna & \setE{\HB}{\HBThread{T4}} \setE{\CCP}{\CCPThread{T4}{m}{1}} & \sna & \sna & \sna & \sna & \sna & \sna & \setE{\HB}{\HBThread{T4}} & \sna & \sna & \sna & \sna \\
& & & \Acquire{p}{1} 			&& \sna & \sna & \sna & \sna & \sna & \setE{\HB}{\HBThread{T4}} & \sna & \sna & \sna & \sna & \sna & \sna & \sna & \sna \\
& & & \Release{p}{1} 			&& \setE{\HB}{\HBLock{p}{1}} \setE{\CCP}{\CCPThread{p}{m}{1}} & \sna & \setE{\HB}{\HBLock{p}{1}} \setE{\CCP}{\CCPThread{p}{m}{1}} & \sna & \sna & \setE{\HB}{\HBLock{p}{1}} & \sna & \sna & \sna & \setE{\HB}{\HBLock{p}{1}} & \sna & \sna & \sna & \sna \\ 
& & \Acquire{r}{1} 				&&& \sna & \sna & \sna & \sna & \sna & \sna & \setE{\HB}{\HBThread{T3}} & \sna & \sna & \sna & \sna & \sna & \sna & \sna \\
& & \Release{r}{1} 				&&& \setE{\HB}{\HBLock{r}{1}} \setE{\CCP}{\CCPLock{r}{q}{1}} & \setE{\HB}{\HBLock{r}{1}} \setE{\CCP}{\CCPLock{r}{q}{1}} & \setE{\HB}{\HBLock{r}{1}} \setE{\CCP}{\CCPLock{r}{q}{1}} & \setE{\HB}{\HBLock{r}{1}} \setE{\CCP}{\CCPLock{r}{q}{1}} & \setE{\HB}{\HBLock{r}{1}} \setE{\CCP}{\CCPLock{r}{q}{1}} & \sna & \setE{\HB}{\HBLock{r}{1}} & \sna & \sna & \sna & \setE{\HB}{\HBLock{r}{1}} \setE{\CCP}{\CCPLock{r}{q}{1}} & \setE{\HB}{\HBLock{r}{1}} & \sna & \sna \\
& & & & \Acquire{p}{2} 			& \setE{\HB}{\HBThread{T5}} \setE{\CCP}{\CCPThread{T5}{m}{1}{\;}\CCPThread{T5}{p}{1}} & \sna & \setE{\HB}{\HBThread{T5}} \setE{\CCP}{\CCPThread{T5}{m}{1}{\;}\CCPThread{T5}{p}{1}} & \sna & \sna & \setE{\HB}{\HBThread{T5}} \setE{\CCP}{\CCPThread{T5}{p}{1}} & \sna & \sna & \sna & \setE{\HB}{\HBThread{T5}} \setE{\CCP}{\CCPThread{T5}{p}{1}} & \sna & \sna & \setE{\HB}{\HBThread{T5}} & \sna \\
& & & & \Release{p}{2} 			& $\setminus \setE{\CCP}{\CCPThread{T5}{p}{1}}$ & \sna & $\setminus \setE{\CCP}{\CCPThread{T5}{p}{1}}$ & \sna & \sna & $\setminus \setE{\CCP}{\CCPThread{T5}{p}{1}}$  & \sna & \sna & \sna & $\setminus \setE{\CCP}{\CCPThread{T5}{p}{1}}$ & \sna & \sna & \setE{\HB}{\HBLock{p}{2}} & \sna \\
& & & \Acquire{r}{2} 			&& \setE{\CCP}{\CCPThread{T4}{q}{1}{\;}\CCPThread{T4}{r}{1}} & \setE{\HB}{\HBThread{T4}} 								 & \setE{\CCP}{\CCPThread{T4}{q}{1}{\;}\CCPThread{T4}{r}{1}} & \setE{\HB}{\HBThread{T4}}								 & \setE{\HB}{\HBThread{T4}} 								 & \sna & \setE{\HB}{\HBThread{T4}} 		& \sna & \sna & \sna & \setE{\HB}{\HBThread{T4}} 								 & \setE{\HB}{\HBThread{T4}} 		 & \sna & \setE{\HB}{\HBThread{T4}} \\
								&&&&& 														 & \setE{\CCP}{\CCPThread{T4}{q}{1}{\:}\CCPThread{T4}{r}{1}} & 															 & \setE{\CCP}{\CCPThread{T4}{q}{1}{\;}\CCPThread{T4}{r}{1}} & \setE{\CCP}{\CCPThread{T4}{q}{1}{\;}\CCPThread{T4}{r}{1}} & 		& \setE{\CCP}{\CCPThread{T4}{r}{1}} &	   &	  &		 & \setE{\CCP}{\CCPThread{T4}{q}{1}{\;}\CCPThread{T4}{r}{1}} & \setE{\CCP}{\CCPThread{T4}{r}{1}} & 		& \\
& & & \Release{r}{2} 			&& $\setE{\CCP}{\CCPLock{r}{m}{1}} \setminus \setE{\CCP}{\CCPThread{T4}{r}{1}}$ & $\setminus \setE{\CCP}{\CCPThread{T4}{r}{1}}$ & $\setE{\CCP}{\CCPLock{r}{m}{1}} \setminus \setE{\CCP}{\CCPThread{T4}{r}{1}}$ & $\setminus \setE{\CCP}{\CCPThread{T4}{r}{1}}$ & $\setminus \setE{\CCP}{\CCPThread{T4}{r}{1}}$ & \setE{\HB}{\HBLock{r}{2}} & $\setminus \setE{\CCP}{\CCPThread{T4}{r}{1}}$ & \sna & \sna & \setE{\HB}{\HBLock{r}{2}} & $\setminus \setE{\CCP}{\CCPThread{T4}{r}{1}}$ & $\setminus \setE{\CCP}{\CCPThread{T4}{r}{1}}$ & \sna & \setE{\HB}{\HBLock{r}{2}} \\
& & & \Release{m}{2} 			&& \setE{\CCP}{\CCPThreadPlain{\{\HBThread{T5}{\;}\HBLock{p}{}{\;}\HBLock{m}{}\}}{q}{1}} & \setE{\HB}{\HBLock{m}{2}} \setE{\CCP}{\CCPLock{m}{q}{1}} & \setE{\CCP}{\CCPThreadPlain{\{\HBThread{T5}{\;}\HBLock{p}{}{\;}\HBLock{m}{}\}}{q}{1}} & \setE{\HB}{\HBLock{m}{2}} \setE{\CCP}{\CCPLock{m}{q}{1}} & \setE{\HB}{\HBLock{m}{2}} \setE{\CCP}{\CCPLock{m}{q}{1}} & \setE{\HB}{\HBLock{m}{2}} & \setE{\HB}{\HBLock{m}{2}} & \sna & \sna & \setE{\HB}{\HBLock{m}{2}} & \setE{\HB}{\HBLock{m}{2}} \setE{\CCP}{\CCPLock{m}{q}{1}} & \setE{\HB}{\HBLock{m}{2}} & \sna & \setE{\HB}{\HBLock{m}{2}} \\
								&&&&& $\setminus \setE{\CCP}{\CCPThreadPlain{\{\HBThread{T4}{\;}\HBThread{T5}{\;}\HBLock{r}{}{\;}\HBLock{p}{}\}}{m}{1}}$ & & $\setminus \setE{\CCP}{\CCPThreadPlain{\{\HBThread{T4}{\;}\HBThread{T5}{\;}\HBLock{r}{}{\;}\HBLock{p}{}\}}{m}{1}}$ \\
& & & & \Write{x}{2} 			& \setE{\HB}{\HBThreadPlain{\xi}} \setE{\CCP}{\CCPThreadPlain{\xi}{q}{1}} & \sna & \sna & \sna & \sna & \sna & \sna & \setE{\HB}{\HBThread{T5}} & \sna & \sna & \sna & \sna & \sna & \sna \\
& & \Write{y}{2} 				&&& \sna & \setE{\HB}{\HBThreadPlain{\xi}} \setE{\CCP}{\CCPThreadPlain{\xi}{q}{1}} & \sna & \sna & \setE{\CP}{\CPThread{T3}} & \sna & \sna & \sna & \setE{\HB}{\HBThread{T3}{\;}\LSLock{q}{2}} & \sna & \sna & \sna & \sna & \sna \\
& & \Release{q}{2} 				&&& \setE{\CP}{\CPThread{T3}{\;}\CPThread{T4}{\;}\CPThread{T5}{\;}\CPLock{p}{\;}\CPLock{r}{\;}\CPLock{m}{\;}\CPLock{q}{\;}\CPThreadPlain{\xi}} & \setE{\CP}{\CPThread{T3}{\;}\CPThread{T4}{\;}\CPLock{r}{\;}\CPLock{m}{\;}\CPLock{q}{\;}\CPThreadPlain{\xi}} & \setE{\CP}{\CPThread{T3}{\;}\CPThread{T4}{\;}\CPThread{T5}{\;}\CPLock{p}{\;}\CPLock{r}{\;}\CPLock{m}{\;}\CPLock{q}} & \setE{\CP}{\CPThread{T3}{\;}\CPThread{T4}{\;}\CPLock{r}{\;}\CPLock{m}{\;}\CPLock{q}} & \setE{\CP}{\CPThread{T4}{\;}\CPLock{r}{\;}\CPLock{m}{\;}\CPLock{q}} & \sna & \setE{\HB}{\HBLock{q}{2}} & \sna & \sna & \sna & \setE{\CP}{\CPThread{T3}{\;}\CPThread{T4}{\;}\CPLock{r}{\;}\CPLock{m}{\;}\CPLock{q}} & \setE{\HB}{\HBLock{q}{2}} & \sna & \sna \\
 								&&&&& $\setminus \setE{\CCP}{\CCPThreadPlain{\{\HBThread{T3}{\;}\HBThread{T4}{\;}\HBThread{T5}{\;}\HBThreadPlain{\xi}\}}{q}{1}}$& $\setminus \setE{\CCP}{\CCPThreadPlain{\{\HBThread{T3}{\;}\HBThread{T4}{\;}\HBThreadPlain{\xi}\}}{q}{1}}$& $\setminus \setE{\CCP}{\CCPThreadPlain{\{\HBThread{T3}{\;}\HBThread{T4}{\;}\HBThread{T5}\}}{q}{1}}$& $\setminus \setE{\CCP}{\CCPThreadPlain{\{\HBThread{T3}{\;}\HBThread{T4}\}}{q}{1}}$& $\setminus \setE{\CCP}{\CCPThreadPlain{\{\HBThread{T3}{\;}\HBThread{T4}\}}{q}{1}}$& & & & & & $\setminus \setE{\CCP}{\CCPThreadPlain{\{\HBThread{T3}{\;}\HBThread{T4}\}}{q}{1}}$ \\
 								&&&&& $\setminus \setE{\CCP}{\CCPThreadPlain{\{\HBLock{r}{}{\;}\HBLock{p}{}{\;}\HBLock{m}{}\}}{q}{1}}$							& $\setminus \setE{\CCP}{\CCPThreadPlain{\{\HBLock{r}{}{\;}\HBLock{m}{}\}}{q}{1}}$	& $\setminus \setE{\CCP}{\CCPThreadPlain{\{\HBLock{r}{}{\;}\HBLock{p}{}{\;}\HBLock{m}{}\}}{q}{1}}$							& $\setminus \setE{\CCP}{\CCPThreadPlain{\{\HBLock{r}{}{\;}\HBLock{m}{}\}}{q}{1}}$	& $\setminus \setE{\CCP}{\CCPThreadPlain{\{\HBLock{r}{}{\;}\HBLock{m}{}\}}{q}{1}}$	& & & & & & $\setminus \setE{\CCP}{\CCPThreadPlain{\{\HBLock{r}{}{\;}\HBLock{m}{}\}}{q}{1}}$
\end{tabular}
\textlink{1}{2}{\CP}
\caption{An execution in which \CPOrdered{\Write{x}{1}}{\Write{x}{2}} that shows more complex \CCP transfer than Figure~\ref{fig:ruleb-transfer-needed}.
For space, the figure omits \set updates for \OwnerElement{x}{2} and \OwnerElement{y}{2}.}
\label{fig:ComplexExample} 
\end{sidewaysfigure} 

\section{Removing Obsolete \Sets and Detecting \CP-RACES}
\label{sec:removal}
\label{sub:sec:termination}

\Raptor maintains \sets for every variable access and lock acquire.
Without removing \sets, the analysis state's size would be proportional to trace length, 
which---since the analysis iterates over 
all non-empty \set owners at acquire and release events---would
be unscalable in terms of both space and time.
Fortunately, for real (non-adversarial) program executions,
most \set owners become 
\emph{obsolete}---meaning that they will not be needed again---relatively quickly.
\Raptor detects obsolete \set owners
and removes each owner's \sets, saving both space and time.

\later{
\mike{I wonder what's the worst-case live size (\ie, \# of non-obsolete locksets)?
What's the worst-case running time for the algorithm?}
}

\subsection{Removing Obsolete \emph{Variable Access} \Sets and Detecting \CP-Races}

A variable access \set owner \OwnerElement{x}{i} or \OwnerElementRead{x}{i}{T}
becomes obsolete once the analysis determines
whether or not the corresponding access (\Write{x}{i} or \Read{x}{i}{T}, respectively) 
is involved in a \CP-race with the next access.
Detecting \CP-races is thus naturally part of checking for obsolete \set owners.

Algorithm~\ref{alg:varRemoval} shows the conditions for determining 
whether \OwnerElement{x}{i} or \OwnerElementRead{x}{i}{T} 
is obsolete or has a \CP-race with another access to variable \code{x}.
For write owner \OwnerElement{x}{i},
if \raptor has determined that 
\CPOrdered{\Write{x}{i}}{\Write{x}{i+1}} or \POOrdered{\Write{x}{i}}{\Write{x}{i+1}},
then according to the \invPO and \invCP invariants (Figure~\ref{Fig:invariants}),
$\xi \in \CPLockSet{x}{i} \cup \POLockSet{x}{i}$.
For \raptor to later determine that \CPOrdered{\Write{x}{i}}{\Write{x}{i+1}},
according to the \invCP and \invCCPconstraint invariants there must be some \OwnerElement{m}{j}
such that $\CCPThreadPlain{\xi}{m}{j} \in \CCPLockSet{x}{i}$.
Line~\ref{line:varRemoval:check-xi} checks these conditions and reports a race
if \raptor has determined \PO and \CP order between the conflicting write accesses
has not and will not be established.
Similarly, lines~\ref{line:varRemoval:write-read-race} and \ref{line:varRemoval:read-write-race}
check the conditions for reporting write--read and read--write \races, respectively.


After \Write{x}{i+1} has executed,
\OwnerElement{x}{i} or \OwnerElementRead{x}{i}{T} is obsolete
if \raptor definitely will not in the future determine that \Write{x}{i} or \Read{x}{i}{T}, respectively, is \CP ordered to a following conflicting access.
To determine whether future \CP ordering is possible,
according to the \invCP and \invCCPconstraint invariants
there must exist a \CCPThreadPlain{\xi}{m}{j} or \CCPThreadPlain{\xiT{T}}{m}{j} in \CCPLockSet{x}{i} or \CCPLockSetRead{x}{i}{T}.
Lines~\ref{line:varRemoval:remove-read-sets}--\ref{line:varRemoval:remove-write-sets} check these conditions,
and remove obsolete \sets for \OwnerElement{x}{i} and \OwnerElementRead{x}{i}{T}.

\begin{algorithm}[ht] \caption{\hfill Detect obsolete owners and remove \sets and report \CP-races for \OwnerElement{x}{i} and \OwnerElementRead{x}{i}{t} (for all threads \thr{t})} 
\begin{algorithmic} [1]
	\If {$\Write{x}{i+1}$ has executed} \label{line:varRemoval:check-executed}
		\lIf {$\xi \notin \CPLockSet{x}{i} \cup \POLockSet{x}{i} \land \nexists \HBLock{m}{j} \mid \PCPThreadPlain{\xi}{m}{j} \in \PCPLockSet{x}{i}$}{Report \race between \Write{x}{i} and \Write{x}{i+1}}  \label{line:varRemoval:check-xi} \label{line:varRemoval:no-race-remove-locksets}
		\ForAll{threads \thr{t} such that \Read{x}{i}{t} executed} \label{line:varRemoval:forall-threads}
			\lIf {$\xiT{t} \notin \CPLockSet{x}{i} \cup \POLockSet{x}{i} \land \nexists \HBLock{m}{j} \mid \PCPThreadPlain{\xiT{t}}{m}{j} \in \PCPLockSet{x}{i}$}{Report \race between \Write{x}{i} and \Read{x}{i}{t}} \label{line:varRemoval:write-read-race}
			\lIf {$\xi \notin \CPLockSetRead{x}{i}{t} \cup \POLockSetRead{x}{i}{t} \land \nexists \HBLock{m}{j} \mid \PCPThreadPlain{\xi}{m}{j} \in \PCPLockSetRead{x}{i}{t}$}{ Report \race between \Read{x}{i}{t} and \Write{x}{i+1}}\label{line:varRemoval:read-write-race}
			\lIf {$\nexists \HBLock{m}{j} \mid \PCPThreadPlain{\xi}{m}{j} \in \PCPLockSetRead{x}{i}{t}$}
			{Remove $\POLockSetRead{x}{i}{t}^+$, $\HBLockSetRead{x}{i}{t}^+$, $\CPLockSetRead{x}{i}{t}^+$, $\PCPLockSetRead{x}{i}{t}^+$} \label{line:varRemoval:remove-read-sets}
		\EndFor
		\If {$\nexists \HBLock{m}{j} \mid \PCPThreadPlain{\xi}{m}{j} \in \PCPLockSet{x}{i} \land \forall \text{threads \thr{t} } \nexists \HBLock{m}{j} \mid \PCPThreadPlain{\xi_\thr{t}}{m}{j} \in \PCPLockSet{x}{i}$}  \label{line:varRemoval:check-ccp} \label{line:varRemoval:report-race}
			\State Remove $\POLockSet{x}{i}^+$, $\HBLockSet{x}{i}^+$, $\CPLockSet{x}{i}^+$, $\PCPLockSet{x}{i}^+$ \label{line:varRemoval:remove-write-sets}
			{\smaller \Comment{No \race (unless reported above)}}
		\EndIf
	\EndIf
\end{algorithmic} \label{alg:varRemoval}
\end{algorithm}


When the execution terminates (\ie, after the last event in the observed total order),
we assume that no thread holds any lock,\footnote{If an execution does not satisfy this condition,
\raptor can simulate the release of all held locks, by
performing the pre-release and release algorithms (Algorithms~\ref{alg:pre-release} and \ref{alg:release}) for each held lock.}
so the \CCP \sets for all owners are empty by the \PCPconstraintinv (Figure~\ref{Fig:invariants}).
Thus for every \conflicts{\Write{x}{i}}{\Write{x}{i+1}},
\conflicts{\Write{x}{i}}{\Read{x}{i}{t}},
and \conflicts{\Read{x}{i}{t}}{\Write{x}{i+1}} pair
for which \goldCP has not already ruled out a \race 
(\ie, $\xi \in \CPLockSet{x}{i} \cup \POLockSet{x}{i}$,
$\xiT{t} \in \CPLockSet{x}{i} \cup \POLockSet{x}{i}$, and
$\xi \in \CPLockSetRead{x}{i}{t} \cup \POLockSetRead{x}{i}{t}$, respectively),
\raptor eventually reports a \CP-race.

\paragraph{Correctness of detecting \CP-races.}

\newcommand{\xiStar}{\ensuremath{\xi_{*}}\xspace}

Now that we know how \raptor detects \CP-races,
we can prove that \raptor is a sound and complete \CP-race detector.

\begin{theorem}
An execution has a \CP-race if and only if \GoldCP reports a \CP-race for the execution.
\label{thm:sound-and-complete}
\end{theorem}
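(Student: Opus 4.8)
The plan is to prove both directions using Theorem~\ref{thm:invariants} (the invariants of Figure~\ref{Fig:invariants} hold after every event) as the bridge between \raptor's set contents and the trace's true \PO/\CP orderings. First I would establish a \emph{reduction lemma}: the execution has a \CP-race if and only if it has a \CP-race between an \emph{adjacent} conflicting pair---one of $(\Write{x}{i},\Write{x}{i+1})$, $(\Write{x}{i},\Read{x}{i}{t})$, or $(\Read{x}{i}{t},\Write{x}{i+1})$---which are exactly the pairs inspected in Algorithm~\ref{alg:varRemoval}. The nontrivial direction (a far-apart race implies an adjacent one) follows by contraposition: if every adjacent conflicting pair along the chain $\Write{x}{i},\ldots,\Write{x}{j}$ together with the interleaved reads were \PO- or \CP-ordered, then chaining those edges via transitivity of \CP and \RuleC (\CP composes with \HB on both sides, and $\PORelation \subseteq \HBRelation$) would force the endpoints to be \CP-ordered, contradicting that they race; read--read pairs need no attention since a race requires a write, and a racing endpoint pair is cross-thread, so at least one chain edge is a genuine \CP edge.

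Given the reduction, the core task is to show that Algorithm~\ref{alg:varRemoval} reports a race on an adjacent pair exactly when that pair is unordered by both \PO and \CP. I would translate the algorithm's membership tests into ordering facts using Figure~\ref{Fig:invariants}: by the \POinv, $\xi \in \POLockSet{x}{i}$ iff \POOrdered{\Write{x}{i}}{\Write{x}{i+1}}, and by the \CPinv, $\Write{x}{i}$ is \CP-ordered to $\Write{x}{i+1}$ iff either $\xi \in \CPLockSet{x}{i}$ (captured eagerly) or some conditional element $\PCPThreadPlain{\xi}{m}{j} \in \PCPLockSet{x}{i}$ is present whose governing \RuleB condition later holds (captured lazily); the analogous statements for $\xiT{t}$ handle the write--read and read--write pairs. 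The check at line~\ref{line:varRemoval:check-xi} fires only when $\xi$ is absent from both the \PO and \CP sets \emph{and} no conditional element $\PCPThreadPlain{\xi}{m}{j}$ remains---i.e., exactly when the accesses are currently unordered and, by the \CCPconstraintinv, no pending dependence can ever establish the ordering.

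For \textbf{soundness} I would argue that a reported race is real: absence of $\xi$ from $\POLockSet{x}{i}$ and $\CPLockSet{x}{i}$ gives \NotPOOrdered{\Write{x}{i}}{\Write{x}{i+1}} and currently \NotCPOrdered{\Write{x}{i}}{\Write{x}{i+1}}, and absence of any $\PCPThreadPlain{\xi}{m}{j}$ means, by the \CPinv and \CCPconstraintinv, that no future event can add $\xi$ to $\CPLockSet{x}{i}$, so \NotCPOrdered holds permanently; together with the fact that the accesses conflict, this is a \CP-race. For \textbf{completeness} I would argue that a true race is eventually reported: by the reduction it suffices to treat an adjacent racing pair, and because the pair is genuinely \CP-unordered, Theorem~\ref{thm:invariants} guarantees $\xi$ never enters $\CPLockSet{x}{i} \cup \POLockSet{x}{i}$; the obsolescence test then fires either when the last conditional $\PCPThreadPlain{\xi}{m}{j}$ is resolved or removed without triggering (so line~\ref{line:varRemoval:check-xi} reports) or, failing that, at execution termination, where the standing assumption that no lock is held forces all \CCP sets empty via the \CCPconstraintinv and guarantees the race is flagged.

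The step I expect to be the main obstacle is the lazy-ordering half of the correctness argument: showing that ``no remaining $\PCPThreadPlain{\xi}{m}{j}$ element'' is equivalent to ``$\Write{x}{i}$ will never become \CP-ordered to $\Write{x}{i+1}$.'' This requires reasoning about \raptor's \emph{future} behavior---that conditional \CP order propagates only through \CCP elements (introduced at acquires, transferred at pre-release, and triggered when \RuleB becomes knowable), and that every genuine \CP ordering eventually materializes in a \CP set. Establishing this hinges on the precise forms of the \CPinv and \CCPconstraintinv and on Theorem~\ref{thm:invariants} holding after \emph{every} event, so that the invariants can be invoked at each point where Algorithm~\ref{alg:varRemoval} makes a decision.
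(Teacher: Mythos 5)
Your proposal is correct and takes essentially the same route as the paper's proof: both directions rest on Theorem~\ref{thm:invariants}, reading off \PO and \CP orderings from the presence or absence of $\xi$ and \xiT{T} elements in the \PO, \CP, and \CCP \sets, and eliminating any surviving conditional elements by combining the \CCPconstraintinv with the assumption that all locks are released by termination. The two steps you single out---the explicit reduction to adjacent conflicting pairs, and the argument that no future event can supply a missing $\xi$ element---are precisely what the paper compresses into its ``without loss of generality'' and into evaluating every condition at the final termination event $e^\Omega$, so your version is the same argument spelled out in slightly more detail.
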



\begin{proof}

We prove the forward direction ($\Rightarrow$) and backward direction ($\Leftarrow$) in turn:

\paragraph{Forward direction (completeness).}

Suppose a trace \tr has a \CP-race, but \raptor does not report a \CP-race.
Without loss of generality, let $e$ and $e'$ be access (read/write) events such that 
$e \totalOrder e'$ and \conflicts{e}{e'}.
Let $\rho$ be the set owner for event $e$ (\eg, \OwnerElement{x}{i} for \Write{x}{i}), and
\xiStar be $\xi$ or \xiT{T} depending on whether $e'$ is a write or a read by \thr{T}, respectively.
Then $\NotCPOrdered{e}{e'} \land
\NotPOOrdered{e}{e'}$, but
$\xiStar \in \CPLockSetPlain{\rho}{} \cup \POLockSetPlain{\rho}{}$
at program termination.
According to Theorem~\ref{thm:invariants},
by the \invCP and \invPO invariants,
$\xiStar \notin \CPLockSetPlain{\rho}{} \land \xiStar \notin \POLockSetPlain{\rho}{}$
at program termination, which is a contradiction.

\paragraph{Backward direction (soundness).}

Suppose \raptor reports a \CP-race between access events $e$ and $e'$ such that $e \totalOrder e'$, but no \CP-race exists.
Let $\rho$ be the set owner for event $e$ (\eg, \OwnerElement{x}{i} for \Write{x}{i}), and
\xiStar be $\xi$ or \xiT{T} depending on whether $e'$ is a write or a read by \thr{T}, respectively.
Then $\xiStar \notin \CPLockSetPlain{\rho}{} \cup \POLockSetPlain{\rho}{}$
at program termination,
but $\CPOrdered{e}{e'} \lor \POOrdered{e}{e'}$.

Since $\xiStar \notin \POLockSetPlain{\rho}{}$ at program termination, 
according to Theorem~\ref{thm:invariants}, by the \POinv,
\NotPOOrdered{e}{e'}.
Thus $\CPOrdered{e}{e'}$.
By the \CPinv, at program termination,
\newcommand\et{\ensuremath{e^\Omega}}%
\[\CPThreadPlain{\xiStar} \in \CPLockSetPlain{\rho}{} \lor
\big(\exists \HBLock{n}{k} \mid \CCPLockPlain{\xiStar}{n}{k} \in \CCPLockSetPlain{\rho}{} \land
\exists j \mid \CPOrdered{\Release{n}{k}}{\Acquire{n}{j}} \totalOrder \et\big)\]
where $\et$ represents a final ``termination'' event.
Since $\xiStar \notin \CPLockSetPlain{\rho}{}$,
$\exists \HBLock{n}{k} \mid \CCPLockPlain{\xiStar}{n}{k} \in \CCPLockSetPlain{\rho}{}$.
By the \CCPconstraintinv, at $\et$, 
$\exists l \mid \Acquire{n}{l} \totalOrder \et \land \Release{n}{l} \not\totalOrder \et$.
However, an execution releases all held locks before terminating, 
so $\Release{n}{l} \totalOrder \et$, which is a contradiction.
\end{proof}

\subsection{Removing Obsolete \emph{Lock Acquire} \Sets}

\Raptor uses \sets for lock instances, 
such as \CPLockSet{m}{j},
for detecting \CP-ordered critical sections and
tracking \CP ordering from \Acquire{m}{j} to establish \RuleB.
Once \OwnerElement{m}{j}'s \sets' elements can no longer trigger \RuleB,
\OwnerElement{m}{j} is obsolete.

Algorithm~\ref{alg:lockRemoval} shows the condition for whether \OwnerElement{m}{j} is obsolete.
Lock owner \OwnerElement{m}{j} is \emph{not} obsolete if
any \set owner's \CCPLockSetPlain{\rho}{} \set contains \CCPLockPlain{\rhoprime}{m}{i} ($i \le j$) 
or might contain it at some later event
(indicated by \HBLock{m}{j} being in \HBLockSetPlain{\rho}{})---unless
$\CPLock{m} \in \CPLockSetPlain{\rho}{}$,
in which case $\PCPLockPlain{\rhoprime}{m}{i} \in \PCPLockSetPlain{\rho}{}$ would be superfluous.
Line~\ref{line:lockRemoval:check} shows the exact condition;
if it evaluates to true,
then the pre-release algorithm (Algorithm~\ref{alg:pre-release}) 
definitely will not use \OwnerElement{m}{j}'s \sets anymore, and so the removal algorithm removes each \set.
\notes{Note that the condition includes checking for \HBLock{m}{j} in \HBLockSetPlain{\rho}{}
because the acquire algorithm (Algorithm~\ref{alg:acquire}) adds \PCPThread{T}{m}{j}
to \PCPLockSetPlain{\rho}{} if $\HBLock{m}{j} \in \HBLockSetPlain{\rho}{}$.}%

\begin{algorithm}[ht] \caption{\hfill Detect obsolete owner and remove \sets for \OwnerElement{m}{j}}
	\begin{algorithmic} [1]
		\If {\Release{m}{j} has executed}
			\If {$\nexists \rho \mid \rho \ne \HBLock{m}{j} \land \big(\HBLock{m}{j} \in \HBLockSetPlain{\rho}{} \lor \LSLock{m}{j} \in \HBLockSetPlain{\rho}{} \lor (\exists i \le j \mid \PCPLockPlain{\sigma}{m}{i} \in \PCPLockSetPlain{\rho}{})\big) \land \CPLock{m} \notin \CPLockSetPlain{\rho}{}$}\label{line:lockRemoval:check}
				\State Remove $\POLockSet{m}{j}^+$, $\HBLockSet{m}{j}^+$, $\CPLockSet{m}{j}^+$, $\PCPLockSet{m}{j}^+$
			\EndIf
		\EndIf
	\end{algorithmic}
	\label{alg:lockRemoval}
\end{algorithm}

\section{Evaluation}
\label{sec:eval}

This section evaluates the performance 
and \CP-race coverage of an implementation of \raptor.

\subsection{Implementation}
\label{sec:implementation}

Our implementation of the \raptor analysis is built on
\emph{RoadRunner},\footnote{\url{https://github.com/stephenfreund/RoadRunner}, version 0.3}
a dynamic analysis framework for concurrent Java
programs~\cite{roadrunner}
that implements the high-performance FastTrack \hbFull race detector~\cite{fasttrack}.
RoadRunner instruments Java bytecode dynamically at class loading time,
generating events for memory accesses
(loads and stores to field and array elements) and
synchronization operations (lock acquire, release, wait, and resume;
thread fork and join; and \code{volatile} read and write).
Our implementation is open source and publicly available.\footnote{%
\url{https://github.com/PLaSSticity/Raptor.git}}

\paragraph{Handling non-lock synchronization.}

The \raptor analysis handles variable read and write events
and lock acquire and release events as depicted 
in Algorithms~\ref{alg:write}--\ref{alg:lockRemoval}.
Our implementation conservatively \CP-orders conflicting \code{volatile} accesses
(according to \RuleA of Definition~\ref{def:cp})
by translating each \code{volatile} access to 
a critical section surrounding the access on a lock unique to the variable at run time.
By similar translation, the implementation establishes \CP order from
\code{static} class initializer to \code{static} class accessed events (\cf~\cite{jvm-spec}).
The implementation establishes \CP order for thread fork (parent to child)
and join (child to parent) by generating critical sections containing
conflicting accesses unique to the threads involved.
Smaragdakis \etal\ translate these synchronization operations 
similarly before feeding them to their Datalog \CP implementation~\cite{causally-precedes}.

\paragraph{Removing obsolete \sets and reporting \races.}

The implementation follows the logic from 
Algorithms~\ref{alg:varRemoval} and \ref{alg:lockRemoval}
to remove obsolete \set owners and report \CP-races.
However, instead of executing these algorithms directly 
(\eg, periodically passing over all non-obsolete \sets 
and explicitly clearing obsolete \set owners),
the implementation performs \emph{reference counting} 
to identify \CP-races and remove obsolete \sets.
The implementation tracks the numbers of remaining 
\CCPLockPlain{\xi}{m}{j} and \CCPLockPlain{\xiT{T}}{m}{j} elements
in \CCPLockSet{x}{i} (or \CCPLockSetPlain{\OwnerElementRead{x}{i}{T}}{}).
If any of these counts drop to zero,
and the expected $\xi$ or \xiT{T} element(s) are not in 
\CPLockSet{x}{i} (or \CPLockSetPlain{\OwnerElementRead{x}{i}{T}}{}),
then the implementation reports a \CP-race.
After reporting the race, 
the implementation adds the corresponding $\xi$ or $\xiT{T}$ element 
to \CPLockSet{x}{i} (or \CPLockSetPlain{\OwnerElementRead{x}{i}{T}}{}),
effectively simulating a \CP-race-free execution up to the current event
and avoiding reporting false \CP-races downstream; 
Datalog \CP behaves similarly~\cite{causally-precedes}.
If all counts drop to zero, 
the implementation concludes that any races between \Write{x}{i} (or \Read{x}{i}{T})
and a following access have already been detected or cannot occur,
and thus its \set owner \OwnerElement{x}{i} (or \OwnerElementRead{x}{i}{T}) is obsolete. 

The implementation removes an obsolete \set owner
(Algorithms~\ref{alg:varRemoval} and \ref{alg:lockRemoval})
by removing all strong references to it,
allowing it to be garbage collected~\cite{goetz-weak-refs-2005}.
Variable owners are kept alive by the locks corresponding to
\CCPLockPlain{\xi}{m}{j} and \CCPLockPlain{\xiT{T}}{m}{j} elements 
in \CCPLockSet{x}{i} (or \CCPLockSetPlain{\OwnerElementRead{x}{i}{T}}{}) referencing the variable.
Lock owners are kept alive by variables containing 
\CCPLockPlain{\xi}{m}{j} and \CCPLockPlain{\xiT{T}}{m}{j} elements 
in \CCPLockSet{x}{i} (or \CCPLockSetPlain{\OwnerElementRead{x}{i}{T}}{}).
At a release event, 
the implementation removes \CCPLockPlain{\xi}{m}{j} and \CCPLockPlain{\xiT{T}}{m}{j} elements
from lock and variable owners along with the lock's references to variables. 

\paragraph{Optimization.}


Our prototype implementation of \raptor is largely unoptimized.
\notes{While we think there is significant opportunity for optimization to reduce
unnecessary or redundant work,
we also believe that designing and implementing effective optimizations would be a major undertaking (\eg, designing the ``right'' fast-path lookups
while avoiding huge memory overheads).}%
We have however implemented the following optimization out of necessity.
Before the pre-release algorithm (Algorithm~\ref{alg:pre-release}) 
iterates over all active \set owners $\rho$,
it pre-computes the following information:
(1) the maximum $j$ such that $\exists l \ge j \mid \thr{T} \in \CPLockSet{m}{l}$; and
(2) for each $j$, the set of \HBLock{n}{k} 
such that $\exists l \ge j \mid \PCPThread{T}{n}{k} \in \PCPLockSet{m}{l}$.
This pre-computation enables quick lookups at
lines~\ref{line:prerel:trigger-condition}--\ref{line:prerel:end-CCP-transfer-condition} 
of Algorithm~\ref{pre-releaseEvent}, 
significantly outperforming an unoptimized pre-release algorithm.

\later{
The implementation also performs ``fast path'' checks that identify and skip analysis of ``redundant'' program accesses;
an access is redundant if the same or stronger access (writes are stronger than reads) to the same variable
occurred on the same thread without intervening synchronization.
The implementation identifies redundant accesses by checking if the \sets for the prior access
(\OwnerElement{x}{i-1} for \Write{x}{i}; \OwnerElementRead{x}{i-1}{T} for \Read{x}{i}{T})
contain only elements added at lines~\ref{line:write:initialize-PO}--\ref{line:write:initialize-HB} of Algorithm~\ref{alg:write}
or lines~\ref{line:read:initialize-PO}--\ref{line:read:initialize-HB} of Algorithm~\ref{alg:read}.
\jake{I was mistaken. The original implementation used to gather the results in the paper
does not have logic for ``fast paths''.}
}

\subsection{Methodology}

We evaluate \raptor on two sets of benchmarks:

\begin{itemize}

\item Benchmarks evaluated by Smaragdakis \etal~\cite{causally-precedes} 
that we were able to obtain and run.
Of their benchmarks that we do \emph{not} evaluate here, 
all except \bench{StaticBucketMap} execute fewer than 1,000 events~\cite{causally-precedes}.

\item The DaCapo benchmarks~\cite{dacapo-benchmarks-conf}, 
version 9.12 Bach, with small workload size.
We exclude programs that do not run out of the box with RoadRunner.

\end{itemize}

\noindent
The experiments execute on a quiet system with four Intel Xeon E5-4620 8-core processors (32 cores total)
and 128 GB of main memory, running Linux 2.6.32.

\paragraph{Datalog \CP implementation.}

We have extended our \raptor implementation 
to generate a trace of events in a format 
that can be processed by Smaragdakis \etal's Datalog \CP implementation~\cite{causally-precedes}, 
which they have shared with us.
The \raptor and Datalog \CP implementations thus analyze identical executions.
Our experiments run the Datalog \CP implementation 
with a bounded window size of 500, 5,000, and 50,000 events;
Smaragdakis \etal\ used a window size of 500 events~\cite{causally-precedes}.

\paragraph{FastTrack implementation.}

RoadRunner implements the state-of-the-art \FTFull \HB analysis~\cite{fasttrack}.
We added counting of events to the \FTFull implementation in order to report event counts,
which also makes \FTFull's performance more directly comparable with \raptor's (which also counts events).

\later{
\paragraph{\GoldCP \HB implementation.}

To compare \GoldCP's performance, we have implemented a variant of \goldCP that
we call \emph{\goldHB} that is based on \goldCP but that \emph{tracks only \HB
ordering}. This analysis is thus similar to
Goldilocks~\cite{goldilocks-pldi-2007}, but implemented in the context of our
\goldCP implementation (\ie, implemented on top of RoadRunner, using the same
data structures and underlying functionality as our \goldCP implementation).

The \HB algorithm is significantly simpler than the \CP algorithm and allows for a
greatly simplified implementation. Aggresive optimizations are possible, but for
the best comparison the \goldHB implementation only differs slightly from the
\goldCP implementation. \goldHB does not need to track \CP or \PCP elements as
they are specific to detecting the \CP relation. Additionally, \goldHB does not
track \sets for lock instances since the recursive nature of \RuleB and
\RuleC (of the \CP relation) is not part of the \HB algorithm. Lastly, locks do
not have multiple instances because the \HB algorithm does not need to be aware
of the total ordering between events. 
}

\later{
\subsection{Performance}

This section compares the performance of our \goldCP implementation,
the FastTrack happens-before analysis implementation~\cite{fasttrack},
and the Datalog \CP implementation~\cite{causally-precedes}.

\paragraph{Online analyses.}

Figure~?? compares the performance of online analyses for \HB and \CP data race detection.
For comparison purposes, the evaluation includes our implementation of \goldHB that is similar to the Goldilocks happens-before
race detector~\cite{goldilocks-pldi-2007}.
\ldots

\paragraph{Datalog \CP scalability.}
\label{Sec:eval:datalog-scalability}

Smaragdakis \etal's Datalog \CP implementation is limited to detecting \races within
a bounded \emph{window} of events~\cite{causally-precedes}.
By default, the Datalog \CP implementation uses a window of 500 events~\cite{causally-precedes}.
Figure~\ref{fig:datalog-windowTimes} shows the execution time of the Datalog \CP implementation
for window sizes from 100 to 50,000 events for the DaCapo \bench{pmd} benchmark.
Note that the y-axis is logarithmic. The figure shows that run time increases by
over two orders of magnitude when the window size increases from 500 to 50,000 events.
For a window size of 50,000 events, the run time is about 3 hours;
for 100,000 events, the run time exceeded a day, so we killed the experiment.
As Section~?? shows,
the Datalog \CP implementation needs a window size of 1,000,000--3,500,000 events
in order to detect the \races in \bench{pmd}.
\mike{The story might be even worse for other benchmarks.
Once you have more benchmarks running successfully through \goldCP,
you can do the same experiment for other benchmarks that report \races.
But this is not the highest priority experiment; for now, the expermient with \bench{pmd} is pretty convincing on its own. :)}

\begin{figure} [ht]
\centering
\includegraphics[width=0.5\textwidth]{figs/logYDatalog}
\caption{Datalog wall clock time on various window sizes for an execution of \bench{pmd} with 3,680,684 events.}
\label{fig:datalog-windowTimes}
\end{figure}
}

\subsection{Coverage and Performance}

This section focuses on answering two key empirical questions.
(1) For real program executions, how many \CP-races does \raptor detect
that \HB detectors cannot detect?
(2) For real program executions, how many \CP-races does \raptor detect that the Datalog \CP implementation
cannot detect due to its bounded analysis window?

Table~\ref{tab:time-race-results} reports execution time and \HB- and \CP-race coverage of \raptor,
compared with \FTFull's time and Datalog \CP's time and \CP-race coverage for various event window sizes.
The \emph{Events} columns report the number of events
(writes, reads, acquires, and releases)
processed by \FTFull
and \raptor (including additional events generated by translation; Section~\ref{sec:implementation}).
\FTFull reports significantly different event counts than \raptor because,
for some of the programs, \raptor is limited to analyzing only events within 2 hours.
\bench{FTPServer} is an exception;
despite running to completion, \raptor reports fewer events than \FTFull,
we believe, because the program executes significantly different numbers of events depending on the execution's timing. 

\notes{
\jake{FTPServer runs on a script that starts the server, 
then a client, then kills the client, then kills the server. 
I did a quick test by changing the sleep times 
for starting the client and killing the server and client
which changes the active threads and event total.
This is my guess for why \FTFull and \Raptor have different active threads and event counts,
but I could do more testing to get a better understanding.
\jake{I did more testing by modifying sleeps and found that as long as the client has time
to generate threads then regardless of the number of active threads the event counts 
do not drop to 63K events. 
I speculate that \raptor reports less races because FTPServer did not finish execution.
In 358s, 63K events were processed but FTPServer was unable to continue execution for some reason.
This would be in line with the other benchmarks that continued processing events until the 2.0h mark
and the event counts were taken from what was processed in 2.0h.}}
}%
Timing-based nondeterminism also leads to different threads counts for \FTFull and \raptor on \bench{FTPServer} and \bench{Jigsaw}.
\notes{
\mike{Also, why does \raptor sometimes have significantly fewer threads, \eg, \bench{avrora} and \bench{eclipse}---is
that because \raptor doesn't get through enough of the program such that it's spawned all of it's threads,
or are the workloads different?
\jake{I came to the same conclusion that fewer threads are spawned because the program
doesn't get through enough of the program.
I checked that the workloads are the same. 
I also tried executing with different RoadRunner settings, such as availableProcessors,
but couldn't get the same behavior that Raptor is reporting.}}
}%
\bench{avrora} and \bench{eclipse} spawn fewer threads in executions with \raptor than with \FTFull
because the \raptor executions do not make enough progress in 2 hours to have spawned all threads.

In our experiments,
the Datalog \CP implementation processes the same execution trace as \raptor,
by processing traces generated by \raptor during its online analysis.
\Raptor reports significantly different event counts for some of the programs 
also evaluated by Smaragdakis \etal~\cite{causally-precedes};
this discrepancy makes sense because, at least in several cases,
we are using different workloads than prior work.


\begin{table*}[t]
\newcommand{\events}[2]{#1K}
\newcommand{\rtime}[2]{#1~s}
\newcommand{\rcounts}[2]{\textbf{#2} (#1)}
\newcommand{\rzero}{\textbf{0}}
\newcommand{\rna}{\textbf{N/A}}
\newcommand\mynotes[1]{}
\small
\smaller
\centering
\begin{tabular}{@{}l@{\;}|@{\;}r@{\;}r@{\;}r|@{\;}r@{\;}r@{\;}r@{\;\;}|r@{\;\;}r@{\;\;}r|@{\;}r@{\;\;}r|@{\;}r@{\;\;}r|@{\;}r@{\;\;}r@{}}
 & \mc{3}{c|@{\;}}{\bf \FTFull~\cite{fasttrack}}  & \mc{6}{c|@{\;}}{\bf \Raptor} & \mc{6}{c@{}}{\bf Datalog \CP~\cite{causally-precedes}} \\
 & & & 							 & \mc{6}{c|@{\;}}{}		    & \mc{2}{c|@{\;}}{\bf w = 500} & \mc{2}{c|@{\;}}{\bf w = 5,000} & \mc{2}{c@{}}{\bf w = 50,000} \\
 & \#Thr & Events & Time & \#Thr & Events & Time & HB\&CP & HB & CP & Time & CP & Time & CP & Time & CP \\\hline
\bench{elevator} 	& 3  & \events{29}{29,497} 			& \rtime{25}{25.35}	& 3  & \events{30}{29,500}		& 28 s 	& \rzero 			& \rzero & \rzero & 38 s & \rzero & 587 s & \rzero & $>$2.0 h & \rna\\
\bench{FTPServer} 	& 12 & \events{743}{742,597} 		& \rtime{55}{55.01} & 13 & \events{63}{62,512}		& 358 s & \rcounts{766}{31} & \rcounts{371}{28} & \rcounts{395}{22} & 117 s 	& \rcounts{27}{6} 	& 1.4 h\mynotes{5,147 s}& \rcounts{440}{22}\mynotes{$\ge$15}& $>$2.0 h 				& \rna\\
\bench{hedc} 		& 7  & \events{7}{6,702} 			& \rtime{4.7}{4.66} & 7  & \events{8}{7,886} 		& 2.5 s & \rcounts{37}{5} 	& \rcounts{20}{4} 	& \rcounts{17}{1} 	& 31 s 		& \rzero 			& 96 s 					& \rcounts{1}{1} 					& 102 s 				& \rcounts{1}{1}\\
\bench{Jigsaw} 		& 70 & \events{1,774}{1,774,391} 	& \rtime{37}{37.19} & 69 & \events{361}{360,687}	& 2.0 h & \rcounts{34}{19} 	& \rcounts{15}{13} 	& \rcounts{19}{6}	& 294 s 	& \rcounts{1}{1}	& $>$2.0 h 				& \rna 								& $>$2.0 h 				& \rna\\
\bench{philo} 		& 6  & \events{$<1$}{607} 			& \rtime{4.3}{4.29}	& 6  & \events{$<$1}{703} 		& 3.1 s & \rzero 			& \rzero 			& \rzero 			& 15 s 		& \rzero 			& 15 s 					& \rzero 							& 15 s 					& \rzero\\
\bench{tsp} 		& 9  & \events{923,477}{923,476,983}& \rtime{11}{10.46} & 9  & \events{164}{164,167}	& 2.0 h & \rcounts{83}{2} 	& \rcounts{1}{1} 	& \rcounts{82}{2} 	& 462 s 	& \rcounts{2}{1} 	& $>$2.0 h 				& \rna 								& $>$2.0 h 				& \rna\\\hline
\bench{avrora} 		& 4  & \events{694,582}{694,582,487}& \rtime{22}{21.77} & 2  & \events{311}{310,830}	& 2.0 h & \rzero 			& \rzero 			& \rzero 			& 161 s 	& \rzero 			& 1.4 h\mynotes{4,989 s}& \rzero 							& $>$2.0 h 				& \rna\\
\bench{eclipse} 	& 18 & \events{13,131}{13,130,942}	& \rtime{29}{29.05}	& 4  & \events{321}{320,510}	& 2.0 h & \rcounts{2}{1} 	& \rcounts{2}{1} 	& \rzero 			& 392 s 	& \rzero 			& $>$2.0 h 				& \rna 								& $>$2.0 h 				& \rna\\
\bench{jython} 		& 2  & \events{79,042}{79,041,786}	& \rtime{37}{36.59} & 2  & \events{413}{412,616}	& 2.0 h & \rcounts{3}{1} 	& \rzero 			& \rcounts{3}{1} 	& 278 s 	& \rzero 			& $>$2.0 h 				& \rna 								& $>$2.0 h 				& \rna\\
\bench{pmd} 		& 2  & \events{3,681}{3,680,644} 	& \rtime{12}{11.56} & 2  & \events{1,805}{1,804,994}& 2.0 h & \rzero 			& \rzero 			& \rzero 			& 188 s 	& \rzero 			& 1.6 h\mynotes{5,771 s}& \rzero 							& $>$2.0 h 				& \rna\\
\bench{tomcat} 		& 1  & \events{81}{81,262}			& \rtime{3.0}{3.0} 	& 1  & \events{87}{86,936} 		& 65 s 	& \rzero 			& \rzero 			& \rzero 			& 42 s 		& \rzero 			& 296 s 				& \rzero 							& 1.6 h\mynotes{5,896 s}& \rzero\\
\bench{xalan} 		& 8  & \events{112,753}{112,753,201}& \rtime{16}{16.31}	& 8  & \events{2,639}{2,638,597}& 2.0 h	& \rcounts{47}{5} 	& \rcounts{17}{3} 	& \rcounts{30}{5} 	& 0.28 h	& \rcounts{11}{5} 	& 0.48 h\mynotes{1,721 s}& \rcounts{11}{5}					& $>$2.0 h 				& \rna 
\end{tabular}

\caption{Dynamic execution characteristics and detected \HB- and \races reported for the evaluated programs.
\emph{Events} is the number of program events processed by \FTFull and \raptor,
and \emph{\#Thr} is the execution's thread count.
\emph{Time} is wall clock execution time, reported in seconds or hours and rounded to two significant figures or the nearest integer if $\ge$100.
For each race column (\emph{HB\&CP}, \emph{HB}, \emph{CP}), the first number is distinct static races,
and the second number (in parentheses) is dynamic races.
Each run of \FTFull, \Raptor, and Datalog \CP is limited to 2 hours;
\emph{N/A} columns indicate that Datalog \CP did not finish in 2 hours and thus did not report any \races.}
\label{tab:time-race-results}
\end{table*}

\paragraph{Run-time performance.}
For \FTFull, the \emph{Time} column reports execution time for optimized \HB analysis.
For \raptor, the \emph{Time} column reports execution time for \raptor
to perform its analysis
(which dominates execution time)
while also generating a trace of events processed by the Datalog \CP implementation.
We execute \raptor on each program until it either terminates normally or has executed for two hours.
For Datalog \CP, the \emph{Time} column reports the cost of executing Datalog \CP 
on the trace generated by the same \raptor execution.

\paragraph{Race coverage.}
The \emph{HB\&CP} column reports all races detected by \raptor;
the \raptor implementation reports all \CP-races and identifies whether it is also an \HB-race.
The \emph{HB} column reports \HB-races (which are always also \CP-races),
and the \emph{CP} column reports \emph{\CP-only} races, which are
\CP-races that are not also \HB-races.
For each race column, the first number is \emph{distinct static races};
a static race is an unordered pair of static source locations 
(each source location is a source method and bytecode index).
The second number (in parentheses) is \emph{dynamic} races reported.
\notes{
\jake{The static counts allow for the same static race to be both a \HB-race and \CP-only race.
The static counts are calculated from the dynamic counts
but does not remove \CP-only static races that are also \HB static races.
So the counts are very literal.}
}%

Datalog \CP, which reports \CP-only races by filtering out \CP-races that are also \HB-races,
reports fewer \CP-only races
than \raptor because of its bounded event window sizes of 500, 5,000, and 50,000 events.
Datalog \CP does not report any \CP-only races if we terminate it after 2 hours (\emph{N/A}).
With a larger window size (5,000 or 50,000 events), 
Datalog \CP sometimes finds more races than at smaller window sizes,
but it often cannot complete within 2 hours.

By comparing dynamic event identifiers reported by Datalog \CP with those reported by \raptor,
we have verified that 
(1) \raptor detects all \CP-only races detected by Datalog \CP, and
(2) Datalog \CP detects each \CP-only race reported by \raptor that we would expect it to find
(\ie, \CP-only races separated by fewer than 500, 5,000, or 50,000 events).

Note that for \bench{FTPServer} with a 5,000-event window, 
Datalog \CP reports more dynamic \CP-only races than \raptor. 
This phenomenon occurs because Datalog \CP reports some races
that \raptor filters out by ordering accesses it detects as racing (Section~\ref{sec:implementation}),
\eg, if \Write{x}{i} \CP-races with \Write{x}{i+2}, 
\raptor will report the \Write{x}{i}--\Write{x}{i+1} and
\Write{x}{i+1}--\Write{x}{i+2} \CP-races, 
but Datalog \CP will also report the \Write{x}{i}--\Write{x}{i+2} race.
Since Datalog \CP reports only dynamic event numbers for \CP-races,
we have not been able to determine whether the extra 45 dynamic \CP-only races 
reported by Datalog \CP represent any additional \emph{distinct static} \CP-only races 
beyond the 22 distinct static \CP-only races reported by \raptor. 
As a further sanity check of \raptor's results,
we have cross-referenced \raptor's reported \emph{\HB}-races 
with results from running RoadRunner's FastTrack implementation~\cite{fasttrack} on the same workloads.

\paragraph{Race characteristics.}
Table~\ref{tab:CPRace-distances} shows the ``event distance'' for the \CP-only races detected by \raptor.
The event distance of a \CP-race is the distance in the observed total order of events between the race's two accesses.
For each static race in \raptor's \emph{\CP} column from Table~\ref{tab:time-race-results},
Table~\ref{tab:CPRace-distances} reports the range of event distances 
for all dynamic instances of that static race.
The table shows that many detected dynamic \CP-only races 
have event distances over 500 events and sometimes over 5,000 and 50,000 events;
\raptor finds these dynamic \CP-only races, but Datalog \CP often does not,
either due to window size or timeout.
For some static \CP-only races, 
every dynamic occurrence has an event distance exceeding 500, 5,000, or 50,000 events,
so Datalog \CP does not detect the static \CP-only race at all, 
corresponding to \CP-only races missed by Datalog \CP in
Table~\ref{tab:time-race-results}.
(A few dynamic \CP-only races have event distances of just less than 500 events, 
but Datalog \CP does not find them with a window size of 500 events 
because additional events outside of the window are required to determine \CP order.)

\begin{table}[t]
\small
\centering
\begin{tabular}{@{}l@{\;}|@{\;\;}l@{}}
 & Event distance range for each static \race\\\hline
\bench{FTPServer} & 8--1,570; 203--1,472; 253--1,426; 293--1,836; 349--852; 459--2,308; 463--2,309; 495--1,586; \\
&                   499--2,175; 525--2,415; 510--1,601; 607--1,959; 613--2,505; 760--1,931; 807--2,921; 968--968; \\
&                   1,023--1,929; 1,397--1,462; 1,453--1,783; 1,736--1,736; 2,063--2,379; 2,079--2,079\\
\bench{Jigsaw} & 186,137--324,861; 186,147--324,859; 186,161--324,860; 325,804--331,405; 322,805--359,773; \\
&				323,020--359,894\\
\bench{hedc} & 2,067--3,774\\
\bench{tsp} & 177--14,303; 939--939\\
\bench{jython} & 157,736--157,739\\
\bench{xalan} & 4--16; 15--32; 15--48; 18--52; 134--149\\
\end{tabular}
\caption{Event distances for detected \races. For each static \CP-only race reported by \raptor, 
the table reports a range of event distances for all dynamic occurrences of the static \CP-only race.}
\label{tab:CPRace-distances}
\end{table}

\medskip
\noindent
In summary, \raptor handles execution traces of 100,000s--1,000,000s of events within two hours,
and it finds \CP-races that the prior \CP analysis cannot find.

\section{Related Work}
\label{sec:related-works}

Sections~\ref{sec:introduction} and \ref{sec:definitions} covered
\hbFull (\HB) analysis~\cite{happens-before,fasttrack}, prior work's \cpFull (\CP) analysis~\cite{causally-precedes},
and online predictive analyses~\cite{wcp,vindicator}.
This section covers other analyses and approaches.

\paragraph{Offline predictive analysis using SMT constraints.}


In contrast with partial-order-based predictive analyses~\cite{causally-precedes,wcp,vindicator},
\emph{SMT-based} approaches encode constraints that explore alternative schedule interleavings
as satisfiability modulo theories (SMT) constraints that can be solved by employing
an SMT solver~\cite{rvpredict-pldi-2014,maximal-causal-models,rdit-oopsla-2016,said-nfm-2011,jpredictor,ipa}.
SMT-based approaches generate constraints that are superlinear in execution size,
and they cannot scale beyond bounded windows of execution 
(\eg, 500--10,000 events).

\paragraph{Dynamic non-predictive analysis.}
\label{Sec:related:goldilocks}
\label{Sec:related:lockset-analysis}

\HbFull analysis~\cite{fasttrack,multirace,goldilocks-pldi-2007} and
other sound non-predictive analyses~\cite{frost,valor-oopsla-2015,ifrit} 
cannot predict data races in \emph{other} executions.

As Section~\ref{Sec:overview:locksets} mentioned, 
the \emph{HB} components of \raptor are similar
to prior work's \emph{Goldilocks} analysis~\cite{goldilocks-pldi-2007}.
\PCP sets, which are unique to \raptor, incur significant algorithmic complexity compared with Goldilocks.

Goldilocks and \raptor use per-variable ``locksets'' or \sets, respectively, 
to track the \HB relation (and in \raptor's case, the \CP relation) soundly and completely~\cite{goldilocks-pldi-2007}.
In contrast, \emph{lockset analysis} is a different kind of analysis 
that detects violations of a locking discipline that requires
conflicting memory accesses to hold a common lock~\cite{dinning-schonberg, ocallahan-hybrid-racedet-2003, eraser, object-racedet-2001,
choi-racedet-2002, racedet-escape-nishiyama-2004}.
Although lockset analysis can predict data races in other executions,
it is inherently unsound (reports false positives)
since not all violations of a locking discipline are data races 
(\eg, accesses ordered by fork, join, or notify--wait synchronization).
Prior work hybridizes happens-before and lockset analyses for 
performance or accuracy reasons~\cite{ocallahan-hybrid-racedet-2003,racetrack,multirace},
but it inherently cannot soundly predict data races in other executions.

\paragraph{Exposing data races.}

Some prior work exposes data races by exploring multiple thread interleaving
schedules~\cite{randomized-scheduler,chess,racageddon,racefuzzer,drfinder,mcr}.
Similarly, other analyses perturb interleavings by pausing threads
to expose data races~\cite{datacollider,racemob,racechaser-caper-cc-2017}.
In contrast, \raptor detects data races that are possible 
in \emph{other} executions from a \emph{single} observed execution (\ie, from a single schedule).
\Raptor and these approaches are potentially complementary
since \raptor can detect more data races than non-predictive analysis for a given execution.

\paragraph{Static analysis.}

Static data race detection analysis can achieve completeness, 
detecting all of a program's
data races~\cite{naik-static-racedet-2006, naik-static-racedet-2007, locksmith, racerx, relay-2007}.
However,
it is inherently unsound (reports false races).
Developers generally avoid unsound analyses
because each reported race---whether true or false---takes substantial time to
investigate~\cite{pacer-2010, literace, benign-races-2007, fasttrack, microsoft-exploratory-survey,billion-lines-later}.

\paragraph{Prioritizing data races.}

Prior work exposes erroneous \emph{behaviors} due to data races,
in order to prioritize races that are demonstrably harmful~\cite{adversarial-memory,
benign-races-2007, relaxer, datacollider, portend-asplos12, racefuzzer, prescient-memory}.
However, \emph{all} data races are problematic because they lead to ill-defined
semantics, as researchers have argued
convincingly~\cite{memory-models-cacm-2010,c++-memory-model-2008,java-memory-model,data-races-are-pure-evil,jmm-broken,out-of-thin-air-MSPC14}.
In any case, prioritizing data races is complementary to our work, 
which tries to detect as many (true) data races as possible.

\section{Conclusion}
\label{sec:conclusion}

\Raptor is the first analysis that computes the \cpFull (\CP) relation online.
We have proved that \CP maintains invariants needed to track \CP soundly and completely.
Our evaluation
shows that \raptor can analyze execution traces of 100,000s--1,000,000s of events within two hours,
allowing it to find \CP-races that the prior \CP analysis cannot find. 

\begin{acks}

We thank Steve Freund for help with using and modifying RoadRunner;
Yannis Smaragdakis for help with Datalog CP and experimental infrastructure;
and Yannis Smaragdakis, Jaeheon Yi, Swarnendu Biswas, and Man Cao for helpful discussions, suggestions, and other feedback.

This material is based upon work supported by the National Science Foundation
under Grants CSR-1218695, CAREER-1253703, CCF-1421612, and XPS-1629126.


\end{acks}





\appendix

\section{Correctness: \Raptor Tracks \CP Soundly and Completely}
\label{sec:correctness}


%
%
%
%
%
%

%

This section proves that \raptor soundly and completely tracks \CP,
by proving Theorem~\ref{thm:invariants}: that \raptor maintains Figure~\ref{Fig:invariants}'s invariants after every event.
For brevity, our proof \emph{only shows that \goldCP maintains the \CPinv}.
For the other invariants, it is comparatively straightforward to see that they hold:

\begin{itemize}

\item \Raptor maintains \HB \sets in much the same way as Goldilocks~\cite{goldilocks-pldi-2007},
which provably computes HB soundly and completely~\cite{goldilocks-tr-with-proofs}.
Thus it is fairly straightforward to see that \raptor maintains the \invPO, \invHB, \invHBindex, and \invHBcriticalsection invariants.

\item The analysis maintains the \CPruleAinv by updating \CP \sets 
as soon as a conflicting write or read executes
(lines~\ref{line:write:forall-locks}--\ref{line:write:endfor-locks} of Algorithm~\ref{alg:write} and
lines~\ref{line:read:RuleA:start}--\ref{line:read:RuleA:end} of Algorithm~\ref{alg:read}).

\item The \PCPconstraintinv holds because \PCP elements of the form \PCPLockPlain{\rhoprime}{m}{j}
only exist during critical sections on \code{m}, by the following argument.
At \Release{m}{i}, Algorithm~\ref{alg:release}
removes all elements \PCPLockPlain{\rhoprime}{m}{j} from every $\PCPLockSetPlain{\rho}{}^+$.
When $\nexists \rhoprime' \mid \PCPLockPlain{\rhoprime'}{m}{j} \in \PCPLockSetPlain{\rho}{}$, 
only Algorithm~\ref{alg:acquire} (for \Acquire{m}{i}) adds elements of the form \PCPLockPlain{\rhoprime}{m}{j} to
$\PCPLockSetPlain{\rho}{}^+$.

\end{itemize}

\noindent
We first prove two lemmas that we will use to prove Theorem~\ref{thm:invariants}:

\begin{lemma}
Let $e$ be any \emph{release} event, \ie, $e = \Release{m}{i}$ by thread \thr{T}.
If Figure~\ref{Fig:invariants}'s invariants hold
before \goldCP's pre-release algorithm (Algorithm~\ref{alg:pre-release}) executes,
then after the pre-release algorithm executes,
the invariants still hold, for the moment in time \emph{before} $e$ executes.

\label{lem:pre-release}
\end{lemma}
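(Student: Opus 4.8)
The plan is to show that Algorithm~\ref{alg:pre-release} preserves every invariant in Figure~\ref{Fig:invariants} when they are evaluated with the ``current event'' still taken to be $e = \Release{m}{i}$, so that each ``$\totalOrder e$'' continues to mean strictly before the release. First I would observe that the pre-release algorithm only \emph{adds} elements, and only to $\CPLockSetPlain{\rho}{}$ and $\PCPLockSetPlain{\rho}{}$; it never modifies any $\POLockSetPlain{\rho}{}$ or $\HBLockSetPlain{\rho}{}$. Hence the \invPO, \invHB, \invHBindex, and \invHBcriticalsection invariants are untouched, and \invCPruleA is preserved since its conclusion $\CPThread{T} \in \CPLockSetPlain{\rho}{}$ is monotone under additions to \CP sets. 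This reduces the lemma to \invCP and \invPCPconstraint. For \invCP, the right-hand side---the set of $\rhoprime$ that $\erho$ is genuinely \CP ordered to in the subtrace before $e$---is independent of the analysis state, and the algorithm only enlarges the left-hand side, so the ``$\supseteq$'' containment follows immediately from the hypothesis; the real work is ``$\subseteq$'', namely that no addition introduces a spurious ordering.

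For the eager addition at line~\ref{line:prerel:trigger-condition}, I would show that its guard already forces the \invCP trigger of $\PCPLockPlain{\rhoprime}{m}{j}$ to hold \emph{before} pre-release runs. Concretely, the guard $\exists l \ge j \mid \thr{T} \in \CPLockSet{m}{l}$ together with the hypothesis applied to the owner $\OwnerElement{m}{l}$ produces an event $e''$ executed by \thr{T} with $\CPOrdered{\Acquire{m}{l}}{e''} \totalOrder e$. Since critical sections on the same lock are \HB ordered we have $\HBOrdered{\Acquire{m}{j}}{\Acquire{m}{l}}$, and $e''$ precedes $\Release{m}{i}$ in \thr{T}'s program order; two applications of \RuleC---first left-composing $\HBOrdered{\Acquire{m}{j}}{\Acquire{m}{l}}$ with $\CPOrdered{\Acquire{m}{l}}{e''}$, then right-composing the result with $\HBOrdered{e''}{\Release{m}{i}}$---give $\CPOrdered{\Acquire{m}{j}}{\Release{m}{i}}$, and \RuleB then yields $\CPOrdered{\Release{m}{j}}{\Acquire{m}{i}}$ with $\Acquire{m}{i} \totalOrder e$. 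This is precisely the \invCP trigger $\exists j' \mid \CPOrdered{\Release{m}{j}}{\Acquire{m}{j'}} \totalOrder e$ (taking $j' = i$), so by hypothesis $\rhoprime$ already lies in the right-hand side; adding it to $\CPLockSetPlain{\rho}{}$ therefore cannot break ``$\subseteq$''.

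For the transfer at line~\ref{line:prerel:add-dependent-transfers}, which inserts $\PCPLockPlain{\rhoprime}{n}{k}$, I would argue that this element can affect the left-hand side only once its own trigger $\CPOrdered{\Release{n}{k}}{\Acquire{n}{j'}} \totalOrder e$ fires. When it does, the hypothesis applied to $\OwnerElement{m}{l}$ (which holds $\PCPThread{T}{n}{k}$ by the inner-loop condition at line~\ref{line:prerel:CCP-transfer-condition}) gives a genuine ordering $\CPOrdered{\Acquire{m}{l}}{e''}$ to some \thr{T}-event $e'' \totalOrder e$; the same $\HBOrdered{\Acquire{m}{j}}{\Acquire{m}{l}}$/\RuleC/\RuleB chain reactivates the trigger of $\PCPLockPlain{\rhoprime}{m}{j}$ and hence places $\rhoprime$ in the right-hand side. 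For \invPCPconstraint, the lock \code{n} of the transferred element is inherited from $\PCPThread{T}{n}{k} \in \PCPLockSet{m}{l}$, so the hypothesis's \invPCPconstraint for $\OwnerElement{m}{l}$ certifies that \code{n} has an ongoing critical section at the point before $e$---exactly what \invPCPconstraint demands for the owner $\rho$.

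The step I expect to be the main obstacle is the index bookkeeping inside these two composition chains: correctly threading the instances ($l \ge j$, the released instance $i$, and the witnessing future acquire $j'$) and repeatedly passing between the two \RuleB dual forms $\CPOrdered{\Acquire{m}{j}}{\Release{m}{i}}$ and $\CPOrdered{\Release{m}{j}}{\Acquire{m}{i}}$, all while keeping every derived \CP ordering anchored at an event \emph{strictly} before $e$ (using $\Acquire{m}{i} \totalOrder e$ rather than $e = \Release{m}{i}$ itself). I would also discharge the arguments uniformly for the special thread-like elements $\xi$ and \xiT{T} that can appear as $\rhoprime$; since $\comp{\rhoprime}{e'}$ treats them like ordinary thread elements, the reasoning above transfers verbatim.
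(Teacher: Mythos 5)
Your proposal is correct and takes essentially the same route as the paper's own proof: one containment of the \CPinv is immediate from monotonicity of the added sets, and for the other you apply the pre-existing invariants to the lock owner \OwnerElement{m}{l} named in each guard and then chain \HB composition, \RuleC, and \RuleB to conclude \CPOrdered{\Release{m}{j}}{\Acquire{m}{i}} with $\Acquire{m}{i} \totalOrder e$, so the relevant trigger (and hence membership in the right-hand side) already held before the algorithm ran---exactly the paper's case analysis for the eager addition and the \CCP transfer, with your explicit treatment of the untouched invariants and of \invPCPconstraint being a harmless addition. The only cosmetic fix needed is the $l = j$ case, where \HBOrdered{\Acquire{m}{j}}{\Acquire{m}{l}} is vacuous since \HB is irreflexive; as in the paper, split that case off and start the chain directly from \CPOrdered{\Acquire{m}{l}}{e''}.
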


\newcommand\rhsSet[1]{\ensuremath{\mathit{RHS}^{#1}}}
\newcommand\lhsSet[1]{\ensuremath{\mathit{LHS}^{#1}}}

\begin{empty}

\setlist[description]{leftmargin=\parindent,labelindent=\parindent,listparindent=\parindent}
\setlist[itemize]{leftmargin=\parindent,labelindent=\parindent}

\begin{proof}

Let $e = \Release{m}{i}$ by thread \thr{T}.
Let $\rho$ be any \set owner.
Let \erho be the event corresponding to $\rho$, \ie,
$\erho = \Write{x}{h}$ if $\rho = \HBLock{x}{h}$, or
$\erho = \Acquire{m}{h}$ if $\rho = \HBLock{m}{h}$.

We define the following abbreviations for the left- and right-hand sides of the \CPinv:

Let $\lhsSet{} = \CPLockSetPlain{\rho}{} \cup \myset{\rhoprime \mid \big( \exists \HBLock{n}{k} \mid \PCPLockPlain{\rhoprime}{n}{k} \in \PCPLockSetPlain{\rho}{} \land \exists l \mid \CPOrdered{\Release{n}{k}}{\Acquire{n}{l}} \totalOrder e\big)}$.

Let $\lhsSet{+} = \CPLockSetPlain{\rho}{}^+ \cup \myset{\rhoprime \mid \big( \exists \HBLock{n}{k} \mid \PCPLockPlain{\rhoprime}{n}{k} \in \PCPLockSetPlain{\rho}{}^+ \land \exists l \mid \CPOrdered{\Release{n}{k}}{\Acquire{n}{l}} \totalOrder e\big)}$.

Let $\rhsSet{} = \myset{\rhoprime \mid \big(\exists e' \mid \comp{\rhoprime}{e'} \land \CPOrdered{\erho}{e'} \totalOrder e\big)}$.



Suppose Figure~\ref{Fig:invariants}'s invariants hold (\ie, $\lhsSet{} = \rhsSet{}$) before the pre-release algorithm executes.

To show $\lhsSet{+} = \rhsSet{}$, we show
$\lhsSet{+} \subseteq \rhsSet{}$
(subset)
and
$\lhsSet{+} \supseteq \rhsSet{}$
(superset)
in turn.


\paragraph{\underline{Subset direction}:}

Let $\rhoprime \in \lhsSet{+}$.

Either $\rhoprime \in \CPLockSetPlain{\rho}{}^+$ or
$\exists \HBLock{n}{k} \mid \PCPLockPlain{\rhoprime}{n}{k} \in \PCPLockSetPlain{\rho}{}^+ \land \exists l \mid \CPOrdered{\Release{n}{k}}{\Acquire{n}{l}} \totalOrder e$
(or both):

  \begin{description}

  \item[Case 1:] $\rhoprime \in \CPLockSetPlain{\rho}{}^+$ 

  If $\rhoprime \in \CPLockSetPlain{\rho}{}$, then $\rhoprime \in \lhsSet{}$, so $\rhoprime \in \rhsSet{}$ because $\lhsSet{} = \rhsSet{}$.

  Otherwise ($\rhoprime \notin \CPLockSetPlain{\rho}{}$), Algorithm~\ref{alg:pre-release} adds $\rhoprime$ to $\CPLockSetPlain{\rho}{}^+$, which can happen only at line~\ref{line:prerel:add-CPSigma}.
  Let $j$ be such that $\PCPLockPlain{\rhoprime}{m}{j} \in \PCPLockSetPlain{\rho}{}$ (line~\ref{line:prerel:CCP-condition})
  and $\CPThread{T} \in \CPLockSet{m}{q} \mid q \ge j$ (line~\ref{line:prerel:trigger-condition}).
  Since $\CPThread{T} \in \CPLockSet{m}{q}$ and $\lhsSet{} = \rhsSet{}$,
  therefore $\thr{T} \in \rhsSet{}$.
  Thus $\exists e' \mid \getThread{e'} = \thr{T} \land \CPOrdered{\Acquire{m}{q}}{e'} \totalOrder e$.
  Since $e' \totalOrder e$ and $\getThread{e'} = \getThread{e}$, \HBOrdered{e'}{e} by the definition of \HB.
  Thus \CPOrdered{\Acquire{m}{q}}{e} by \CP \RuleC and therefore \CPOrdered{\Release{m}{q}}{\Acquire{m}{i}} by \CP \RuleB (recall that $e = \Release{m}{i}$).
  If $q = j$, then \CPOrdered{\Release{m}{j}}{\Acquire{m}{i}}.
  Otherwise, $q > j$ so \HBOrdered{\Acquire{m}{j}}{\Acquire{m}{q}} by the definition of \HB.
  Thus \CPOrdered{\Acquire{m}{j}}{e} and therefore \CPOrdered{\Release{m}{j}}{\Acquire{m}{i}}.
  Since $\PCPLockPlain{\rhoprime}{m}{j} \in \PCPLockSetPlain{\rho}{}$ and \CPOrdered{\Release{m}{j}}{\Acquire{m}{i}},
  $\rhoprime \in \lhsSet{}$ and thus $\rhoprime \in \rhsSet{}$.

  \item[Case 2:] Let $\HBLock{n}{k}$ and $\HBLock{n}{l}$ be such that $\PCPLockPlain{\rhoprime}{n}{k} \in \PCPLockSetPlain{\rho}{}^+ \land
  \CPOrdered{\Release{n}{k}}{\Acquire{n}{l}} \totalOrder e$.
  
  If $\PCPLockPlain{\rhoprime}{n}{k} \in \PCPLockSetPlain{\rho}{}$,
  then $\rhoprime \in \lhsSet{}$, so $\rhoprime \in \rhsSet{}$ because $\lhsSet{} = \rhsSet{}$.

  Otherwise, $\neg \big(\PCPLockPlain{\rhoprime}{n}{k} \in \PCPLockSetPlain{\rho}{} \land
  \CPOrdered{\Release{n}{k}}{\Acquire{n}{l}} \totalOrder e\big)$.
  We know $\CPOrdered{\Release{n}{k}}{\Acquire{n}{l}} \totalOrder e$,
  so therefore $\PCPLockPlain{\rhoprime}{n}{k} \notin \PCPLockSetPlain{\rho}{}$.
  Thus Algorithm~\ref{alg:pre-release} adds \PCPLockPlain{\rhoprime}{n}{k} to $\PCPLockSetPlain{\rho}{}^+$ only at line~\ref{line:prerel:add-dependent-transfers}.
  Let $j$ be such that $\PCPLockPlain{\rhoprime}{m}{j} \in \PCPLockSetPlain{\rho}{}$ (line~\ref{line:prerel:CCP-condition}) and
  $\PCPThread{T}{n}{k} \in \PCPLockSet{m}{q} \mid q \ge j$ (line~\ref{line:prerel:CCP-transfer-condition}).
  Since $\PCPThread{T}{n}{k} \in \PCPLockSet{m}{q}$,
  $\CPOrdered{\Release{n}{k}}{\Acquire{n}{l}} \totalOrder e$, and since $\lhsSet{} = \rhsSet{}$,
  therefore $\thr{T} \in \rhsSet{}$.
  Thus $\exists e' \mid \getThread{e'} = \thr{T} \land \CPOrdered{\Acquire{m}{q}}{e'} \totalOrder e$.
  Since $e' \totalOrder e$ and $\getThread{e'} = \getThread{e}$, \HBOrdered{e'}{e} by the definition of \HB.
  Thus \CPOrdered{\Acquire{m}{q}}{e} by \CP \RuleC and therefore \CPOrdered{\Release{m}{q}}{\Acquire{m}{i}} by \CP \RuleB (recall $e = \Release{m}{i}$).
  If $q = j$, then \CPOrdered{\Release{m}{j}}{\Acquire{m}{i}}.
  Otherwise, $q > j$ so \HBOrdered{\Acquire{m}{j}}{\Acquire{m}{q}} by the definition of \HB.
  Thus \CPOrdered{\Acquire{m}{j}}{e} and therefore \CPOrdered{\Release{m}{j}}{\Acquire{m}{i}}.
  We have thus determined that $\PCPLockPlain{\rhoprime}{m}{j} \in \PCPLockSetPlain{\rho}{} \land \CPOrdered{\Release{m}{j}}{\Acquire{m}{i}} \totalOrder e$, \ie,
  $\rhoprime \in \lhsSet{}$ and thus $\rhoprime \in \rhsSet{}$.

  \end{description}

\paragraph{\underline{Superset direction}:} 

Let $\rhoprime \in \rhsSet{}$.
Since $\lhsSet{} = \rhsSet{}$, $\rhoprime \in \lhsSet{}$.
Since Algorithm~\ref{alg:pre-release} maintains
$\CPLockSetPlain{\rho}{}^+ \supseteq \CPLockSetPlain{\rho}{}$ and
$\PCPLockSetPlain{\rho}{}^+ \supseteq \PCPLockSetPlain{\rho}{}$,
therefore $\rhoprime \in \lhsSet{+}$.
\end{proof}


\noindent
Next we introduce and prove a lemma that helps to show that 
\raptor maintains invariants despite removing \CCP \set elements.
To do so, we define a concept called \emph{\CPdistance}
that measures the number of times that \RuleB and \RuleC must be applied for a \CP ordering
between two critical sections on the same lock ordered by \CP:


\begin{definition*}
The \emph{\CPdistance} of \OwnerElement{m}{j} and \OwnerElement{m}{i}, \dist{m}{i}{j}, 
is defined as follows:
{\small
\[
\dist{m}{i}{j} =
\begin{cases}
\renewcommand{\xxrightarrow}[1]{\raisebox{-2pt}{$\xrightarrow{#1}$}}
0 \textnormal{ \textbf{if} } \exists e, e', i', j' \mid j \le j' < i' \le i \land
\conflicts{e}{e'} \land
\POOrdered{\POOrdered{\Acquire{m}{j'}}{e}}{\Release{m}{j'}} \land
\POOrdered{\POOrdered{\Acquire{m}{i'}}{e'}}{\Release{m}{i'}} \\
1 + \min \, \dist{n}{l}{k} \mid 
      \HBOrdered{\CPOrdered{\HBOrdered{\Acquire{m}{j}}{\Release{n}{k}}}{\Acquire{n}{l}}}{\Release{m}{i}} \textnormal{ \textbf{otherwise}}
\end{cases}
\]
}
\end{definition*}

\begin{lemma}

Let $e$ be any release event, \ie, $e = \Release{m}{i}$ executed by thread \thr{T}.
If there exists $j$ such that \CPOrdered{\Acquire{m}{j}}{e},
and Figure~\ref{Fig:invariants}'s invariants
hold for all execution points up to and including the point immediately before $e$,
then after the pre-release algorithm (Algorithm~\ref{alg:pre-release}) executes
but before the release algorithm (Algorithm~\ref{alg:release}) executes,
one or both of the following hold:
\begin{itemize}
  \item $\exists j' \ge j \mid \thr{T} \in \CPLockSet{m}{j'}$
  \item {\small $\exists j' \ge j \mid \exists \HBLock{n}{k}  \mid \code{n} \ne \code{m} \land
  \PCPLock{T}{n}{k} \in \PCPLockSet{m}{j'} \land \exists l \mid \CPOrdered{\Release{n}{k}}{\Acquire{n}{l}} \totalOrder e \land
  \dist{n}{l}{k} < \dist{m}{i}{j}$}
\end{itemize}

\label{lem:helper}
\end{lemma}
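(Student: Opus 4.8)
The plan is to fix the release event $e = \Release{m}{i}$ executed by \thr{T}, with $\CPOrdered{\Acquire{m}{j}}{e}$, and to argue by case analysis on the two clauses of the \CPdistance definition for $d = \dist{m}{i}{j}$. Throughout I would use Lemma~\ref{lem:pre-release} (the invariants still hold, for the moment before $e$, after the pre-release algorithm runs) and the fact that \raptor never removes elements from \CP \sets, so any eagerly recorded \CP element that appeared at an earlier event is still present at $e$. I would also repeatedly invoke the closure of \CP under composition with \HB (\RuleC) and \RuleB.

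\textbf{Base case} ($d = 0$). Here the definition supplies conflicting accesses $a, a'$ with $\POOrdered{\POOrdered{\Acquire{m}{j'}}{a}}{\Release{m}{j'}}$ and $\POOrdered{\POOrdered{\Acquire{m}{i'}}{a'}}{\Release{m}{i'}}$ for some $j \le j' < i' \le i$. Since bullet~2 is vacuous at distance $0$ (it would require a strictly smaller nonnegative distance), I must establish bullet~1. I would apply the \CPruleAinv with $\rho = \HBLock{m}{j'}$ and the later conflicting access $a'$ (which satisfies $a' \totalOrder e$ since $i' \le i$), obtaining $\CPThread{\getThread{a'}} \in \CPLockSet{m}{j'}$ \emph{eagerly}. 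If $i' = i$ then $\getThread{a'} = \thr{T}$ and bullet~1 is immediate. If $i' < i$, I would propagate this eager element forward along the lock's synchronization chain: at $\Release{m}{i'}$ the release algorithm (line~\ref{line:rel:add-CPlock}) adds $\CPLock{m}$ to $\CPLockSet{m}{j'}$, each subsequent acquire $\Acquire{m}{i''}$ ($i' < i'' \le i$) re-adds the new holder (line~\ref{line:acq:add-CPThread}), and each intervening release re-adds $\CPLock{m}$, so that $\Acquire{m}{i}$ finally adds \thr{T}. These are all eager reads/writes of actual \CP \sets occurring strictly before $e$, giving $\thr{T} \in \CPLockSet{m}{j'}$, which is bullet~1.

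\textbf{Recursive case} ($d > 0$). The definition supplies a minimal witness $(\code{n}, k, l)$ with $\HBOrdered{\Acquire{m}{j}}{\Release{n}{k}}$, $\CPOrdered{\Release{n}{k}}{\Acquire{n}{l}}$, $\HBOrdered{\Acquire{n}{l}}{\Release{m}{i}}$, and $\dist{n}{l}{k} = d-1$. I would first dispose of the degenerate subcase $\code{n} = \code{m}$ by folding the same-lock step into the eager propagation of the base case (a same-lock intermediate composes via \RuleB and \RuleC into an eager \CP element on \code{m}), so that bullet~1 holds. Assuming $\code{n} \ne \code{m}$: if $\thr{T} \in \CPLockSet{m}{j'}$ for some $j' \ge j$ then bullet~1 holds and we are done; otherwise I would establish bullet~2. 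Because $\HBOrdered{\Acquire{m}{j}}{\Release{n}{k}}$, the \invHB and \invHBindex invariants give $\HBLock{n}{k} \in \HBLockSet{m}{j}$, so when $\Acquire{n}{l}$ executed the acquire algorithm (line~\ref{line:acq:add-CCPThread}) recorded a conditional dependence of the form $\PCPLockPlain{\rhoprime}{n}{k}$ in $\PCPLockSet{m}{j}$. I would then follow the \HB chain $\HBOrdered{\Acquire{n}{l}}{\Release{m}{i}}$, using the transfers performed by the acquire algorithm (line~\ref{line:acq:add-transfer-CCPThread}) and by the pre-release algorithm (lines~\ref{line:prerel:CCP-transfer-condition}--\ref{line:prerel:end-CCP-transfer-condition}) whenever an intermediate lock closes, to carry this dependence up to thread \thr{T}, yielding $\PCPLock{T}{n'}{k'} \in \PCPLockSet{m}{j'}$ for a lock $\code{n'}$ whose critical section is still ongoing at $e$ (guaranteed by the \PCPconstraintinv) and with $\CPOrdered{\Release{n'}{k'}}{\Acquire{n'}{l'}} \totalOrder e$. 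Minimality of the chosen witness, together with the fact that each transfer strictly shrinks the relevant \CPdistance, gives $\dist{n'}{l'}{k'} < d$, which is bullet~2.

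\textbf{Main obstacle.} The hard part is the recursive case: precisely tracking the conditional (\CCP) dependence as it is introduced at $\Acquire{n}{l}$ and then \emph{transferred} across intermediate lock releases along the \HB path to $\Release{m}{i}$, while guaranteeing three things simultaneously---that the dependence arrives carrying exactly the releaser thread \thr{T}, that the lock it finally rests on still has an ongoing critical section at $e$ (so the element has not been removed by a release), and that the associated \CPdistance has strictly decreased. The index bookkeeping ($j'$ versus $j$, $k'$ versus $k$) and the $\code{n} = \code{m}$ subcase are the fiddly points, and the strict-distance-decrease accounting is what makes the transfer argument terminate; it is precisely the structure that bullet~2 is designed to expose.
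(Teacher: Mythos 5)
Your base case is essentially the paper's: the \CPruleAinv puts the thread of $\Release{m}{i'}$ into $\CPLockSet{m}{j'}$ eagerly, and propagation of \CP \sets across \HB-ordered events carries this to $\thr{T} \in \CPLockSet{m}{j'}$ at $e$. The genuine gap is in your recursive case, and it is precisely the step you describe as ``using the transfers performed by \ldots\ the pre-release algorithm \ldots\ whenever an intermediate lock closes.'' Those transfers are not bookkeeping that can be assumed to fire: when an intermediate critical section $\OwnerElement{o}{g}$ closes before $e$, line~\ref{line:prerel:add-CPSigma} (resp.\ line~\ref{line:prerel:add-dependent-transfers}) of Algorithm~\ref{alg:pre-release} executes only if the \emph{closing lock's own} \sets contain the right elements---$\thr{T'} \in \CPLockSet{o}{f'}$ (line~\ref{line:prerel:trigger-condition}) or $\PCPThread{T'}{q}{c} \in \PCPLockSet{o}{f'}$ (line~\ref{line:prerel:CCP-transfer-condition})---and establishing that fact is itself an instance of the very lemma being proved, at the earlier release $\Release{o}{g}$, for a \CP ordering of strictly smaller \CPdistance. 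This is exactly why the paper proves the lemma by induction on \CPdistance and applies the inductive hypothesis \emph{at an earlier release event}: it first splits on whether \code{n}'s critical section is still ongoing at $e$ (if so, the \CCP element recorded at $\Acquire{n}{l}$ simply propagates by \HB and bullet~2 holds with \code{n} itself, no induction needed); otherwise it selects, among the closed critical sections carrying the dependence, one $\OwnerElement{o}{g}$ of \emph{minimum} distance, invokes the lemma there, and only then concludes that the transfer into $\CPLockSet{m}{j'}$ or $\PCPLockSet{m}{j'}$ actually happened and propagates to $e$. Your proposal, framed as a case analysis rather than an induction, never sets up this hypothesis, so the heart of the recursive case is unsupported. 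Relatedly, your claim that ``each transfer strictly shrinks the relevant \CPdistance'' is asserted rather than proved: in the paper the bound for bullet~2 comes from the chain $\dist{q}{d}{c} < \dist{o}{g}{f} < \dist{m}{i}{j}$ supplied by the inductive hypothesis together with minimality of \code{o}, plus a further split on where $\Release{q}{d}$ lies relative to $e$ (earlier contradicts minimality; equal forces $\code{q} = \code{m}$ and requires a \emph{second} application of the inductive hypothesis, the paper's Case~2b(i); later yields bullet~2).

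Two smaller inaccuracies. The \invHBindex invariant yields $\HBLock{n}{k'} \in \HBLockSet{m}{j}$ for the minimal $k' \le k$, not for $k$ itself, so the element recorded at $\Acquire{n}{l}$ has the form $\PCPThread{T'}{n}{k'}$; this is harmless but your statement is not what the invariant says. More substantively, your disposal of the $\code{n} = \code{m}$ subcase is wrong as stated: a same-lock intermediate does not ``compose into an eager \CP element'' handled by the base case; it yields a same-lock \CP ordering of strictly smaller distance, i.e., a smaller \emph{recursive} instance---which is precisely the observation showing that a minimal witness can always be chosen with $\code{n} \ne \code{m}$, as the paper asserts and relies on.
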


\noindent
The intuition here is that if \CPOrdered{\Acquire{m}{j}}{e} but
\thr{T} is \emph{not} in some $\CPLockSet{m}{j'}$ ($j' \ge j$),
then there exists some \OwnerElement{n}{k} which has lesser \CPdistance than \OwnerElement{m}{j} such that
\CPOrdered{\Acquire{m}{j}}{e} is \PCP dependent on \CPOrdered{\Acquire{n}{k}}{\Release{n}{l}}
(where \OwnerElement{n}{l} is an ongoing critical section).
As a result, the pre-release algorithm can safely transfer any \PCP ordering dependent on \OwnerElement{m}{j'}
to be dependent on \OwnerElement{n}{k}, preserving invariants.

\begin{empty}

\begin{proof}


Let $e = \Release{m}{i}$ executed by thread \thr{T}.
Let $j$ be such that \CPOrdered{\Acquire{m}{j}}{e}.
Suppose Figure~\ref{Fig:invariants}'s invariants
hold for all execution points up to and including the point immediately before $e$.

We prove the lemma by induction on the \CPdistance \dist{m}{i}{j}.

\underline{Base case}:
If $\dist{m}{i}{j} = 0$, then by the definition of \CPdistance,
$\exists e', e'', i', j' \mid j \le j' < i' \le i \land
\conflicts{e'}{e''} \land
\POOrdered{\POOrdered{\Acquire{m}{j'}}{e'}}{\Release{m}{j'}} \land
\POOrdered{\POOrdered{\Acquire{m}{i'}}{e''}}{\Release{m}{i'}}$.

By the \invCPruleA invariant, $\CPThread{T'} \in \CPLockSet{m}{j'}$ at \Release{m}{i'}, where $\thr{T'} = \getThread{\Release{m}{i'}}$.
Because \EqHBOrdered{\Release{m}{i'}}{\Release{m}{i}} and the algorithms propogate \HB-ordered events through \CP \sets,
$\thr{T} \in \CPLockSet{m}{j'}$ at $e$.

\underline{Inductive step}:
$\dist{m}{i}{j} > 0$

Suppose the lemma holds true for all \OwnerElement{n}{k} and \OwnerElement{n}{l} such that
$\dist{n}{l}{k} < \dist{m}{i}{j}$.

By the definitions of \CP and \CPdistance,
there exist \OwnerElement{n}{k} and \OwnerElement{n}{l} such that
\HBOrdered{\CPOrdered{\HBOrdered{\Acquire{m}{j}}{\Release{n}{k}}}{\allowbreak\Acquire{n}{l}}}{\Release{m}{i}},
$\dist{m}{i}{j} = 1 + \dist{n}{l}{k}$, and $\code{n} \ne \code{m}$.
Either $\Release{m}{i} \totalOrder \Release{n}{l}$ or $\Release{n}{l} \totalOrder \Release{m}{i}$:

\begin{description}

\item [Case 1:] $\Release{m}{i} \totalOrder \Release{n}{l}$

Since \HBOrdered{\Acquire{m}{j}}{\Release{n}{k}},
at \Acquire{n}{l},
there exists $k' \le k$ such that
Algorithm~\ref{alg:acquire} adds \PCPLock{T'}{n}{k'} to \PCPLockSet{m}{j},
where $\thr{T'} = \getThread{\Acquire{n}{l}}$.
Since \HBOrdered{\Acquire{n}{l}}{\Release{m}{i}}, and the algorithms propagate
\PCP through \HB-ordered events,
at $e = \Release{m}{i}$,
$\PCPLock{T}{n}{k'} \in \PCPLockSet{m}{j}$.

\item [Case 2:] $\Release{n}{l} \totalOrder \Release{m}{i}$

To prove this case, we are interested in the release of a lock like \code{n} with ``minimum distance.''
Let \OwnerElement{o}{g} and \OwnerElement{o}{f} be such that

  \begin{itemize}
  \item $\Release{o}{g} \totalOrder \Release{m}{i}$;
  \item 
there exists $f' \ge f$ and $\rhoprime$ such that, at \Release{o}{g}, \\
$\exists j' \ge j \mid \PCPLockPlain{\rhoprime}{o}{f'} \in \PCPLockSet{m}{j'} \land
\exists e' \mid \CPOrdered{\Acquire{m}{j}}{e'} \land
\comp{\rhoprime}{e'} \land \HBOrdered{e'}{e}$; and
  \item $\dist{o}{g}{f}$ is as minimal as possible.
  \end{itemize}

\noindent
Note that \code{o} may be \code{n}, or else \code{o} is some other ``lower-distance'' lock,
and thus $\dist{o}{g}{f} < \dist{m}{i}{j}$ in any case.
It is possible that $\code{o} = \code{m}$,
in which case $f < j$ and $g < i$.

Since $\dist{o}{g}{f} < \dist{m}{i}{j}$,
by the inductive hypothesis,
at \Release{o}{g} (after the pre-release algorithm but before the release algorithm),
there exists $f' \ge f$ such that either
\begin{itemize}

\item $\thr{T'} \in \CPLockSet{o}{f'}$ or

\item $\exists \OwnerElement{q}{c} \mid \code{q} \ne \code{o} \land
\PCPLock{T'}{q}{c} \in \PCPLockSet{o}{f'} \land
\exists \OwnerElement{q}{d} \mid \CPOrdered{\Release{q}{c}}{\Acquire{q}{d}} \totalOrder \Release{o}{g} \land\\
\dist{q}{d}{c} < \dist{o}{g}{f}$

\end{itemize}

\noindent
where \thr{T'} = $\getThread{\Release{o}{g}}$.

  \begin{description}
  
  \item[Case 2a:] $\thr{T'} \in \CPLockSet{o}{f'}$

  At \Release{o}{g}, the pre-release algorithm (Algorithm~\ref{alg:pre-release})
  adds \CPThreadPlain{\rhoprime} to \CPLockSet{m}{j'} at line~\ref{line:prerel:add-CPSigma} because
  $\PCPLockPlain{\rhoprime}{o}{f'} \in \PCPLockSet{m}{j'}$ and
  $\CPThread{T'} \in \CPLockSet{o}{f'}$ (matching \lines{\ref{line:prerel:CCP-condition}}{\ref{line:prerel:trigger-condition}}).
  Since $\comp{\rhoprime}{e'} \land \HBOrdered{e'}{e}$,
  and the algorithms propagate \CP through \HB-ordered events,
  at $e = \Release{m}{i}$,
  $\CPThread{T} \in \CPLockSet{m}{j'}$.

%

  \item[Case 2b:] Let \OwnerElement{q}{c} and \OwnerElement{q}{d} be such that, at \Release{o}{g}, $\code{q} \ne \code{o} \land
  \PCPLock{T'}{q}{c} \in \PCPLockSet{o}{f'} \land
  \CPOrdered{\Release{q}{c}}{\Acquire{q}{d}} \totalOrder \Release{o}{g} \land
  \dist{q}{d}{c} < \dist{o}{g}{f}$.

  Thus $\Acquire{q}{d} \totalOrder \Release{o}{g} \totalOrder \Release{q}{d}$.

  If $\Release{q}{d} \totalOrder \Release{m}{i}$,
  then that would violate the stipulation above that $\dist{o}{g}{f}$ is minimal.
  Thus either $\Release{m}{i} = \Release{q}{d}$ or $\Release{m}{i} \totalOrder \Release{q}{d}$.

    \begin{description}

    \item [Case 2b(i):] $\Release{m}{i} = \Release{q}{d}$
    
    Thus $\code{m} = \code{q}$ and $i = d$, but
    $c < j$ since $\dist{m}{i}{c} = \dist{q}{d}{c} < \dist{m}{i}{j}$.
    
    At \Release{o}{g}, the pre-release algorithm (Algorithm~\ref{alg:pre-release})
    adds \CCPLockPlain{\rhoprime}{m}{c} to \CCPLockSet{m}{j'} at line~\ref{line:prerel:add-dependent-transfers} because
    $\CCPLockPlain{\rhoprime}{o}{f'} \in \CCPLockSet{m}{j'}$ (matching line~\ref{line:prerel:CCP-condition}) and
    $\CCPThread{T'}{m}{c} \in \CCPLockSet{o}{f'}$ (matching line~\ref{line:prerel:CCP-transfer-condition}).
    Since $\comp{\rhoprime}{e'} \land \HBOrdered{e'}{e}$,
    and the algorithms propagate \CCP through \HB-ordered events,
    at $e = \Release{m}{i}$,
    $\CCPThread{T}{m}{c'} \in \CCPLockSet{m}{j'}$.
    
    By the inductive hypothesis on \OwnerElement{m}{i} and \OwnerElement{m}{c},
    since $\dist{m}{i}{c} < \dist{m}{i}{j}$,
    there exists $c' \ge c$ such that, at \Release{m}{i}, one or both of the following hold:

      \begin{itemize}

      \item $\thr{T} \in \CPLockSet{m}{c'}$

      The pre-release algorithm (Algorithm~\ref{alg:pre-release}) adds
      \CPThread{T} to \CPLockSet{m}{j'} at line~\ref{line:prerel:add-CPSigma} because
      $\CCPThread{T}{m}{c} \in \CCPLockSet{m}{j'}$ (matching line~\ref{line:prerel:CCP-condition}) and
      $\CPThread{T} \in \CPLockSet{m}{c'}$ (matching line~\ref{line:prerel:trigger-condition}).

      \item $\exists \HBLock{r}{a} \mid \code{r} \ne \code{m} \land
      \CCPLock{T}{r}{a} \in \PCPLockSet{m}{c'} \land
      \exists b \mid \CPOrdered{\Release{r}{a}}{\Acquire{r}{b}} \totalOrder \Release{m}{i} \land\\
      \dist{r}{b}{a} < \dist{m}{i}{c}$

      The pre-release algorithm (Algorithm~\ref{alg:pre-release})
      adds \CCPLock{T}{r}{a} to \CCPLockSet{m}{j'} at line~\ref{line:prerel:add-dependent-transfers} because
      $\CCPLock{T}{m}{c'} \in \CCPLockSet{m}{j'}$ (matching line~\ref{line:prerel:CCP-condition}) and
      $\CCPLock{T}{r}{a} \in \CCPLockSet{m}{c'}$ (matching line~\ref{line:prerel:CCP-transfer-condition}).

      \end{itemize}

    \item [Case 2b(ii):] $\Release{m}{i} \totalOrder \Release{q}{d}$
     
    Note that $\code{m} \ne \code{q}$ because $\Acquire{q}{d} \totalOrder \Release{o}{g} \totalOrder \Release{m}{i} \totalOrder \Release{q}{d}$.

    At \Release{o}{g}, the pre-release algorithm (Algorithm~\ref{alg:pre-release})
    adds \CCPLockPlain{\rhoprime}{q}{c} to \CCPLockSet{m}{j'} at line~\ref{line:prerel:add-dependent-transfers} because
    $\CCPLockPlain{\rhoprime}{o}{f'} \in \CCPLockSet{m}{j'}$ (matching line~\ref{line:prerel:CCP-condition}) and
    $\CCPThread{T'}{q}{c} \in \CCPLockSet{o}{f'}$ (matching line~\ref{line:prerel:CCP-transfer-condition}).
    Since $\comp{\rhoprime}{e'} \land \HBOrdered{e'}{e}$,
    and the algorithms propagate \PCP through \HB-ordered events,
    at $e = \Release{m}{i}$,
    $\CCPThread{T}{q}{c'} \in \CCPLockSet{m}{j'}$.
    
    \end{description}

  \end{description}

\end{description}

\noindent
Thus, the lemma's statement holds for each case.
\end{proof}

\end{empty}

\noindent
We are ready to prove Theorem~\ref{thm:invariants}, which states that after every event, \goldCP
maintains the invariants in Figure~\ref{Fig:invariants}.

%


\begin{proof}
By induction on the observed total order of events (\totalOrder).

\paragraph{\underline{Base case}:}

Let $e$ be a first ``no-op'' event in \totalOrder that precedes all program events and has no effect on analysis state.
Before and thus after $e$,
all \CP and \CCP sets are empty, so the LHS of the \CPinv is $\emptyset$.
The RHS of the \CPinv is $\emptyset$ because there is no earlier event $e' \totalOrder e$.

%

\paragraph{\underline{Inductive step}:}

Suppose the invariants in Figure~\ref{Fig:invariants} hold immediately before an event $e$ executed by thread \thr{T}.
Let $e^+$ be the event immediately after $e$ in the observed total order ($\totalOrder$).
Let $\rho$ be any \set owner.
Let \erho be the event corresponding to $\rho$, \ie,
$\erho = \Write{x}{h}$ if $\rho = \OwnerElement{x}{h}$, 
$\erho = \Read{x}{h}{T}$ if $\rho = \OwnerElementRead{x}{h}{T}$, or
$\erho = \Acquire{m}{h}$ if $\rho = \OwnerElement{m}{h}$.

We define the following abbreviations for the left- and right-hand sides of the \CPinv:

Let $\lhsSet{} = \CPLockSetPlain{\rho}{} \cup \myset{\rhoprime \mid \big( \exists \HBLock{n}{k} \mid \PCPLock{\rhoprime}{n}{k} \in \PCPLockSetPlain{\rho}{} \land
\exists j \mid \CPOrdered{\Release{n}{k}}{\Acquire{n}{j}} \totalOrder e\big)}$.

Let $\lhsSet{+} = \CPLockSetPlain{\rho}{}^+ \cup \myset{\rhoprime \mid \big( \exists \HBLock{n}{k} \mid \PCPLock{\rhoprime}{n}{k} \in \PCPLockSetPlain{\rho}{}^+ \land
\exists j \mid \CPOrdered{\Release{n}{k}}{\Acquire{n}{j}} \totalOrder e^+\big)}$.

Let $\rhsSet{} = \myset{\rhoprime \mid \big(\exists e' \mid \comp{\rhoprime}{e'} \land \CPOrdered{\erho}{e'} \totalOrder e\big)}$.

Let $\rhsSet{+} = \myset{\rhoprime \mid \big(\exists e' \mid \comp{\rhoprime}{e'} \land \CPOrdered{\erho}{e'} \totalOrder e^+\big)}$.

Inductive hypothesis: Suppose Figure~\ref{Fig:invariants}'s invariants hold before $e$, \ie,
$\lhsSet{} = \rhsSet{}$.

If $e$ is a release,
we consider the analysis state \emph{before} $e$ to be the state \emph{after} the pre-release algorithm (Algorithm~\ref{alg:pre-release}) has executed for $e$,
\ie, let \lhsSet{} reflect the state \emph{after} the pre-release algorithm has executed.
By Lemma~\ref{lem:pre-release}, Figure~\ref{Fig:invariants}'s invariants (\ie, \lhsSet{} = \rhsSet{}) hold at this point.

To show $\lhsSet{+} = \rhsSet{+}$, we show
$\lhsSet{+} \subseteq \rhsSet{+}$ (subset) and
$\lhsSet{+} \supseteq \rhsSet{+}$ (superset) in turn.

\paragraph{\underline{Subset direction}:}

Let $\rhoprime \in \lhsSet{+}$.

Either $\rhoprime \in \CPLockSetPlain{\rho}{}^+$ or 
$\exists \HBLock{n}{k} \mid \PCPLockPlain{\rhoprime}{n}{k} \in \PCPLockSetPlain{\rho}{}^+ \land \exists j \mid \CPOrdered{\Release{n}{k}}{\Acquire{n}{j}} \totalOrder e^+$ 
(or both):

	\begin{description}
	
	\item[Case 1:] $\rhoprime \in \CPLockSetPlain{\rho}{}^+$
	
	If $\rhoprime \in \CPLockSetPlain{\rho}{}$, then by the inductive hypothesis, $\rhoprime \in \rhsSet{}$, \ie, $\exists e' \mid \comp{\rhoprime}{e'} \land \CPOrdered{\erho}{e'} \totalOrder e$.
	Since $e \totalOrder e^+$, $\exists e' \mid \CPOrdered{\erho}{e'} \totalOrder e^+$, so $\rhoprime \in \rhsSet{+}$.

	Otherwise ($\rhoprime \notin \CPLockSetPlain{\rho}{}$), the analysis adds $\rhoprime$ to $\CPLockSetPlain{\rho}{}^+$:
	
	  \begin{description}

	  \item[Case 1a:] $e = \Write{x}{i}$.
	  Algorithm~\ref{alg:write} adds $\rhoprime$ to $\CPLockSetPlain{\rho}{}^+$ 
	  at line~\ref{line:write:CPEdge-established}, \ref{line:write:CPEdge-established:read}, 
	  \ref{line:write:add-existingCP-xi} or \ref{line:write:add-existingCP-xi:read}:
	
		\begin{itemize}
		\item
		Lines~\ref{line:write:CPEdge-established} and \ref{line:write:CPEdge-established:read}
		execute within line~\ref{line:write:forall-locks}'s \textbf{for} loop,
		so let $\code{m} \in \heldBy{T}$.
		Since line~\ref{line:write:CPEdge-established} or \ref{line:write:CPEdge-established:read} executes, 
		the \textbf{if} condition in line~\ref{line:write:LSlock-check} or \ref{line:write:CPEdge-established:read}
		evaluates to true, so let $h < i$ be such that
		(1) $\thr{T} \notin \POLockSet{x}{h}$ or $\thr{T} \notin \POLockSetRead{x}{h}{t}$, and
		(2) let $j$ be such that $\LSLock{m}{j} \in \HBLockSet{x}{h}$ 
		or $\LSLock{m}{j} \in \HBLockSetRead{x}{h}{t}$, respectively.
		Lines~\ref{line:write:CPEdge-established} and \ref{line:write:CPEdge-established:read}
		add \CPThread{T} to $\CPLockSet{m}{j}^+$,
		so $\rho = \OwnerElement{m}{j}$ and $\rhoprime = \thr{T}$.

		By the inductive hypothesis on $\LSLock{m}{j} \in \HBLockSet{x}{h}$ 
		or $\LSLock{m}{j} \in \HBLockSetRead{x}{h}{t}$ (\HBcriticalsectioninv),
		$\POOrdered{\POOrdered{\Acquire{m}{j}}{\allowbreak e''}}{\Release{m}{j}} \land e'' \totalOrder e$,
		where $e''$ is \Write{x}{h} or \Read{x}{h}{t}.
		Because $\code{m} \in \heldBy{T}$ and $\thr{T} \notin \POLockSet{x}{h}$
		or $\thr{T} \notin \POLockSetRead{x}{h}{t}$,
		let $k > j$ be such that $\POOrdered{\POOrdered{\Acquire{m}{k}}{e}}{\Release{m}{k}}$.
		By \CP \RuleA, \CPOrdered{\Release{m}{j}}{\Acquire{m}{k}}.
		By \CP \RuleC, \CPOrdered{\Acquire{m}{j}}{e}.
		Since $\getThread{e} = \thr{T} \land \CPOrdered{\Acquire{m}{j}}{e} \totalOrder e^+$,
		therefore $\thr{T} \in \rhsSet{+}$.

		\item Line~\ref{line:write:add-existingCP-xi} adds $\xi$ to $\CPLockSetPlain{\rho}{}^+$,
		so $\rho = \OwnerElement{x}{i-1}$, $\erho = \Write{x}{i-1}$, $\rhoprime = \xi$, and
		$\CPThread{T} \in \CPLockSetPlain{\rho}{}$ (line~\ref{line:write:CPEdge-Exists}).
		By the inductive hypothesis (\CPinv), let $e'$ be such that $\getThread{e'} = \thr{T} \land \CPOrdered{ \Write{x}{i-1}}{e'} \totalOrder e$.
		By \CP \RuleC, since $\getThread{e'} = \thr{T} = \getThread{e}$,
		$\CPOrdered{\Write{x}{i-1}}{e} \totalOrder e^+$.
		Note that $e = \Write{x}{i}$ is \exi from Figure~\ref{Fig:invariants}.
		Therefore $\xi \in \rhsSet{+}$.
		
		\item Line~\ref{line:write:add-existingCP-xi:read} adds $\xi$ to $\CPLockSetPlain{\rho}{}^+$,
		so $\rho = \OwnerElementRead{x}{i}{t}$, $\erho = \Read{x}{i}{t}$, $\rhoprime = \xi$, and
		$\CPThread{T} \in \CPLockSetPlain{\rho}{}$ (line~\ref{line:write:CPEdge-Exists:read}).
		By the inductive hypothesis (\CPinv), let $e'$ be such that 
		$\getThread{e'} = \thr{T} \land \CPOrdered{ \Read{x}{i}{t}}{e'} \totalOrder e$.
		By \CP \RuleC, since $\getThread{e'} = \thr{T} = \getThread{e}$,
		$\CPOrdered{\Read{x}{i}{t}}{e} \totalOrder e^+$.
		Note that $e = \Write{x}{i}$ is \exi from Figure~\ref{Fig:invariants}.
		Therefore $\xi \in \rhsSet{+}$.  
	
		\end{itemize}
		
	  \item[Case 1b:] $e = \Read{x}{i}{T}$.
	  Algorithm~\ref{alg:read} adds $\rhoprime$ to $\CPLockSetPlain{\rho}{}^+$ at line~\ref{line:read:RuleA:add} or \ref{line:read:establishCP}:
	  	
		\begin{itemize}
	  	  \item Line~\ref{line:read:RuleA:add} executes within line~\ref{line:read:RuleA:start}'s \textbf{for} loop,
	  	  so let $\code{m} \in \heldBy{T}$. Since line~\ref{line:read:RuleA:add} executes,
	  	  its \textbf{if} condition evaluates to true, so let $h < i$ be such that 
	  	  (1) $\thr{T} \notin \POLockSet{x}{h}$, and
	  	  (2) let $j$ be such that $\LSLock{m}{j} \in \HBLockSet{x}{h}$.
	  	  Line~\ref{line:read:RuleA:add} adds \thr{T} to $\CPLockSet{m}{j}^+$,
	  	  so $\rho = \OwnerElement{m}{j}$ and $\rhoprime = \thr{T}$.
	  	  
	  	  By the inductive hypothesis on $\LSLock{m}{j} \in \HBLockSet{x}{h}$ (\HBcriticalsectioninv),
	  	  $\POOrdered{\Acquire{m}{j}}{\POOrdered{\Write{x}{h}}{\Release{m}{j}}} \land \Write{x}{h} \totalOrder e$.
	  	  Because $\code{m} \in \heldBy{T}$ and $\thr{T} \notin \POLockSet{x}{h}$,
	  	  let $k > j$ be such that 
	  	  \POOrdered{\Acquire{m}{k}}{\POOrdered{e}{\Release{m}{k}}}.
	  	  By \CP \RuleA, \CPOrdered{\Release{m}{j}}{\Acquire{m}{k}}.
	  	  By \CP \RuleC, \CPOrdered{\Acquire{m}{j}}{e}.
	  	  Since $\getThread{e} = \thr{T} \land \CPOrdered{\Acquire{m}{j}}{e} \totalOrder e^+$,
	  	  therefore $\thr{T} \in \rhsSet{+}$.
	  	  
	  	  \item Line~\ref{line:read:establishCP} adds \xiT{T} to $\CPLockSetPlain{\rho}{}^+$,
	  	  so $\rho = \OwnerElement{x}{i}$, $\erho = \Write{x}{i}$, $\rhoprime = \xiT{T}$, 
	  	  and $\thr{T} \in \CPLockSetPlain{\rho}{}$ (line~\ref{line:read:establishCP}).
	  	  By the inductive hypothesis (\CPinv),
	  	  let $e'$ be such that $\getThread{e'} = \thr{T} \land \CPOrdered{\Write{x}{i}}{e'} \totalOrder e$.
	  	  By \CP \RuleC, since $\getThread{e'} = \thr{T} = \getThread{e}$,
	  	  $\CPOrdered{\Write{x}{i}}{e} \totalOrder e^+$.
	  	  Note that $e = \Read{x}{i}{T}$ is \exiT{T} from Figure~\ref{Fig:invariants}.
	  	  Therefore $\xiT{T} \in \rhsSet{+}$.
	  	  
	  	\end{itemize}

	  \item[Case 1c:] $e = \Acquire{m}{i}$.
	  Algorithm~\ref{alg:acquire} adds $\rhoprime$ to $\CPLockSetPlain{\rho}{}^+$ only at line~\ref{line:acq:add-CPThread},
		which adds \CPThread{T} to $\CPLockSetPlain{\rho}{}^+$,
		so $\rhoprime = \CPThread{T}$ and $\CPLock{m} \in \CPLockSetPlain{\rho}{}$ (line~\ref{line:acq:CP-condition}).
		By the inductive hypothesis (\CPinv), $\exists j \mid \CPOrdered{\erho}{\Release{m}{j}} \totalOrder e$.
		By \CP \RuleC, \CPOrdered{\erho}{e} (since $e = \Acquire{m}{i}$).
		Because $\getThread{e} = \thr{T} \land \CPOrdered{\erho}{e} \totalOrder e^+$, therefore $\CPThread{T} \in \rhsSet{+}$.
	
	  \item[Case 1d:] $e = \Release{m}{i}$.
	  Algorithm~\ref{alg:release} adds $\rhoprime$ to $\CPLockSetPlain{\rho}{}^+$ only at line~\ref{line:rel:add-CPlock},
		which adds \CPLock{m} to $\CPLockSetPlain{\rho}{}^+$,
		so $\rhoprime = \CPLock{m}$ and $\CPThread{T} \in \CPLockSetPlain{\rho}{}$ (line~\ref{line:rel:CP-condition}).
		By the inductive hypothesis (\CPinv), let $e'$ be such that $\getThread{e'} = \thr{T} \land \CPOrdered{\erho}{e'} \totalOrder e$.
		By the definition of HB, \HBOrdered{e'}{e}, and thus $\CPOrdered{\erho}{e} \totalOrder e^+$ by \CP \RuleC.
		Therefore $\CPLock{m} \in \rhsSet{+}$.
	  
	\end{description}

	\item[Case 2:] Let $\OwnerElement{n}{k}$ and $\OwnerElement{n}{j}$ be such that $\CCPLockPlain{\rhoprime}{n}{k} \in \PCPLockSetPlain{\rho}{}^+ \land \CPOrdered{\Release{n}{k}}{\Acquire{n}{j}} \totalOrder e^+$.
	
	If $\CCPLockPlain{\rhoprime}{n}{k} \in \CCPLockSetPlain{\rho}{} \land \CPOrdered{\Release{n}{k}}{\Acquire{n}{j}} \totalOrder e$, then by the inductive hypothesis (\CPinv), $\rhoprime \in \rhsSet{}$, \ie, $\exists e' \mid \comp{\rhoprime}{e'} \land \CPOrdered{\erho}{e'} \totalOrder e$.
	Since $e \totalOrder e^+$, $\CPOrdered{\erho}{e'} \totalOrder e^+$, and thus $\rhoprime \in \rhsSet{+}$.
	
	Otherwise, $\neg \big(\CCPLockPlain{\rhoprime}{n}{k} \in \CCPLockSetPlain{\rho}{} \land \CPOrdered{\Release{n}{k}}{\Acquire{n}{j}} \totalOrder e\big)$.
	In fact, we can conclude that $\CCPLockPlain{\rhoprime}{n}{k} \notin \CCPLockSetPlain{\rho}{}$ using the following reasoning.
	If $\neg\big(\CPOrdered{\Release{n}{k}}{\Acquire{n}{j}} \totalOrder e\big)$,
	then $\Acquire{n}{j} \not\totalOrder e$ (since \CPOrdered{\Release{n}{k}}{\Acquire{n}{j}}),
	and thus $e = \Acquire{n}{j}$ (since $\Acquire{n}{j} \totalOrder e^+$ and $e$ immediately precedes $e^+$)---in which case
	the inductive hypothesis (\CCPconstraintinv) ensures $\CCPLockPlain{\rhoprime}{n}{k} \notin \CCPLockSetPlain{\rho}{}$.
	Thus the analysis adds \CCPLockPlain{\rhoprime}{n}{k} to $\CCPLockSetPlain{\rho}{}^+$:

	  \begin{description}

	  \item[Case 2a:] $e = \Write{x}{i}$.
		Algorithm~\ref{alg:write} adds \CCPLockPlain{\rhoprime}{n}{k} to $\CCPLockSetPlain{\rho}{}^+$ 
		at line~\ref{line:write:add-conditional-xi} or \ref{line:write:add-conditional-xiRead}.
		
		\begin{itemize}
		  \item Line~\ref{line:write:add-conditional-xi} 
		  adds $\CCPLockPlain{\xi}{n}{k}$ to $\CCPLockSetPlain{\rho}{}^+$,
		  so $\rho = \OwnerElement{x}{i-1}$, $\erho = \Write{x}{i-1}$, $\rhoprime = \xi$, and
		  $\CCPThread{T}{n}{k} \in \CCPLockSet{x}{i-1}$ (line~\ref{line:write:forall-dependent-locks}).
		  By the inductive hypothesis (\CPinv), let $e'$ be such that $\getThread{e'} = \thr{T} \land \CPOrdered{\Write{x}{i-1}}{e'} \totalOrder e$.
		  By the definitions of \HB and \CP, since $\getThread{e'} = \thr{T} = \getThread{e}$,
		  $\CPOrdered{\Write{x}{i-1}}{e} \totalOrder e^+$.
		  Note that $e = \Write{x}{i}$ is \exi from Figure~\ref{Fig:invariants}.
		  Therefore $\xi \in \rhsSet{+}$.
		
		  \item
		  Line~\ref{line:write:add-conditional-xiRead}
		  adds $\CCPLockPlain{\xi}{n}{k}$ to $\CCPLockSetPlain{\rho}{}^+$,
		  so $\rho = \OwnerElementRead{x}{i-1}{t}$, 
		  $\erho = \Read{x}{i-1}{t}$, $\rhoprime = \xi$, and
		  $\CCPThread{T}{n}{k} \in \CCPLockSetRead{x}{i-1}{t}$ (line~\ref{line:write:add-conditional-xiRead}).
		  By the inductive hypothesis (\CPinv),
		  let $e'$ be such that $\getThread{e'} = \thr{T} \land \CPOrdered{\Read{x}{i-1}{t}}{e'} \totalOrder e$.
		  By the definitions of \HB and \CP,
		  since $\getThread{e'} = \thr{T} = \getThread{e}$,
		  $\CPOrdered{\Read{x}{i-1}{t}}{e} \totalOrder e^+$.
		  Note that $e = \Write{x}{i}$ is \exi from Figure~\ref{Fig:invariants}
		  Therefore $\xi \in \rhsSet{+}$.
		  
		\end{itemize}

	  \item[Case 2b:] $e = \Read{x}{i}{T}$.
	  Algorithm~\ref{alg:read} adds \CCPLockPlain{\rhoprime}{n}{k} to $\CCPLockSetPlain{\rho}{}^+$ at 
		line~\ref{line:read:establishCCP}.
	    So $\rho = \OwnerElement{x}{i}$, $\erho = \Write{x}{i}$, $\rhoprime = \xiT{T}$,
	    and $\CCPThread{T}{n}{k} \in \CCPLockSet{x}{i}$ (line~\ref{line:read:establishCCP}).
	    By the inductive hypothesis (\CPinv),
	    let $e'$ be such that 
	    $\getThread{e'} = \thr{T} \land \CPOrdered{\Write{x}{i}}{e'} \totalOrder e$.
	    By the definitions of \HB and \CP, 
	    since $\getThread{e'} = \thr{T} = \getThread{e}$,
	    $\CPOrdered{\Write{x}{i}}{e} \totalOrder e^+$.
	    Note that $e = \Read{x}{i}{T}$ is \exiT{T} from Figure~\ref{Fig:invariants}.
	    Therefore $\xiT{T} \in \rhsSet{+}$.
	  
	  \item[Case 2c:] $e = \Acquire{m}{i}$.
	  Algorithm~\ref{alg:acquire} adds \CCPLockPlain{\rhoprime}{n}{k} to $\CCPLockSetPlain{\rho}{}^+$ at
		line~\ref{line:acq:add-transfer-CCPThread} or \ref{line:acq:add-CCPThread}.
	
		\begin{itemize}
		
		\item If line~\ref{line:acq:add-transfer-CCPThread} adds \PCPLockPlain{\rhoprime}{n}{k} to $\PCPLockSetPlain{\rho}{}^+$,
		then $\rhoprime = \thr{T}$ and
		$\PCPLock{m}{n}{k} \in \PCPLockSetPlain{\rho}{}$ (line~\ref{line:acq:CCP-condition}).
		By the inductive hypothesis (\CPinv), $\exists l \mid \CPOrdered{\erho}{\Release{m}{l}} \totalOrder e$.
		By \CP \RuleC, \CPOrdered{\erho}{e} since $e = \Acquire{m}{i}$.
		Since $\getThread{e} = \thr{T} \land \CPOrdered{\erho}{e} \totalOrder e^+$, therefore $\CPThread{T} \in \rhsSet{+}$.
		
		\item If line~\ref{line:acq:add-CCPThread} adds \PCPLockPlain{\rhoprime}{n}{k} to $\PCPLockSetPlain{\rho}{}^+$,
		then $\HBLock{n}{k} \in \HBLockSetPlain{\rho}{}$
		(recall that $\HBLock{n}{k} \in \HBLockSetPlain{\rho}{}$ if $\LSLock{n}{k} \in \HBLockSetPlain{\rho}{}$)
	    according to line~\ref{line:acq:HB-condition}.
		Furthermore, $\code{n} = \code{m}$ and $\rhoprime = \thr{T}$.
		By the inductive hypothesis (\invHB, \invHBindex, and \invHBcriticalsection invariants), $\HBOrdered{\erho}{\Release{n}{k}} \totalOrder e$.
		Since we know $\CPOrdered{\Release{n}{k}}{\Acquire{n}{j}} \totalOrder e^+$ (from the beginning of Case 2)
		and $e = \Acquire{n}{i}$ (since $\code{n} = \code{m}$),
		therefore \CPOrdered{\erho}{e} by \CP \RuleC (regardless of whether $j = i$ or $j < i$).
		Since $\getThread{e} = \thr{T} \land \CPOrdered{\erho}{e} \totalOrder e^+$,
		$\CPThread{T} \in \rhsSet{+}$.
		
		\end{itemize}

	  \item[Case 2d:] $e = \Release{m}{i}$.
		Algorithm~\ref{alg:release} adds \PCPLockPlain{\rhoprime}{n}{k} to $\PCPLockSetPlain{\rho}{}^+$ only at
		line~\ref{line:rel:add-dependent-locks}, so $\rhoprime = \CPLock{m}$ and
		$\PCPThread{T}{n}{k} \in \PCPLockSetPlain{\rho}{}$ (line~\ref{line:rel:CCP-condition}).
		By the inductive hypothesis (\CPinv),
		$\exists l \mid \CPOrdered{\erho}{\Release{m}{l}} \totalOrder e$ since $\rhoprime = \code{m}$.
		Since \HBOrdered{\Release{m}{l}}{\Acquire{m}{i}}
		and $e = \Release{m}{i}$ and $\getThread{e} = \thr{T}$,
		by \CP \RuleC, $\CPOrdered{\erho}{e} \totalOrder e^+$.
		Therefore $\CPLock{m} \in \rhsSet{+}$.

	  \end{description}

	\end{description}

\paragraph{\underline{Superset direction}:}

Let $\rhoprime \in \rhsSet{+}$. Either $\rhoprime \in \rhsSet{}$ or not:

  \begin{description}

  \item[Case 1:] $\rhoprime \in \rhsSet{}$

  Then $\rhoprime \in \lhsSet{}$ by the inductive hypothesis.
  If $e$ is a write, read, or acquire, then $\rhoprime \in \lhsSet{+}$ because Algorithms~\ref{alg:write}, \ref{alg:read}, and \ref{alg:acquire}
  maintain $\CPLockSetPlain{\rho}{}^+ \supseteq \CPLockSetPlain{\rho}{}$ and $\CCPLockSetPlain{\rho}{}^+ \supseteq \PCPLockSetPlain{\rho}{}$, $\rhoprime \in \lhsSet{+}$.
  (Line~\ref{line:read:resetRead} of Algorithm~\ref{alg:read} overwrites 
  $\CPLockSetRead{x}{i}{\code{T}}^+$ and $\CCPLockSetRead{x}{i}{T}^+$,
  which is correct because, in essence, the latest \Read{x}{i}{T} ``replaces'' any previous \Read{x}{i}{T}.
  Thus for $\rho = \OwnerElement{x}{i}$ or $\rho = \OwnerElementRead{x}{i}{T}$,
  no $\rhoprime \in \rhsSet{}$ exists.)
  But we must consider $e = \Release{m}{i}$ since Algorithm~\ref{alg:release} removes some \PCP elements.
  Since $\rhoprime \in \lhsSet{}$,
  $\rhoprime \in \CPLockSetPlain{\rho}{}$ or $\exists \HBLock{n}{k} \mid \PCPLockPlain{\rhoprime}{n}{k} \in \PCPLockSetPlain{\rho}{} \land
  \exists j \mid \CPOrdered{\Release{n}{k}}{\Acquire{n}{j}} \totalOrder e$:

    \begin{description}

    \item[Case 1a:] If $\rhoprime \in \CPLockSetPlain{\rho}{}$, then $\rhoprime \in \CPLockSetPlain{\rho}{}^+$
    because Algorithm~\ref{alg:release} maintains $\CPLockSetPlain{\rho}{}^+ \supseteq \CPLockSetPlain{\rho}{}$.

    \item[Case 1b:] Let $\HBLock{n}{k}$ be such that $\PCPLockPlain{\rhoprime}{n}{k} \in \PCPLockSetPlain{\rho}{} \land
    \exists j \mid \CPOrdered{\Release{n}{k}}{\Acquire{n}{j}} \totalOrder e$.

    If $\code{n} \ne \code{m}$,
    then Algorithm~\ref{alg:release} does \emph{not} remove $\PCPLockPlain{\rhoprime}{n}{k}$ from $\PCPLockSetPlain{\rho}{}^+$,
    so $\PCPLockPlain{\rhoprime}{n}{k} \in \PCPLockSetPlain{\rho}{}^+$.

    Otherwise, $\code{n} = \code{m}$.
    Since $\exists j \mid \CPOrdered{\Release{n}{k}}{\Acquire{n}{j}} \totalOrder e$ and $e = \Release{m}{i}$,
    therefore \CPOrdered{\Acquire{m}{k}}{\Acquire{m}{i}} by \CP \RuleC and \RuleB.
    By the inductive hypothesis (\CPinv) and Lemma~\ref{lem:helper},
    $\thr{T} \in \CPLockSet{m}{k}$ or
    $\exists \HBLock{o}{l} \mid \code{o} \ne \code{m} \land \PCPThread{T}{o}{l} \in \PCPLockSet{m}{k} \land
    \exists h \mid \CPOrdered{\Release{o}{l}}{\Release{o}{h}} \totalOrder e$.

      \begin{description}

      \item [Case 1b(i):] $\thr{T} \in \CPLockSet{m}{k}$

      In the pre-release algorithm (Algorithm~\ref{alg:pre-release}),
      line~\ref{line:prerel:trigger-condition} evaluated to true for
      $\PCPLockPlain{\rhoprime}{m}{l} \in \PCPLockSetPlain{\rho}{} \mid l \le k$ (line~\ref{line:prerel:CCP-condition}).
      Thus line~\ref{line:prerel:add-CPSigma} executed, so $\rhoprime \in \CPLockSetPlain{\rho}{}$.
      Since the release algorithm (Algorithm~\ref{alg:release}) maintains $\CPLockSetPlain{\rho}{}^+ \supseteq \CPLockSetPlain{\rho}{}$,
      therefore $\rhoprime \in \lhsSet{+}$.

      \item [Case 1b(ii):] Let \HBLock{o}{l} be such that
      $\code{o} \ne \code{m} \land \PCPThread{T}{o}{l} \in \PCPLockSet{m}{k} \land
      \exists h \mid \CPOrdered{\Release{o}{l}}{\Release{o}{h}} \totalOrder e$.

      Then the pre-release algorithm (Algorithm~\ref{alg:pre-release}) 
      executed line~\ref{line:prerel:add-dependent-transfers}
      for $\PCPLockPlain{\rhoprime}{m}{k} \in \PCPLockSetPlain{\rho}{}$ (line~\ref{line:prerel:CCP-condition})
      and $\PCPThread{T}{o}{l} \in \PCPLockSet{m}{k}$ (line~\ref{line:prerel:CCP-transfer-condition}).
      Line~\ref{line:prerel:add-dependent-transfers} added
      \PCPLockPlain{\rhoprime}{o}{l} to $\PCPLockSetPlain{\rho}{}$.
      The release algorithm (Algorithm~\ref{alg:release}) does \emph{not} remove
      \PCPLockPlain{\rhoprime}{o}{l} because $\code{o} \ne \code{m}$ (line~\ref{line:rel:removal}).
      Therefore $\PCPLockPlain{\rhoprime}{o}{l} \in \PCPLockSetPlain{\rho}{}^+$ and thus $\rhoprime \in \lhsSet{+}$.

      \end{description}

    \end{description}


  \item [Case 2:] $\rhoprime \notin \rhsSet{}$

  Thus $\exists  e' \mid \comp{\rhoprime}{e'} \land \CPOrdered{\erho}{e'} \totalOrder e^+$ but
  $\nexists e' \mid \comp{\rhoprime}{e'} \land \CPOrdered{\erho}{e'} \totalOrder e$,  which implies
  $\comp{\rhoprime}{e} \land \CPOrdered{\erho}{e}$.
  $e$ is either a write, read, acquire, or release:

    \begin{description}

    \item [Case 2a:] $e = \Write{x}{i}$

    By the definition of \CP, since $e$ is not an acquire,
    $\exists e' \mid \getThread{e'} = \thr{T} \land \CPOrdered{\erho}{e'} \totalOrder e$,
    \ie, $\thr{T} \in \rhsSet{}$.
    According to the definition of \comp{\rhoprime}{e},
    either (1) $\rhoprime = \thr{T}$ or 
    (2) $\rhoprime = \xi$ (note that $e = \Write{x}{i}$ is \exi from Figure~\ref{Fig:invariants}).
    However, $\rhoprime \notin \rhsSet{}$, so $\rhoprime = \xi$.
    Therefore $\rho = \OwnerElement{x}{i-1}$, $\erho = \Write{x}{i-1}$, and \CPOrdered{\Write{x}{i-1}}{e};
    or $\rho = \OwnerElementRead{x}{i-1}{T}$, $\erho = \Read{x}{i-1}{T}$, and \CPOrdered{\Read{x}{i-1}{T}}{e}.
    By the inductive hypothesis on $\thr{T} \in \rhsSet{}$ (\CPinv),
    $\CPThread{T} \in \CPLockSet{x}{i-1}$,
    $\CPThread{T} \in \CPLockSetRead{x}{i-1}{t}$,
    $\exists \HBLock{n}{k} \mid \CCPLock{T}{n}{k} \in \CCPLockSet{x}{i-1} \land \exists j \mid \CPOrdered{\Release{n}{k}}{\Acquire{n}{j}} \totalOrder e$, or
    $\exists \HBLock{n}{k} \mid \CCPLock{T}{n}{k} \in \CCPLockSetRead{x}{i-1}{t} \land \exists j \mid \CPOrdered{\Release{n}{k}}{\Acquire{n}{j}} \totalOrder e$.

      \begin{description}

      \item[Case 2a(i):] $\CPThread{T} \in \CPLockSet{x}{i-1}$

		Then Algorithm~\ref{alg:write}'s line~\ref{line:write:CPEdge-Exists} evaluates to true,
		so line~\ref{line:write:add-existingCP-xi} adds $\xi$ to $\CPLockSet{x}{i-1}^+$;
		thus $\xi \in \lhsSet{+}$.
		
		\item[Case 2a(ii):] $\CPThread{T} \in \CPLockSetRead{x}{i-1}{t}$
		
		Line~\ref{line:write:CPEdge-Exists:read} evaluates to true,
		so line~\ref{line:write:CPEdge-Exists:read} adds $\xi$ to \CPLockSetRead{x}{i-1}{t};
		thus $\xi \in \lhsSet{+}$.

		\item[Case 2a(iii):] Let \HBLock{n}{k} be such that $\CCPThread{T}{n}{k} \in \CCPLockSet{x}{i-1} \land \exists j \mid \CPOrdered{\Release{n}{k}}{\Acquire{n}{j}} \totalOrder e$.

		Since $\PCPThread{T}{n}{k} \in \PCPLockSet{x}{i-1}$  matches line~\ref{line:write:forall-dependent-locks} in Algorithm~\ref{alg:write},
		line~\ref{line:write:add-conditional-xi} adds \CCPLockPlain{\xi}{n}{j} to $\CCPLockSet{x}{i-1}^+$.
		So $\exists j \mid \CPOrdered{\Release{n}{k}}{\Acquire{n}{j}} \totalOrder e \totalOrder e^+$;
		thus $\xi \in \lhsSet{+}$.
		
		\item[Case 2a(iv):] Let \HBLock{n}{k} be such that $\CCPThread{T}{n}{k} \in \CCPLockSetRead{x}{i-1}{t} \land \exists j \mid \CPOrdered{\Release{n}{k}}{\Acquire{n}{j}} \totalOrder e$.

		Since $\CCPThread{T}{n}{k} \in \CCPLockSetRead{x}{i-1}{t}$  matches line~\ref{line:write:forall-dependent-locksRead} in Algorithm~\ref{alg:write},
		line~\ref{line:write:add-conditional-xiRead} adds \CCPLockPlain{\xi}{n}{j} to $\CCPLockSetRead{x}{i-1}{t}^+$.
		So $\exists j \mid \CPOrdered{\Release{n}{k}}{\Acquire{n}{j}} \totalOrder e \totalOrder e^+$;
		thus $\xi \in \lhsSet{+}$.

		\end{description}

	\item [Case 2b:] $e = \Read{x}{i}{T}$
	
	By the definition of \CP, since $e$ is not an acquire,
	$\exists e' \mid \getThread{e'} = \thr{T} \land \CPOrdered{\erho}{e'} \totalOrder e$,
	\ie, $\thr{T} \in \rhsSet{+}$.
	According to the definition of \comp{\rhoprime}{e}, either
	(1) $\rhoprime = \thr{T}$ or
	(2) $\rhoprime = \xiT{T}$
	(note that $e = \Read{x}{i}{T}$ is \exiT{T} from Figure~\ref{Fig:invariants}).
	However, $\rhoprime \notin \rhsSet{}$, so $\rhoprime = \xiT{T}$.
	Therefore $\rho = \OwnerElement{x}{i}$, $\erho = \Write{x}{i}$, and
	\CPOrdered{\Write{x}{i}}{e}.
	By the inductive hypothesis on $\thr{T} \in \rhsSet{}$ (\CPinv),
	$\thr{T} \in \CPLockSet{x}{i}$ or
	$\exists \OwnerElement{n}{k} \mid \CCPThread{T}{n}{k} \in \CCPLockSet{x}{i} \land \exists j \mid \CPOrdered{\Release{n}{k}}{\Acquire{n}{j}} \totalOrder e$.
	
			\begin{description}
			\item[Case 2b(i):] $\thr{T} \in \CPLockSet{x}{i}$
			
			Then Algorithm~\ref{alg:read}'s line~\ref{line:read:establishCP} evaluates to true,
			so line~\ref{line:read:establishCP} adds \xiT{T} to $\CPLockSet{x}{i}^+$;
			thus $\xiT{T} \in \lhsSet{+}$.
			
			\item[Case 2b(ii):] Let \OwnerElement{n}{k} be such that $\CCPThread{T}{n}{k} \in \CCPLockSet{x}{i} \land \exists j \mid \CPOrdered{\Release{n}{k}}{\Acquire{n}{j}} \totalOrder e$.
			
			Since $\CCPThread{T}{n}{k} \in \CCPLockSet{x}{i}$ matches line~\ref{line:read:establishCCP} in Algorithm~\ref{alg:read},
			line~\ref{line:read:establishCCP} adds \CCPThreadPlain{\xiT{T}}{n}{j} to \CCPLockSet{x}{i}.
			So $\exists j \mid \CPOrdered{\Release{n}{k}}{\Acquire{n}{j}} \totalOrder e \totalOrder e^+$;
			thus $\xiT{T} \in \lhsSet{+}$.
			 
			\end{description}
	
    \item [Case 2c:] $e = \Acquire{m}{i}$

    Thus $\rhoprime = \CPThread{T}$.
    Since $\CPOrdered{\erho}{e}$ and $\nexists e' \mid \getThread{e'} = \thr{T} \land \CPOrdered{\erho}{e'} \totalOrder e$
    (recall $\rhoprime \notin \rhsSet{}$),
    by \CP \RuleC, $\exists l \mid \HBOrdered{\CPOrdered{\erho}{\Release{m}{l}}}{e}$
    or $\exists l \mid \CPOrdered{\HBOrdered{\erho}{\Release{m}{l}}}{e}$.

      \begin{description}

      \item[Case 2c(i):] $\exists l \mid \HBOrdered{\CPOrdered{\erho}{\Release{m}{l}}}{e}$

      By the inductive hypothesis on $\CPOrdered{\erho}{\Release{m}{l}}$ (\CPinv),
      $\CPLock{m} \in \CPLockSetPlain{\rho}{}$ or
      $\exists \HBLock{n}{k} \mid \PCPLock{m}{n}{k} \in \PCPLockSetPlain{\rho}{} \land
      \exists j \mid \CPOrdered{\Release{n}{k}}{\Acquire{n}{j}} \totalOrder e$.

        \begin{itemize}

        \item $\code{m} \in \CPLockSetPlain{\rho}{}$

        Then Algorithm~\ref{alg:acquire}'s line~\ref{line:acq:CP-condition} evaluates to true,
        so line~\ref{line:acq:add-CPThread} adds \CPThread{T} to $\CPLockSetPlain{\rho}{}^+$;
        thus $\CPThread{T} \in \lhsSet{+}$.

        \item Let $\HBLock{n}{k}$ be such that $\PCPLock{m}{n}{k} \in \PCPLockSetPlain{\rho}{} \land
        \exists j \mid \CPOrdered{\Release{n}{k}}{\Acquire{n}{j}} \totalOrder e$.

        Then $\PCPLock{m}{n}{k} \in \PCPLockSetPlain{\rho}{}$ matches line~\ref{line:acq:CCP-condition} in Algorithm~\ref{alg:acquire},
        so line~\ref{line:acq:add-transfer-CCPThread} adds \PCPLock{T}{n}{k} to $\PCPLockSetPlain{\rho}{}^+$.
        So $\exists j \mid \CPOrdered{\Release{n}{k}}{\Acquire{n}{j}} \totalOrder e \totalOrder e^+$;
        thus $\CPThread{T} \in \lhsSet{+}$.
        
        \end{itemize}

      \item[Case 2c(ii):] $\exists l \mid \CPOrdered{\HBOrdered{\erho}{\Release{m}{l}}}{e}$

      Let $j$ be the minimum value such that $\HBOrdered{\erho}{\Release{m}{j}}$.
      By the inductive hypothesis (\invHB, \invHBindex, and \invHBcriticalsection invariants),
      $\HBLock{m}{j} \in \HBLockSetPlain{\rho}{}$
      (recall that $\HBLock{m}{j} \in \HBLockSetPlain{\rho}{}$ if $\LSLock{m}{j} \in \HBLockSetPlain{\rho}{}$).
      Thus line~\ref{line:acq:HB-condition} evaluates to true for $\rho$,
      so line~\ref{line:acq:add-CCPThread} executes and
      $\PCPThread{T}{m}{j} \in \PCPLockSetPlain{\rho}{}^+$.
      Furthermore, $\CPOrdered{\Release{m}{j}}{\Acquire{m}{i}} \totalOrder e^+$ ($e = \Acquire{m}{i}$).
      Therefore $\thr{T} \in \lhsSet{+}$.

      \end{description}

    \item [Case 2d:] $e = \Release{m}{i}$
    
    Since $e$ is not an acquire, by \CP \RuleC,
    $\exists e' \mid \getThread{e'} = \thr{T} \land \CPOrdered{\erho}{e'} \totalOrder e$,
    \ie, $\thr{T} \in \rhsSet{}$.
    According to the definition of \comp{\rhoprime}{e}, either $\rhoprime = \thr{T}$ or $\rhoprime = \code{m}$.
    But $\rhoprime \ne \thr{T}$ since $\rhoprime \notin \rhsSet{}$,
    so $\rhoprime = \code{m}$.
%
    By the inductive hypothesis (\CPinv) on $\thr{T} \in \rhsSet{}$,
    $\thr{T} \in \CPLockSetPlain{\rho}{}$ or
    $\exists \HBLock{n}{k} \mid \PCPThread{T}{n}{k} \in \PCPLockSetPlain{\rho}{} \land
    \exists j \mid \CPOrdered{\Release{n}{k}}{\Acquire{n}{j}} \totalOrder e$.

      \begin{description}
  
      \item[Case 2d(i):] $\thr{T} \in \CPLockSetPlain{\rho}{}$ 

      Then  Algorithm~\ref{alg:release}'s line~\ref{line:rel:CP-condition} evaluates to true,
      so line~\ref{line:rel:add-CPlock} adds \code{m} to $\CPLockSetPlain{\rho}{}^+$; thus $\code{m} \in \lhsSet{+}$.

      \item[Case 2d(ii):] Let \OwnerElement{n}{k} and \OwnerElement{n}{j} be such that
      $\CCPLock{T}{n}{k} \in \CCPLockSetPlain{\rho}{} \land
      \CPOrdered{\Release{n}{k}}{\Acquire{n}{j}} \totalOrder e$.

      If $\code{n} \ne \code{m}$,
      then $\CCPLock{T}{n}{k} \in \CCPLockSetPlain{\rho}{}$ matches line~\ref{line:rel:CCP-condition} in Algorithm~\ref{alg:release},
      so line~\ref{line:rel:add-dependent-locks} adds \CCPLock{m}{n}{k} to $\CCPLockSetPlain{\rho}{}^+$.
      Since $e \totalOrder e^+$, $\PCPLock{m}{n}{k} \in \CCPLockSetPlain{\rho}{} \land \CPOrdered{\Release{n}{k}}{\Acquire{n}{j}} \totalOrder e^+$,
      \ie, $\code{m} \in \lhsSet{+}$.

      Otherwise ($\code{n} = \code{m}$),
      $\CCPLock{m}{m}{k} \notin \CCPLockSetPlain{\rho}{}^+$ because Algorithm~\ref{alg:release} removes 
      $\CCPLock{m}{m}{k}$ from $\CCPLockSetPlain{\rho}{}^+$ at line~\ref{line:rel:removal}.
      Since $\CPOrdered{\Release{m}{k}}{\Acquire{m}{j}} \totalOrder e$,
      $\CPOrdered{\Acquire{m}{k}}{\Acquire{m}{i}} \totalOrder e$ by \CP \RuleC (since $e = \Release{m}{i}$).
      By the inductive hypothesis (\CPinv) and Lemma~\ref{lem:helper},
      $\thr{T} \in \CPLockSet{m}{k}$ or
      $\exists \HBLock{o}{l} \mid \code{o} \ne \code{m} \land \CCPLock{T}{o}{l} \in \CCPLockSet{m}{k} \land
      \exists h \mid \CPOrdered{\Release{o}{l}}{\Acquire{o}{h}} \totalOrder e$.
    
        \begin{itemize}
      
        \item $\thr{T} \in \CPLockSet{m}{k}$

        In the pre-release algorithm (Algorithm~\ref{alg:pre-release}),
        line~\ref{line:prerel:trigger-condition} evaluated to true
        for $\PCPThread{T}{m}{k} \in \CPLockSetPlain{\rho}{}$ (line~\ref{line:prerel:CCP-condition}).
        Thus line~\ref{line:prerel:add-CPSigma} executed, so $\CPThread{T} \in \CPLockSetPlain{\rho}{}$.
        In the \emph{release} algorithm (Algorithm~\ref{alg:release}), line~\ref{line:rel:CP-condition} evaluates
        to true for $\rho$, so
        $\code{m} \in \CPLockSetPlain{\rho}{}^+$ and thus $\code{m} \in \lhsSet{+}$.

        \item Let \OwnerElement{o}{l} and \OwnerElement{o}{h} be such that
        $\PCPLock{T}{o}{l} \in \PCPLockSet{m}{k} \land
        \CPOrdered{\Release{o}{l}}{\Acquire{o}{h}} \totalOrder e$.

        In the pre-release algorithm (Algorithm~\ref{alg:pre-release}),
        line~\ref{line:prerel:add-dependent-transfers} executed for
        $\PCPThread{T}{m}{k} \in \PCPLockSetPlain{\rho}{}$ (line~\ref{line:prerel:CCP-condition})
        and
        $\PCPLock{T}{o}{l} \in \PCPLockSet{m}{k}$ (line~\ref{line:prerel:CCP-transfer-condition}).
        Line~\ref{line:prerel:add-dependent-transfers} added
        \PCPLock{T}{o}{l} to $\PCPLockSetPlain{\rho}{}$.
        In the \emph{release} algorithm (Algorithm~\ref{alg:release}),
        line~\ref{line:rel:add-dependent-locks} executes for
        $\PCPLock{T}{o}{l} \in \PCPLockSetPlain{\rho}{}$ (line~\ref{line:rel:CCP-condition}),
        adding
        \PCPLock{m}{o}{l} to $\PCPLockSetPlain{\rho}{}^+$.
        In addition, $\CPOrdered{\Release{o}{l}}{\Acquire{o}{h}} \totalOrder e^+$ since $e \totalOrder e^+$.
        Therefore $\code{m} \in \lhsSet{+}$.

        \end{itemize}

      \end{description}

    \end{description}

  \end{description}

\noindent
Since $\lhsSet{+} \subseteq \rhsSet{+} \land \lhsSet{+} \supseteq \rhsSet{+}$,
$\lhsSet{+} = \rhsSet{+}$.
\end{proof}

\end{empty}

\bibliographystyle{abbrv}
\bibliography{bib/conf-abbrv,bib/plass}

\providecommand{\noopsort}[1]{}
\begin{thebibliography}{10}

\bibitem{memory-models-cacm-2010}
S.~V. Adve and H.-J. Boehm.
\newblock {Memory Models: A Case for Rethinking Parallel Languages and
  Hardware}.
\newblock {\em CACM}, 53:90--101, 2010.

\bibitem{adve-weak-racedet-1991}
S.~V. Adve, M.~D. Hill, B.~P. Miller, and R.~H.~B. Netzer.
\newblock {Detecting Data Races on Weak Memory Systems}.
\newblock In {\em ISCA}, pages 234--243, 1991.

\bibitem{billion-lines-later}
A.~Bessey, K.~Block, B.~Chelf, A.~Chou, B.~Fulton, S.~Hallem, C.~Henri-Gros,
  A.~Kamsky, S.~McPeak, and D.~Engler.
\newblock {A Few Billion Lines of Code Later: Using Static Analysis to Find
  Bugs in the Real World}.
\newblock {\em CACM}, 53(2):66--75, 2010.

\bibitem{racechaser-caper-cc-2017}
S.~Biswas, M.~Cao, M.~Zhang, M.~D. Bond, and B.~P. Wood.
\newblock {Lightweight Data Race Detection for Production Runs}.
\newblock In {\em CC}, pages 11--21, 2017.

\bibitem{valor-oopsla-2015}
S.~Biswas, M.~Zhang, M.~D. Bond, and B.~Lucia.
\newblock {Valor: Efficient, Software-Only Region Conflict Exceptions}.
\newblock In {\em OOPSLA}, pages 241--259, 2015.

\bibitem{dacapo-benchmarks-conf}
S.~M. Blackburn, R.~Garner, C.~Hoffman, A.~M. Khan, K.~S. McKinley, R.~Bentzur,
  A.~Diwan, D.~Feinberg, D.~Frampton, S.~Z. Guyer, M.~Hirzel, A.~Hosking,
  M.~Jump, H.~Lee, J.~E.~B. Moss, A.~Phansalkar, D.~Stefanovi\'{c},
  T.~{VanDrunen}, D.~von Dincklage, and B.~Wiedermann.
\newblock {The DaCapo Benchmarks: Java Benchmarking Development and Analysis}.
\newblock In {\em OOPSLA}, pages 169--190, 2006.

\bibitem{boehm-miscompile-hotpar-11}
H.-J. Boehm.
\newblock {How to miscompile programs with ``benign'' data races}.
\newblock In {\em HotPar}, 2011.

\bibitem{data-races-are-pure-evil}
H.-J. Boehm.
\newblock {Position paper: Nondeterminism is Unavoidable, but Data Races are
  Pure Evil}.
\newblock In {\em RACES}, pages 9--14, 2012.

\bibitem{c++-memory-model-2008}
H.-J. Boehm and S.~V. Adve.
\newblock {Foundations of the C++ Concurrency Memory Model}.
\newblock In {\em PLDI}, pages 68--78, 2008.

\bibitem{you-dont-know-jack}
H.-J. Boehm and S.~V. Adve.
\newblock {You Don't Know Jack about Shared Variables or Memory Models}.
\newblock {\em CACM}, 55(2):48--54, 2012.

\bibitem{out-of-thin-air-MSPC14}
H.-J. Boehm and B.~Demsky.
\newblock {Outlawing Ghosts: Avoiding Out-of-Thin-Air Results}.
\newblock In {\em MSPC}, pages 7:1--7:6, 2014.

\bibitem{pacer-2010}
M.~D. Bond, K.~E. Coons, and K.~S. McKinley.
\newblock {Pacer: Proportional Detection of Data Races}.
\newblock In {\em PLDI}, pages 255--268, 2010.

\bibitem{randomized-scheduler}
S.~Burckhardt, P.~Kothari, M.~Musuvathi, and S.~Nagarakatte.
\newblock {A Randomized Scheduler with Probabilistic Guarantees of Finding
  Bugs}.
\newblock In {\em ASPLOS}, pages 167--178, 2010.

\bibitem{relaxer}
J.~Burnim, K.~Sen, and C.~Stergiou.
\newblock {Testing Concurrent Programs on Relaxed Memory Models}.
\newblock In {\em ISSTA}, pages 122--132, 2011.

\bibitem{drfinder}
Y.~Cai and L.~Cao.
\newblock {Effective and Precise Dynamic Detection of Hidden Races for Java
  Programs}.
\newblock In {\em ESEC/FSE}, pages 450--461, 2015.

\bibitem{prescient-memory}
M.~Cao, J.~Roemer, A.~Sengupta, and M.~D. Bond.
\newblock {Prescient Memory: Exposing Weak Memory Model Behavior by Looking
  into the Future}.
\newblock In {\em ISMM}, pages 99--110, 2016.

\bibitem{jpredictor}
F.~Chen, T.~F. \c{S}erb\u{a}nu\c{t}\u{a}, and G.~Ro{\c{s}}u.
\newblock {jPredictor: A Predictive Runtime Analysis Tool for Java}.
\newblock In {\em ICSE}, pages 221--230, 2008.

\bibitem{choi-racedet-2002}
J.-D. Choi, K.~Lee, A.~Loginov, R.~O'Callahan, V.~Sarkar, and M.~Sridharan.
\newblock {Efficient and Precise Datarace Detection for Multithreaded
  Object-Oriented Programs}.
\newblock In {\em PLDI}, pages 258--269, 2002.

\bibitem{dinning-schonberg}
A.~Dinning and E.~Schonberg.
\newblock {Detecting Access Anomalies in Programs with Critical Sections}.
\newblock In {\em PADD}, pages 85--96, 1991.

\bibitem{ifrit}
L.~Effinger-Dean, B.~Lucia, L.~Ceze, D.~Grossman, and H.-J. Boehm.
\newblock {IFRit: Interference-Free Regions for Dynamic Data-Race Detection}.
\newblock In {\em OOPSLA}, pages 467--484, 2012.

\bibitem{goldilocks-tr-with-proofs}
T.~Elmas, S.~Qadeer, and S.~Tasiran.
\newblock {Precise Race Detection and Efficient Model Checking Using Locksets}.
\newblock Technical report, Microsoft Research, 2005.

\bibitem{goldilocks-pldi-2007}
T.~Elmas, S.~Qadeer, and S.~Tasiran.
\newblock {Goldilocks: A Race and Transaction-Aware Java Runtime}.
\newblock In {\em PLDI}, pages 245--255, 2007.

\bibitem{racerx}
D.~Engler and K.~Ashcraft.
\newblock {RacerX: Effective, Static Detection of Race Conditions and
  Deadlocks}.
\newblock In {\em SOSP}, pages 237--252, 2003.

\bibitem{datacollider}
J.~Erickson, M.~Musuvathi, S.~Burckhardt, and K.~Olynyk.
\newblock {Effective Data-Race Detection for the Kernel}.
\newblock In {\em OSDI}, pages 1--16, 2010.

\bibitem{racageddon}
M.~Eslamimehr and J.~Palsberg.
\newblock {Race Directed Scheduling of Concurrent Programs}.
\newblock In {\em PPoPP}, pages 301--314, 2014.

\bibitem{fasttrack}
C.~Flanagan and S.~N. Freund.
\newblock {FastTrack: Efficient and Precise Dynamic Race Detection}.
\newblock In {\em PLDI}, pages 121--133, 2009.

\bibitem{adversarial-memory}
C.~Flanagan and S.~N. Freund.
\newblock {Adversarial Memory for Detecting Destructive Races}.
\newblock In {\em PLDI}, pages 244--254, 2010.

\bibitem{roadrunner}
C.~Flanagan and S.~N. Freund.
\newblock {The RoadRunner Dynamic Analysis Framework for Concurrent Programs}.
\newblock In {\em PASTE}, pages 1--8, 2010.

\bibitem{microsoft-exploratory-survey}
P.~Godefroid and N.~Nagappan.
\newblock {Concurrency at Microsoft -- An Exploratory Survey}.
\newblock In {\em EC$^2$}, 2008.

\bibitem{goetz-weak-refs-2005}
B.~Goetz.
\newblock {Plugging memory leaks with weak references}, 2005.
\newblock {\smaller \url{http://www-128.ibm.com/developerworks/ java/ library/
  j-jtp11225/}}.

\bibitem{mcr}
J.~Huang.
\newblock {Stateless Model Checking Concurrent Programs with Maximal Causality
  Reduction}.
\newblock In {\em PLDI}, pages 165--174, 2015.

\bibitem{rvpredict-pldi-2014}
J.~Huang, P.~O. Meredith, and G.~Ro{\c{s}}u.
\newblock {Maximal Sound Predictive Race Detection with Control Flow
  Abstraction}.
\newblock In {\em PLDI}, pages 337--348, 2014.

\bibitem{rdit-oopsla-2016}
J.~Huang and A.~K. Rajagopalan.
\newblock {Precise and Maximal Race Detection from Incomplete Traces}.
\newblock In {\em OOPSLA}, pages 462--476, 2016.

\bibitem{portend-asplos12}
B.~Kasikci, C.~Zamfir, and G.~Candea.
\newblock {Data Races vs. Data Race Bugs: Telling the Difference with Portend}.
\newblock In {\em ASPLOS}, pages 185--198, 2012.

\bibitem{racemob}
B.~Kasikci, C.~Zamfir, and G.~Candea.
\newblock {RaceMob: Crowdsourced Data Race Detection}.
\newblock In {\em SOSP}, pages 406--422, 2013.

\bibitem{portend-toplas15}
B.~Kasikci, C.~Zamfir, and G.~Candea.
\newblock {Automated Classification of Data Races Under Both Strong and Weak
  Memory Models}.
\newblock {\em TOPLAS}, 37(3):8:1--8:44, May 2015.

\bibitem{wcp}
D.~Kini, U.~Mathur, and M.~Viswanathan.
\newblock {Dynamic Race Prediction in Linear Time}.
\newblock In {\em PLDI}, pages 157--170, 2017.

\bibitem{happens-before}
L.~Lamport.
\newblock {Time, Clocks, and the Ordering of Events in a Distributed System}.
\newblock {\em CACM}, 21(7):558--565, 1978.

\bibitem{sequential-consistency}
L.~Lamport.
\newblock {How to Make a Multiprocessor Computer That Correctly Executes
  Multiprocess Programs}.
\newblock {\em {IEEE} Computer}, 28:690--691, 1979.

\bibitem{therac-25}
N.~G. Leveson and C.~S. Turner.
\newblock {An Investigation of the Therac-25 Accidents}.
\newblock {\em {IEEE} Computer}, 26(7):18--41, 1993.

\bibitem{jvm-spec}
T.~Lindholm and F.~Yellin.
\newblock {\em {The Java Virtual Machine Specification}}.
\newblock Prentice Hall PTR, 2nd edition, 1999.

\bibitem{ipa}
P.~Liu, O.~Tripp, and X.~Zhang.
\newblock {IPA: Improving Predictive Analysis with Pointer Analysis}.
\newblock In {\em ISSTA}, pages 59--69, 2016.

\bibitem{conc-bug-study-2008}
S.~Lu, S.~Park, E.~Seo, and Y.~Zhou.
\newblock {Learning from Mistakes: A Comprehensive Study on Real World
  Concurrency Bug Characteristics}.
\newblock In {\em ASPLOS}, pages 329--339, 2008.

\bibitem{java-memory-model}
J.~Manson, W.~Pugh, and S.~V. Adve.
\newblock {The Java Memory Model}.
\newblock In {\em POPL}, pages 378--391, 2005.

\bibitem{literace}
D.~Marino, M.~Musuvathi, and S.~Narayanasamy.
\newblock {LiteRace: Effective Sampling for Lightweight Data-Race Detection}.
\newblock In {\em PLDI}, pages 134--143, 2009.

\bibitem{vector-clocks}
F.~Mattern.
\newblock {Virtual Time and Global States of Distributed Systems}.
\newblock In {\em Workshop on Parallel and Distributed Algorithms}, pages
  215--226, 1988.

\bibitem{chess}
M.~Musuvathi and S.~Qadeer.
\newblock {Iterative Context Bounding for Systematic Testing of Multithreaded
  Programs}.
\newblock In {\em PLDI}, pages 446--455, 2007.

\bibitem{naik-static-racedet-2007}
M.~Naik and A.~Aiken.
\newblock {Conditional Must Not Aliasing for Static Race Detection}.
\newblock In {\em POPL}, pages 327--338, 2007.

\bibitem{naik-static-racedet-2006}
M.~Naik, A.~Aiken, and J.~Whaley.
\newblock {Effective Static Race Detection for Java}.
\newblock In {\em PLDI}, pages 308--319, 2006.

\bibitem{benign-races-2007}
S.~Narayanasamy, Z.~Wang, J.~Tigani, A.~Edwards, and B.~Calder.
\newblock {Automatically Classifying Benign and Harmful Data Races Using Replay
  Analysis}.
\newblock In {\em PLDI}, pages 22--31, 2007.

\bibitem{racedet-escape-nishiyama-2004}
H.~Nishiyama.
\newblock {Detecting Data Races using Dynamic Escape Analysis based on Read
  Barrier}.
\newblock In {\em VMRT}, pages 127--138, 2004.

\bibitem{ocallahan-hybrid-racedet-2003}
R.~O'Callahan and J.-D. Choi.
\newblock {Hybrid Dynamic Data Race Detection}.
\newblock In {\em PPoPP}, pages 167--178, 2003.

\bibitem{nasdaq-facebook}
PCWorld.
\newblock {Nasdaq's Facebook Glitch Came From Race Conditions}, 2012.
\newblock
  \url{http://www.pcworld.com/article/255911/nasdaqs_facebook_glitch_came_from_race_conditions.html}.

\bibitem{multirace}
E.~Pozniansky and A.~Schuster.
\newblock {MultiRace: Efficient On-the-Fly Data Race Detection in Multithreaded
  C++ Programs}.
\newblock {\em CCPE}, 19(3):327--340, 2007.

\bibitem{locksmith}
P.~Pratikakis, J.~S. Foster, and M.~Hicks.
\newblock {LOCKSMITH: Context-Sensitive Correlation Analysis for Race
  Detection}.
\newblock In {\em PLDI}, pages 320--331, 2006.

\bibitem{raptor}
J.~Roemer and M.~D. Bond.
\newblock {An Online Dynamic Analysis for Sound Predictive Data Race
  Detection}.
\newblock Technical Report OSU-CISRC-4/16-TR01, Computer Science \&
  Engineering, Ohio State University, 2016.

\bibitem{vindicator}
J.~Roemer, K.~Gen\c{c}, and M.~D. Bond.
\newblock {High-Coverage, Unbounded Sound Predictive Race Detection}.
\newblock In {\em PLDI}, pages 374--389, 2018.

\bibitem{said-nfm-2011}
M.~Said, C.~Wang, Z.~Yang, and K.~Sakallah.
\newblock {Generating Data Race Witnesses by an SMT-based Analysis}.
\newblock In {\em NFM}, pages 313--327, 2011.

\bibitem{eraser}
S.~Savage, M.~Burrows, G.~Nelson, P.~Sobalvarro, and T.~Anderson.
\newblock {Eraser: A Dynamic Data Race Detector for Multi-Threaded Programs}.
\newblock In {\em SOSP}, pages 27--37, 1997.

\bibitem{racefuzzer}
K.~Sen.
\newblock {Race Directed Random Testing of Concurrent Programs}.
\newblock In {\em PLDI}, pages 11--21, 2008.

\bibitem{maximal-causal-models}
T.~F. {\noopsort{Serbanuta}}\c{S}erb\u{a}nu\c{t}\u{a}, F.~Chen, and
  G.~Ro{\c{s}}u.
\newblock {Maximal Causal Models for Sequentially Consistent Systems}.
\newblock In {\em RV}, pages 136--150, 2013.

\bibitem{jmm-broken}
J.~{\noopsort{Sevcik}}\v{S}ev\v{c}\'{\i}k and D.~Aspinall.
\newblock {On Validity of Program Transformations in the Java Memory Model}.
\newblock In {\em ECOOP}, pages 27--51, 2008.

\bibitem{causally-precedes}
Y.~Smaragdakis, J.~Evans, C.~Sadowski, J.~Yi, and C.~Flanagan.
\newblock {Sound Predictive Race Detection in Polynomial Time}.
\newblock In {\em POPL}, pages 387--400, 2012.

\bibitem{blackout-2003-tr}
{U.S.--Canada Power System Outage Task Force}.
\newblock {Final Report on the August 14th Blackout in the United States and
  Canada}.
\newblock Technical report, {Department of Energy}, 2004.

\bibitem{frost}
K.~Veeraraghavan, P.~M. Chen, J.~Flinn, and S.~Narayanasamy.
\newblock {Detecting and Surviving Data Races using Complementary Schedules}.
\newblock In {\em SOSP}, pages 369--384, 2011.

\bibitem{object-racedet-2001}
C.~von Praun and T.~R. Gross.
\newblock {Object Race Detection}.
\newblock In {\em OOPSLA}, pages 70--82, 2001.

\bibitem{relay-2007}
J.~W. Voung, R.~Jhala, and S.~Lerner.
\newblock {RELAY: Static Race Detection on Millions of Lines of Code}.
\newblock In {\em ESEC/FSE}, pages 205--214, 2007.

\bibitem{racetrack}
Y.~Yu, T.~Rodeheffer, and W.~Chen.
\newblock {RaceTrack: Efficient Detection of Data Race Conditions via Adaptive
  Tracking}.
\newblock In {\em SOSP}, pages 221--234, 2005.

\end{thebibliography}

\end{document}